
\documentclass[12pt]{article}


\usepackage{amsmath}
\usepackage{theorem}
\usepackage{amssymb}
\usepackage{amsfonts}
\usepackage{calc}
\usepackage{picinpar}
\usepackage{ulem}
\usepackage{float}
\usepackage[dvips]{graphicx}
\newcommand{\dlt}[1]{\ensuremath{\Delta\Delta u}}
\usepackage[colorlinks=true,linkcolor=blue,hypertex,latex2html]{hyperref}

\newcounter{hours}\newcounter{minutes}
\newcommand{\printtime}{%
    \setcounter{hours}{\time/60}
    \setcounter{minutes}{\time-\value{hours}*60}
    \thehours h:\theminutes m}

{\theoremstyle{break} \theorembodyfont{\itshape}
\newtheorem{thm}{Theorem}[section]}

{\theoremstyle{break} \theorembodyfont{\itshape}
}

{\theoremstyle{plain} \theorembodyfont{\itshape}
\newtheorem{cor}[thm]{Corollary}}

{\theoremstyle{plain} \theorembodyfont{\itshape}
}

{\theoremstyle{plain} \theorembodyfont{\itshape}
}

{\theoremstyle{break} \theorembodyfont{\rmfamily}
}

{\theoremstyle{break} \theorembodyfont{\itshape}
\newtheorem{com}{Comment}[section]}

{\theoremstyle{break} \theorembodyfont{\itshape}
\newtheorem{maxwell}{\hyperlink{maxker}{Maxwell}}[section]}

{\theoremstyle{break} \theorembodyfont{\itshape}
\newtheorem{define}{Definition}[section]}

{\theoremstyle{break} \theorembodyfont{\itshape}
\newtheorem{lem}{Lemma}[section]}

{\theoremstyle{break} \theorembodyfont{\itshape}
\newtheorem{hyp}{Hypothesis}[section]}

\newenvironment{proof}[1][Proof]{\textbf{#1.}~}{\rule{0.5em}{0.5em}}

\makeatletter
\def\@footer{\hfil{\footnotesize \today\ \ - \printtime}\hfil}
\def\ps@headings{%
    \let\@oddfoot\@footer
    \def\@oddhead{{\slshape\rightmark}\hfil\thepage}%
    \let\@mkboth\markboth
    \def\sectionmark##1{%
      \markright {\MakeUppercase{%
        \ifnum \c@secnumdepth >\m@ne
          \thesection\quad
        \fi
        ##1}}}}
\makeatother


\newcommand{\bs}[1]{\ensuremath{\boldsymbol{#1}}}

\newcommand{\defn}{{\ensuremath{:=}}}

\pagestyle{myheadings} \markboth{Depletion Forces}{Depletion Forces}



\sloppy
\begin{document}
\pagestyle{headings} \thispagestyle{empty} \markboth{\protect\thepart}{\protect\thepart}
\renewcommand{\thepart}{\Roman{part}}
\numberwithin{equation}{section}
\normalem

\title{On the Depletion Effect in Colloids}
\author{
P. Kotelenez\footnote{Mathematics, Case Western Reserve University,
Cleveland, OH~44106, USA, \texttt{peter.kotelenez@case.edu}, 216 368 4838 (analog), 216 368 5163 (fax).}\\
M. J. Leitman\footnote{Mathematics, Case Western Reserve University, Cleveland, OH~44106, USA,
\texttt{marshall.leitman@case.edu}, 216 368 2890 (analog), 216 368 5163 (fax).}\\and\\
J. A. Mann\footnote{Chemical Engineeing, Case Western Reserve University, Cleveland, OH~44106, USA, \texttt{j.mann@case.edu}, 216 368 4122 (analog), 216 368 3016 (fax).} }
\date{}
\maketitle
\begin{center}\fbox{\date{VERSION: \today,\printtime}}\end{center}

\begin{abstract}Our object is to formulate and analyze a physically plausible and mathematically sound
model to better understand the phenomenon of clustering in colloids. The term \emph{depletion force} refers to the force (in the Newtonian sense) which is associated with the clustering. Our model
is stochastic but derived from a deterministic setup in a Newtonian setting. A mathematical transition from the deterministic dynamics of several large particles and infinitely many small particles
to a kinetic description of the stochastic motion of the large particles is available. Assuming that the empirical velocity distribution of the small particles is governed by a probability density,
the mean-field force on the large particles can be represented as the negative gradient of a scaled version of that density. The stochastic motion of the large particles can then be described by a
system of correlated Brownian motions. The scaling in the transition preserves a small parameter, the correlation length. From the limiting kinetic stochastic equations we compute the probability
flux rates for the difference in position between two large particles. We show that, for short times, two particles sufficiently close together tend to be attracted to each other. This agrees with
the \emph{depletion} phenomena observed in colloids. To quantify this effect, we extend the notion of van~Kampen's one-dimensional probability flux rate in an appropriate way to account for higher
dimensional effects.
\end{abstract}
\newpage
\tableofcontents \listoffigures
\newpage

\section{Introduction}\label{sec-intro}A kinetic model for Brownian motion was introduced by A. Einstein~\cite{EI} and by
M. von Smoluchowski~\cite{SM};  a corresponding dynamic model was analyzed by Uhlenbeck and Ornstein~\cite{UH}. The mathematical model, first formulated for a single large particle suspended in a
stationary liquid, can be generalized easily to models of several large particles under the assumption that their positions, as well as their velocities, are spatially independent. The requirement
of independence is a central assumption in the theory of stochastic differential equations, mathematical particle systems and their macroscopic limits. However, this assumption appears to be in
conflict with the fact that the ``suspended particles all float in the \emph{same} fluid."\footnote{Spohn~\cite{SP}, Part II, Section 7.2. Emphasis added.} This begs the question of how to
incorporate correlations for physically relevant models. We address this question by considering systems of two types of interacting particles --- large particles and small particles;  the terms
\emph{large} and \emph{small} here refer to their different masses. After a suitable transition from micro-scales to meso-scales, the positions, as well as the velocities, of the large particles
will perform Brownian motions.

If there were only a single large particle, the fluid around it would appear homogeneous and isotropic, which leads to a relatively simple statistical description of the displacement of that
particle as a result of the collisions with the small particles.  Under the assumption that the collisions of the large particle with the small particles are elastic, a Brownian motion as an
approximation to the position and velocity of the particle has been obtained by several authors (See the references in Kotelenez~\cite{KO3}). Under the same assumptions, if there are two large
particles, sufficiently far apart that the fluid around each may be considered homogeneous and isotropic, each would be expected to experience a Brownian motion as an approximation to its position
and velocity; moreover, these Brownian motions would be independent. (See Figure~(\ref{FigPartAB}, A) below.)

On the other hand, if the two large particles are very close together, the fluid around each will no longer appear homogeneous and isotropic. (See Figure~(\ref{FigPartAB}, B) below.) In fact, the
fluid between the two large particles will get \emph{depleted} in the sense that fewer small particles per unit volume will be found between them (See Asakura, S and Oosawa, F.~\cite{AS},
G\"{o}etzelmann, B., Evans, R. and Dietrich, S.~\cite{GO} and the references therein.)  Thus, if the two large particles are sufficiently close together, the collisions of the many small particles
with them cause their motions to become statistically correlated. The length scale at which this occurs is measured by the \textit{correlation length}, which we denote by~\(\sqrt{\varepsilon}\).

To capture this effect faithfully,  we analyze a model in which the elastic collisions between large and small particles are replaced by a mean-field interaction. The mean-field force is derived
from the probability density of the velocity field of the small particles. The scaling preserves the correlation length. The positions and velocities of \(N,\,N\geq 1,\) large particles and
infinitely many small particles are given by a deterministic system of coupled nonlinear equations with independent random initial conditions. The equations are coupled through the rescaled
mean-field force. Using coarse graining in space and time, Kotelenez~\cite{KO3} obtained \(N\) correlated Brownian motions in a scaling limit as an approximation to the positions of the \(N\) large
particles, where the spatial correlations are computed from the probability density of the velocity field of the small particles.
We consider here the correlated limiting diffusions, outlining the main features of our analysis but focusing mainly on the interaction between just two large particles~(\(N=2\)).

The rest of this paper is organized as follows.  In Section~\ref{sec-jam} we first describe the excluded volume model of Vrij and de~Hek~\cite{VR1}, which is equivalent to the model of Asakura and
Oosawa~\cite{AS}. We continue with a brief discussion of three experimental studies of the \textit{depletion force,} each more stochastically oriented than its predecessor. The final, and most
stochastically oriented, gives results that agree nicely with the hard sphere model of Vrij and de~Hek~\cite{VR2,VR1}. In Section~\ref{sec-pk} we outline the underlying interacting particle model
and the stochastic limit referred to above. The notation in Section~\ref{sec-jam} is consistent with the literature referred to in that section, but generally differs from that which we use in the
rest of this paper. In Section~\ref{sec-prelim} we set out the notational conventions used in the other sections. In this section we also provide a statement of the fundamental \textit{Principle of
Material Frame Indifference} together with a useful lemma. In Section~\ref{sec-partsys} our stochastic model is formulated in detail for the case of two large particles. This section culminates with
a derivation and description of the diffusion matrix that governs the stochastic process for the separation between the two large particles. In Section~\ref{sec-sep}, we continue with an extensive
discussion of this stochastic process. The depletion effect is revealed and quantified in Section~\ref{sec-clump} as a consequence of our model through our \(d-\)dimensional version of van~Kampen's
\textit{probability flux rate.}
Finally, in Section~\ref{sec-sum}, we summarize our results and lay out avenues of further research.

\section{Interacting Particles --- Depletion
Effect}\label{sec-jam} Consider a mixture of small, spherical particles of diameter $a_{s}$ and big, spherical particles of diameter $a_{b}$ so that $a_{s}\ll a_{b}.$ Assume that the number
fractions obey, $x_{b}\ll x_{a}$; the reference state is the very dilute solution with respect to the big particles. In this state, the probable separation between any two big particles is at least
$10a_{s}.$This separation is sufficient to ensure that the distribution of small particles around the each big particle is uniform with spherical symmetry. The configuration is
shown in Figure (\ref{FigPartAB}, A).%
\begin{figure}
[h]
\begin{center}
\includegraphics[
natheight=2.253200in, natwidth=6.306300in, height=1.3383in, width=3.7169in
]%
{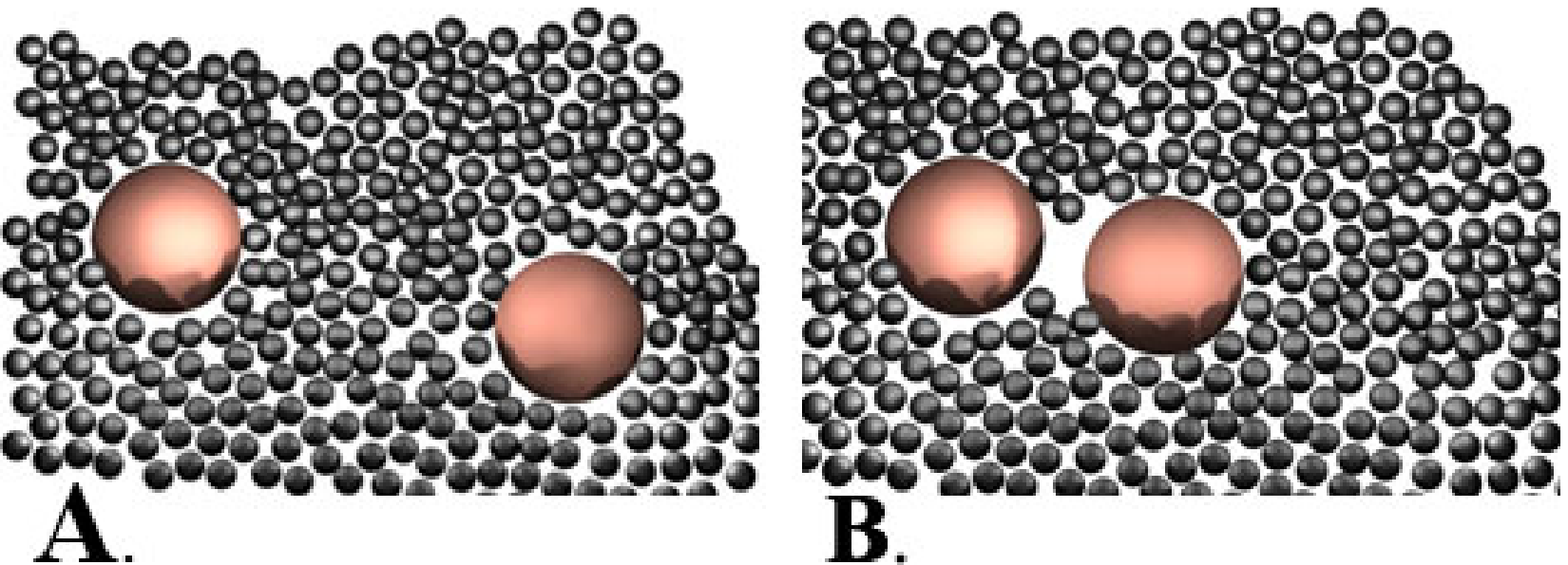}%
\caption[Two large and many small particles]{\newline\textbf{A. }Two large particles are separated sufficiently so that the packing around each particle is isotropic on average. \newline\textbf{B.
}If the two large particles are closer than about $2a_{s}$ the distribution of small
particles is depleted in the region between them.}%
\label{FigPartAB}%
\end{center}
\end{figure}

When two big particles move together the configuration space of the small particles changes when the separation is less than about $2a_{s}$ as represented in Figure (\ref{FigPartAB}, B). The
reduction of small particle density in the space between large particles that are sufficiently close is referred to as \textit{depletion.}

Experiments early in the last century showed that colloidal systems composed of dispersions of particles and macromolecules show interesting flocculation or agglomeration phenomena, which was not
understood until the paper of Asakura and Oosawa \cite{AS}. They describe the case of parallel plates immersed in a solution of hard, macromolecular particles. Then they assert that as the two
plates are brought close together an attractive osmotic force develops as a result of depletion of the macromolecules from the volume between the two plates.

Vrij and de Hek~\cite{VR2,VR1} independently rediscovered the depletion effect and carried out experimental and theoretical studies that involved coated silica spheres, the large particles, of about
50 nm radius in dispersions containing polystyrene, the small particles, mixed in cyclohexane. The marker for the depletion effect is the phase separation, which was observed with the variation of
polystyrene concentration. 

\subsection{The excluded volume model of Vrij}

The hard sphere model of Vrij~\cite{VR2} and Vrij and de~Hek~\cite{VR1} is equivalent to
that of Asakura and Oosawa \cite{AS} and is shown in Figure (\ref{VrijModel}%
). In many experiments the small sphere is a coiled polymer, which is approximated by assuming the polymer component behaves as a small hard-sphere when interacting with a large hard-sphere but
penetrable when interacting with another small ``hard-sphere." This is the \textit{penetrable hard sphere~(PHS)} model.
shown in Figure (\ref{VrijModel}).%

\begin{figure}
[h]
\begin{center}
\includegraphics[
natheight=7.106600in, natwidth=10.666600in, height=2.0506in, width=3.0751in
]%
{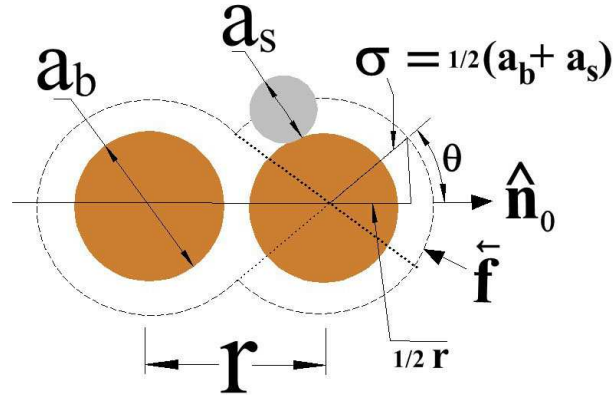}%
\caption[Vrij model]{The Vrij model \cite{VR1}, \cite{VR2}, while crude, provides
physical insight and fits dilute systems, see Figure (\ref{LaserTweezerLayout}%
). }%
\label{VrijModel}%
\end{center}
\end{figure}

Briefly, two large particles of diameter $a_{b}$ are moving toward collision and assume that the large particles are hard so that the force is unbounded when $r<a_{b}.$ The number density of the
small particles, of diameter $a_{s},$ is assumed sufficiently dilute that the osmotic pressure, $P_{os},$ is ideal, $P_{os}=c_{s}k_{B}T.$ The excluded volume produced by the hard sphere interaction
between the small and large particles is traced by the dashed curve of
radius $\sigma=%
\frac12
(a_{b}+a_{s})$ in Figure~\ref{VrijModel}. Notice that the normal, $\hat{n},$ points into the surrounding phase whereas the osmotic force, $\vec{f}=-P_{os}\hat{n}$, points inward. At a particular
separation, $r,$ the two excluded volumes overlap and $c_{s}$ is diminished there. The osmotic force, projected along the center to center axis, $\hat{n}_{0},$ is attractive, formula (\ref{Intg1},
\ref{K}).
Following Vrij, the depletion force, $K$ is estimated as%
\begin{equation}
K=\int_{A}\vec{f}\cdot\hat{n}_{0}dA.\label{Intg1}%
\end{equation}
Here, $\vec{f}\cdot\hat{n}_{0}=-P_{os}\cos\theta$ and from Figure
(\ref{VrijModel}) $\cos\theta=%
\frac12
r/\sigma$ so that following Vrij%
\begin{equation}
K(T,\sigma,c_{s},r)=\left\{
\begin{array}
[c]{c}%
\infty\quad\text{for }r<a_{b}\mathstrut\smallskip\\
-\pi\sigma^{2}(1-%
\frac14
\left(  \frac{r}{\sigma}\right)  ^{2})c_{s}k_{B}T\quad\text{for }a_{b}\leq
r\leq2\sigma\smallskip\\
0\quad\text{for }r>2\sigma
\end{array}
\right.  .\label{K}%
\end{equation}
The potential is
\begin{equation}
\begin{split}
V(T,\sigma,c_{s},r)&=\int_{r}^{2\sigma}K(T,\sigma,c_{s},\bar{r})\,d\bar{r}\\&=\left\{
\begin{array}
[c]{c}%
\infty\quad\text{for }r<a_{b}\smallskip\\
-\frac{4}{3}\pi\sigma^{3}(1-%
\frac34
(\frac{r}{\sigma})+\frac{1}{16}(\frac{r}{\sigma})^{3})c_{s}k_{B}%
T\quad\text{for }a_{b}\leq r\leq2\sigma\smallskip\\
0\quad\text{for }r>2\sigma
\end{array}
\right.  .\label{V}%
\end{split}
\end{equation}
Vrij and his collaborators certainly recognized that this model is crude, however, it suggested experiments that sharpen the physical chemistry of the depletion effect. The hard sphere physics can
be approximated by dressing small particles in layers that reduce the Van der Waals attractive force and the coulombic interactions that obtain in many colloid systems. See, for example, de Hek and
Vrij \cite{VR1}.

The papers by Henderson \textit{et al.}~\cite{HE}, G\"{o}tzelmann, \textit{et al.}~\cite{GO}, Zhou~\cite{ZH}, Biben, \textit{et al.}~\cite{BI2}, and Kinoshita~\cite{KI} represent the literature
wherein a fundamental approach is developed to investigated a multicomponent hard-sphere dispersion. Also, see the review by Tuinier, \textit{et al. }\cite{TU1}.

We did not find any articles that addressed the effect of depletion on the correlated, stochastic motion of hard-sphere, multicomponent dispersions. However, there are experimental techniques and
results that are pertinent to the theory developed herein.

\subsection{Some Experimental Studies of Depletion Forces}

As developed in the early work of Vrij and de~Hek~ \cite{VR2,VR1} and reviewed by Tuinier, \textit{et al.}~\cite{TU1}, the details of the phase behavior of polymer, colloid mixtures yield
information about depletion effects. For example, silica particles coated with stearyl alcohol dispersed in cyclohexane will provide close to index matching conditions so that the Van der Waals
force between the coated particles is ignorably small. Moreover, the effect of any electrical double layer is negligible. Such a system is very close to a dispersion of hard spheres. The interaction
of two large particles allows the determination of the depletion effect induced by the number density of the polymer component.

\subsubsection{Scattering techniques and colloid properties that show
depletion effects}

Static and dynamic light scattering measurements are capable of measuring depletion effects as outlined by Tuinier \textit{et al.}~\cite{TU2} who describe a small angle neutron scattering (SANS)
study of casein micelles and an exocellular polysaccharide system that behaves as a PHS system. They also report dynamic light scattering and turbidity results. Briefly, in the case of SANS, the
Rayleigh ratio $R(q)$ is determined experimentally as a function of the wave number,
\begin{equation}
q=2\left(  \frac{2\pi}{\lambda_{0}}\right)  \sin(\theta/2),\label{qDef}%
\end{equation}
where $\lambda_{0}$ is the vacuum wavelength of either photons or neutrons and $\theta$ is the scattering angle. See Berne~\cite{BE4}, for example. The Rayleigh ratio is related to the structure
factor, $S(q)$ and the particle form factor, $f(q)$ through the following formula
\begin{equation}
R(q)=Kmcf(q)S(q)\label{R(q)def}%
\end{equation}
where $K$ is a material coefficient, $m$ the mass of the particle, $c$ the concentration. The form factor, $f,$ depends on the particle radius as well as $q$ and can be computed theoretically (See
Berne~\cite{BE4}). The form factor may also depend on the energy of the incident photons (or neutrons). The structure factor is is the Fourier transform of the radial distribution function of the
particles, $g(r)$, that are scattering:%
\begin{equation}
S(q)=1+4\pi\rho\int_{0}^{\infty}(g(r)-1)r^{2}\frac{\sin(qr)}{qr}%
dr.\label{S(q)}%
\end{equation}
The function $g(r)$ is the probability density of finding a pair of particles separated by a distance $r.$ Therefore expect that $g(r)$ will depend on the number density of large particles and on
their interactions, including the depletion effect. See Tuinier \textit{et al.} \cite{TU1}, \cite{TU2}.

\begin{com}
Scattering functions such as formula (\ref{S(q)}) are functions of "reciprocal" space, which is defined through the Fourier transform, for
example%
\begin{equation}
\hat{f}(\vec{q})=\int f(\vec{x})e^{-i\vec{q}\cdot\vec{x}}d\vec{x}%
\label{RCPspace}%
\end{equation}
defines the "reciprocal" space representation of the function $f$ with domain
of definition in "direct" space, $%
\mathbb{R}
^{d}.$ The reciprocal space representation expresses the physics of the scattering phenomenon as a result, for example, of solving Maxwell's equations with appropriate boundary conditions. However,
it is often useful to Fourier transform the data so as to search for symmetry elements. In scattering experiments, formula (\ref{qDef}) is an expression of the vector momentum,
$\vec{p}=\hbar\vec{k},$ balance, for example, of a photon scattered by
fluctuations is%
\begin{align}
\hbar\vec{k}_{inc}  & =\hbar\vec{k}_{sct}\pm\hbar\vec{q}\label{MomentumBal}\\
q  & =k_{0}\left\vert \vec{e}_{inc}-\vec{e}_{sct}\right\vert \label{qMomBal}%
\end{align}
where $\vec{e}_{inc}$ is the direction of the incident beam and $\vec{e}%
_{sct}$ the direction of the scattered beam. The wavenumber $k_{0}$ is defined as $k_{0}=2\pi/\lambda_{0},$ where $\lambda_{0}$ is the wavelength of the incident beam.
\end{com}

When the scattering is kinematic, which obtains for single scattering events, Maxwell's equations teach that the field at a detector distant from the illuminated volume, $V_{I}$, which includes the
particles $(1,\cdots N),$ is
\begin{equation}
\varphi(q)=\sum_{n=1}^{(1,\cdots N)\subset V_{I}}f_{n}(q,E)e^{-i\vec{q}%
\cdot\vec{x}_{n}(t)}.\label{PhiDefn}%
\end{equation}
Here, $\vec{x}_{n}(t)$ is the trajectory of the $n$th particle in the volume $V_{I}$. If the system is monodispersed, all large spheres are of the same radius, $f_{n}(q,E)=f(q,E)$ independent of
$n.$ In that case
\begin{equation}
\varphi(q)=f(q,E)\sum_{n=1}^{(1,\cdots N)\subset V_{I}}e^{-i\vec{q}\cdot
\vec{x}_{n}(t)}\label{PhiMonodispersed}%
\end{equation}
and the sum, the intermediate scattering function, is the Fourier transform of
\begin{equation}
\chi(\vec{x})=\sum_{n=1}^{(1,\cdots N)\subset V_{I}}\delta(\vec{x}-\vec{x}%
_{n}(t)).\label{Chi(x)}%
\end{equation}
Note that $\{\vec{x}_{n}(t)\}$ is available from molecular dynamics simulation.

Dynamic light scattering is observable as a correlation function of the photocurrent, which amounts to an average of the square of the field at the
detector:%
\begin{align}
& <\varphi(q,t)\varphi^{\ast}(q^{\prime},t^{\prime})>=N<f(q,E)f^{\ast
}(q,E)>\nonumber\\
+  & <f(q,E)f^{\ast}(q,E)>\sum_{\substack{n,\acute{n}=1 \\n\neq\acute{n} }}^{(1,\cdots N)\subset V_{I}}<e^{-i\vec{q}\cdot(\vec{x}_{n}(t)-\vec
{x}_{\acute{n}}(t^{\prime}))}>.\label{xtcorrelation1}%
\end{align}
Experimentally, the self-beat method of dynamic light scattering provides a correlation function in time and wavenumber of the particles and is accurately
portrayed in very dilute solutions by $<\left\vert \vec{x}_{n}(t)-\vec{x}%
_{n}(0)\right\vert ^{2}>=6D_{0}t$ where the Stokes-Einstein diffusion
coefficient is $D_{0}=(k_{B}T/6\pi\eta)\times(%
\frac12
a).$ Here, $\eta$ is the viscosity of the medium, $a$ is the hydrodynamic diameter of the particle. In addition, the system is so dilute that each particle moves as an independent Brownian particle.
Indeed, experimental evidence supports the assumption that the dynamics are represented by the set
of independent Langevin equations:%
\begin{align}
d\vec{v}_{n}  & =-\beta\vec{v}_{n}dt+\vec{a}_{n}(dt)\label{Langevin1}\\
d\vec{x}_{n}  & =\vec{v}_{n}dt\nonumber
\end{align}
here $\beta=6\pi\eta\times(%
\frac12
a)/m$ \ and $m$ is the mass of the large particle. The random accelerations are connected to $\beta$ through a correlation function and a fluctuation dissipation theorem.\footnote{See van
Kampen~\cite{KA}} The correlation function is
\begin{equation}
<\vec{a}_{n}(t)\cdot\vec{a}_{\acute{n}}(t^{\prime})>=b\delta_{n\acute{n}%
}\delta(t-t^{\prime})\label{accelCorr}%
\end{equation}
and a fluctuation-dissipation theorem for this case gives $b=2\beta\frac {k_{B}T}{m}=2\beta^{2}D$.

Interparticle forces change the picture. Now the Langevin equation reads
\begin{align}
d\vec{v}_{n}  & =\vec{F}_{n}dt-\beta\vec{v}_{n}dt+\vec{a}_{n}%
(dt)\label{Langevin2}\\
d\vec{x}_{n}  & =\vec{v}_{n}dt\nonumber
\end{align}
where
\begin{equation}
\vec{F}_{n}(x_{n}(t))=\sum_{\substack{n^{\prime}=1 \\n\neq n^{\prime}}%
}^{All}\vec{F}_{n^{\prime}}(\vec{x}_{n}(t),\vec{x}_{n^{\prime}}(t))+\cdots
.\label{Forces}%
\end{equation}
It is often assumed that the pair forces dominate to the exclusion of triplet and higher contributions considered ignorable. A further simplification is to assume that $\vec{F}_{n^{\prime}}$ depends
only on the separation $\left\vert \vec{x}_{n}(t)-\vec{x}_{n^{\prime}}(t)\right\vert .$ When the particles satisfy the PHS model, it is clear that $\vec{F}_{n}$ will affect the correlation function
of dynamic light scattering as will also the necessarily different relationship between $\beta$ and $b.$

The focus of this paper is the form of a stochastic term analogous to that in formula (\ref{Langevin1}), which in the hard sphere case is shown to involve a depletion effect. The more general case
wherein intermolecular forces are other than hard sphere must still reflect the depletion effect derived herein.

\newpage

\subsubsection{Measurement of forces between two large particles and between a
large particle and a plate.}

It is now possible to measure the forces between a micron to submicron size particle directly against a flat plate or a second particle. The basic idea of
the measurement is shown in Figure (\ref{ForceExpModel}).%

\begin{figure}
[h]
\begin{center}
\includegraphics[
natheight=8.000000in, natwidth=10.666600in, height=2.088in, width=2.7812in
]%
{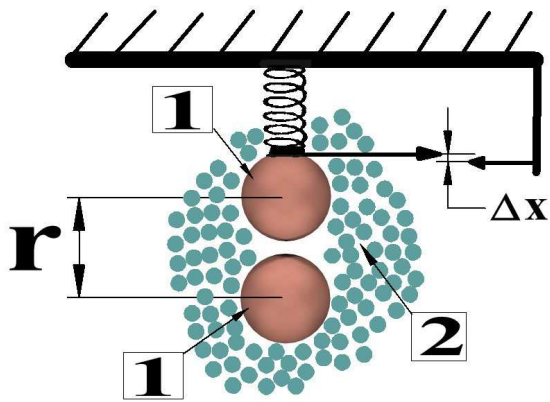}%
\caption[Conceptualization of the force measurement]{Conceptualization of the force measurement as the two particles are
forced together. }%
\label{ForceExpModel}%
\end{center}
\end{figure}
The force measurement as the two particles come together is conceptualized by an atomic force microscopy (AFM) measurement where the cantilever is replaced by a spring attached to the upper
particle. As the particles are brought to small separation, $h=r-a_{b}$, a change in $\Delta x$ is observed and transformed into a force since the spring constant is known. Here \(a_b\) is the
diameter of the large particles. The force experienced by the large particle is proportional to \(\Delta x\) scaled by the spring constant in the figure. The AFM experiment is usually done with a
flat plate instead of the lower particle shown here. The function $\Delta x$ will fluctuate in time as a result of impacts of the small particles on the large particles.

\paragraph{Atomic force microscopy (AFM)}

AFM experiments have produced interesting depletion data, see for example Knoben \textit{et al. }\cite{KN}, Clark, \textit{et al.}~\cite{CL}, and Tulpar, \textit{et al. }\cite{TU3}. However, AFM
experiments suffer from two related problems. When the attractive force is too large, the upper particle will snap to the lower particle or plate, therefore, the full force curve is not generally
available. Providing the optimum spring constant is difficult. It is also difficult to define accurately the point of contact between the sphere and substrate and therefore the actual separation
between the particle mounted on the cantilever and the substrate is poorly defined. Progress to resolve this difficulty is discussed in references \cite{CL} and \cite{MC}. Biggs, \textit{et
al.}~\cite{BI} point out that while TIRM (total internal reflection microscopy) is more sensitive, the technique cannot measure strong forces; the two techniques AFM and TIRM are complementary.

\paragraph{Total internal reflection microscopy}

(TIRM) has the advantage of using an evanescent wave to interrogate the position of a large particle with respect to an optical surface, see Tuinier \textit{et al.} \cite{TU1} for a brief
description of the technique and see references \cite{BE2}, \cite{RU}, \cite{OC} and \cite{BI} for reports of depletion effects observed. Also, see Figure
(\ref{TIRM}).%

\begin{figure}
[th]
\begin{center}
\includegraphics[
natheight=8.000000in, natwidth=8.173500in, height=1.7659in, width=1.8032in
]%
{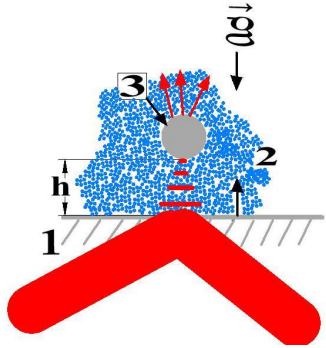}%
\caption[Laser beam]{The laser beam (in red) is incident on the flat interface at an angle consistent with total internal reflection. The evanescent wave, heavy (red) line, attenuates exponentially
into the solution phase. The intensity, I(h), of the
light scattered by the particle, (red) arrows is detected.}%
\label{TIRM}%
\end{center}
\end{figure}
In Figure (\ref{TIRM}), the intensity, I(h), of the light scattered by the particle (red arrows) is detected and follows the rule $I(h)=I(0)\exp(-\Lambda h)$ where $1/\Lambda$ is the $1/e$
penetration depth of the electric field. The coefficient $\Lambda$ depends on the refractive index numbers $n_{1}$ and $n_{2}$. The scattering intensity depends on $n_{2}$ and $n_{3}.$ In addition
to the depletion forces (and colloid forces in general), the particles sediment, $\vec{g}$ is the acceleration of gravity.

Consider Figure (\ref{TIRMfluct1}), taken from Bechinger \textit{et al.}~\cite{RU}, which shows the variation of fluctuation amplitudes with
concentration of the small particles.%

\begin{figure}
[h]
\begin{center}
\includegraphics[
natheight=3.808200in, natwidth=5.388900in, height=2.2225in, width=3.1382in,keepaspectratio=true
]%
{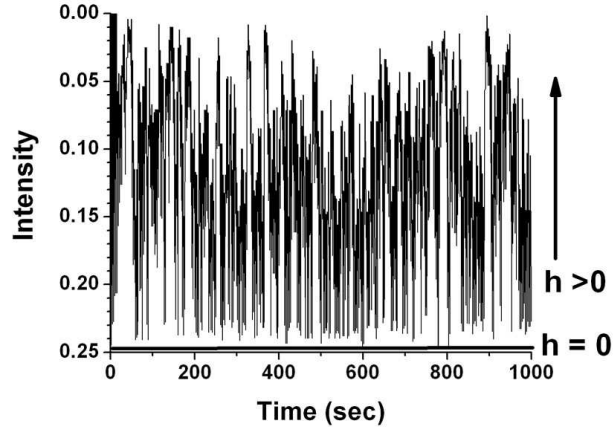}%
\caption[TIRM fuluctuations]{TIRM fuluctuations in intensity observed when small spheres are
absent.}%
\label{TIRMfluct1}%
\end{center}
\end{figure}

The authors used a dilute solution of polystyrene spheres of 1.5$\mu m$ radius dispersed in water. Polyethylene oxide with molecular weight of $2\times 10^{-6}$ Dalton was added, which had a radius
of gyration of $0.01\mu m$ and behaved as a small hard spheres. No polymer was added for the time-series shown in Figure (\ref{TIRMfluct1}), so that the potential constructed from $p(h)$, the
probability density that $h$ will occur, generated from this time series, gave the function shown in Figure (\ref{TIRM2}) for $\varphi_{S}=0.0$. The raw data for $\varphi_{S}=0.032$ were probably
similar to that of Figure (\ref{TIRMfluct1}) but was not reported in Bechinger \textit{et al.} \cite{RU}. Note that at $h=0,$ the intensity of the scattered light will reach a maximum. The bar just
above the $0.25$ tick indicates the intensity for $h=0.$ Figure (\ref{TIRMfluct1}) shows the change in the amplitude of particle fluctuations along the normal to the glass substrate. They convert
the data of Figure (\ref{TIRMfluct1}) to potentials through the determination of $p(h),$ then, the potential, $V(h),$essentially a free energy, is computed through the canonical ensemble (Helmholtz)
distribution density taken as
\begin{equation}
p(h)=ce^{-V(h)/k_{B}T}.\label{Helmholtz}%
\end{equation}
Unfortunately, these and other authors did not determine the space-time correlation functions of these systems.

\begin{com}
Since the Hamiltonian of the B-particle observed in the evanescent optical field is of the form $H=\frac{1}{2m}\vec{p}\cdot\vec{p}+V(h),$ the joint probability density factors so that formula
(\ref{Helmholtz}) obtains where $c$ is the inverse partition function, $\frac{1}{c}=\int_{0}^{\infty }e^{-V(h)/k_{B}T}dh.$ Note that the particle is constrained to $h\geq0,$see Figure (\ref{TIRM}).
Formula (\ref{Helmholtz}) is considered exact, see Henderson et al. \cite{HE}.
\end{com}

A second paper, by Bechinger \textit{et al.} \cite{RU} showed an interesting pattern; the AO - Vrij model works reasonably at low concentrations of polymer but a maximum is found at higher
concentrations. See
Figure(\ref{TIRM2}).%

\begin{figure}
[h]
\begin{center}
\includegraphics[
natheight=3.043600in, natwidth=7.577400in, height=2.0208in, width=5.0062in
]%
{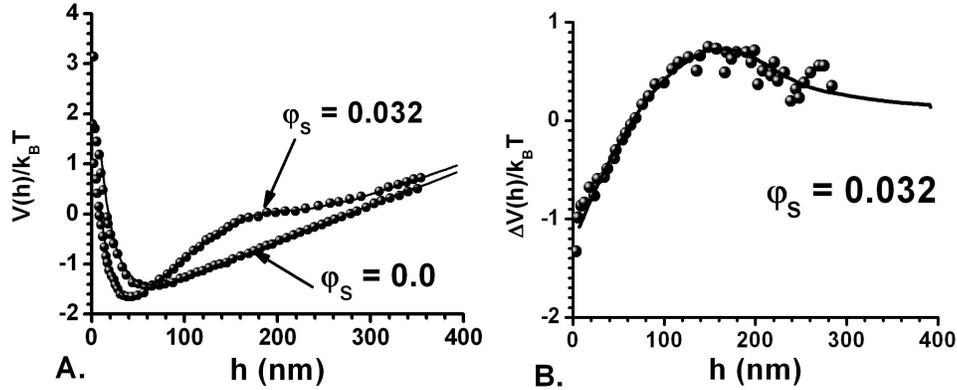}%
\caption[Potential measurements]{\textbf{A.} The total potential is shown and is considered to be a sum of double layer, sedimentation, Van der Waals and depletion effects. The variation with
separation between the large particle and the plate is $h$. The polymer concentrations (number density) are $n=0,$ or a volume fraction of $\varphi_{S}=0.0$ and $n=6.3\mu m^{-3}$ or
$\varphi_{S}=0.032.$ \textbf{B.} Shows the difference between the data of the curves $\varphi_{S}=0.032$ and $0.0$, which represents the depletion component of the potential. (Redrawn
from Figure 1 of Bechinger \cite{RU}.)}%
\label{TIRM2}%
\end{center}
\end{figure}

That a potential barrier develops at higher concentrations of polymer is further supported by experiments using laser tweezer technology. Also see Zhou \cite{ZH}.

Oetama and Walz \cite{OC} report the result of a study of short-time dynamics with a focus on determining the diffusion coefficient of the large particles in the presence of the small particles.

\paragraph{Laser tweezer results}

Particles can be trapped in the focal volume of a laser if the refractive index difference between the particle and liquid is of sufficient magnitude. Once trapped, the force that the particle
experiences with respect to external sources can be measured down to nanometer length scales and sub-picoNewton forces, for example, between two$\ B$-particles. See Figure
(\ref{LaserTweezer}).%

\begin{figure}
[h]
\begin{center}
\includegraphics[
natheight=2.840200in, natwidth=5.333300in, height=1.6073in, width=3.0062in
]%
{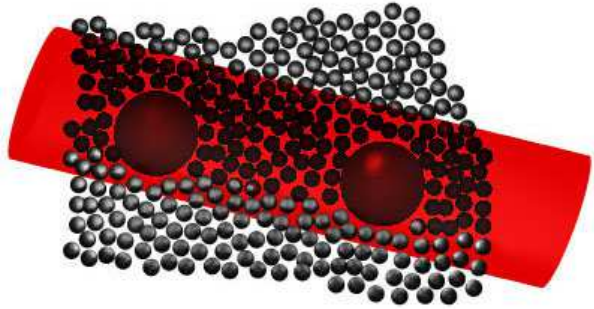}%
\caption[Laser tweezer concept]{The focus of a laser beam is rastered at a sufficient rate that the particle is trapped to move in a tube of length $L$. $L$ is sufficiently large that the depletion
force is ignorably small for that maximum separation. The Brownian motion of the particles is tracked in time. Reference \cite{FA} and papers cited therein provide details of the measurement
technique.}%
\label{LaserTweezer}%
\end{center}
\end{figure}

Crocker, \textit{et al.}~\cite{CR} used a tweezer system that scans a roughly $10\mu m$ trapping line in a thin sample cell. In a typical run, a pair of large spheres $(a_{b}=1.100$ $\pm0.015\mu m$
diameter$)$ of PMMA (polymethylmethacrylate) was trapped in the tweezer and one-dimensional Brownian motion was monitored. Various volume fractions of $a_{s}=0.083\mu m$ diameter polystyrene spheres
(ps) provided the background. Here too, the pair potential is computed from the observed distribution of separations of a pair of particles through the formula
\begin{equation}
p(|\vec{x}_{1}-\vec{x}_{2}|)=ce^{-V(|\vec{x}_{1}-\vec{x}_{2}|)/k_{B}%
T}.\label{Pdensity(r)}%
\end{equation}

\begin{com}
In this case, the Hamiltonian is $H=\frac{1}{2m}\vec{p}_{1}\cdot\vec{p}%
_{1}+\frac{1}{2m}\vec{p}_{2}\cdot\vec{p}_{2}+V(\vec{x}_{1},\vec{x}_{2}).$ Assume that $V(\vec{x}_{1},\vec{x}_{2})=V(|\vec{x}_{1}-\vec{x}_{2}|),$ then the maximum entropy principal provides formula
(\ref{Pdensity(r)}). This formula is considered exact given the form of $V,$ \cite{HE}. Triplet depletion forces may be important enough to be considered, see Melchionna and Hansen \cite{ME}.
\end{com}

At low volume fractions of ps, they confirmed the model of AO-Vrij as shown in
Figure (\ref{LaserTweezerLayout} A.).%

\begin{figure}
[h]
\begin{center}
\includegraphics[
 natheight=3.3217in, natwidth=7.5392in,height=1.8431in, width=4.166in,keepaspectratio=true
]%
{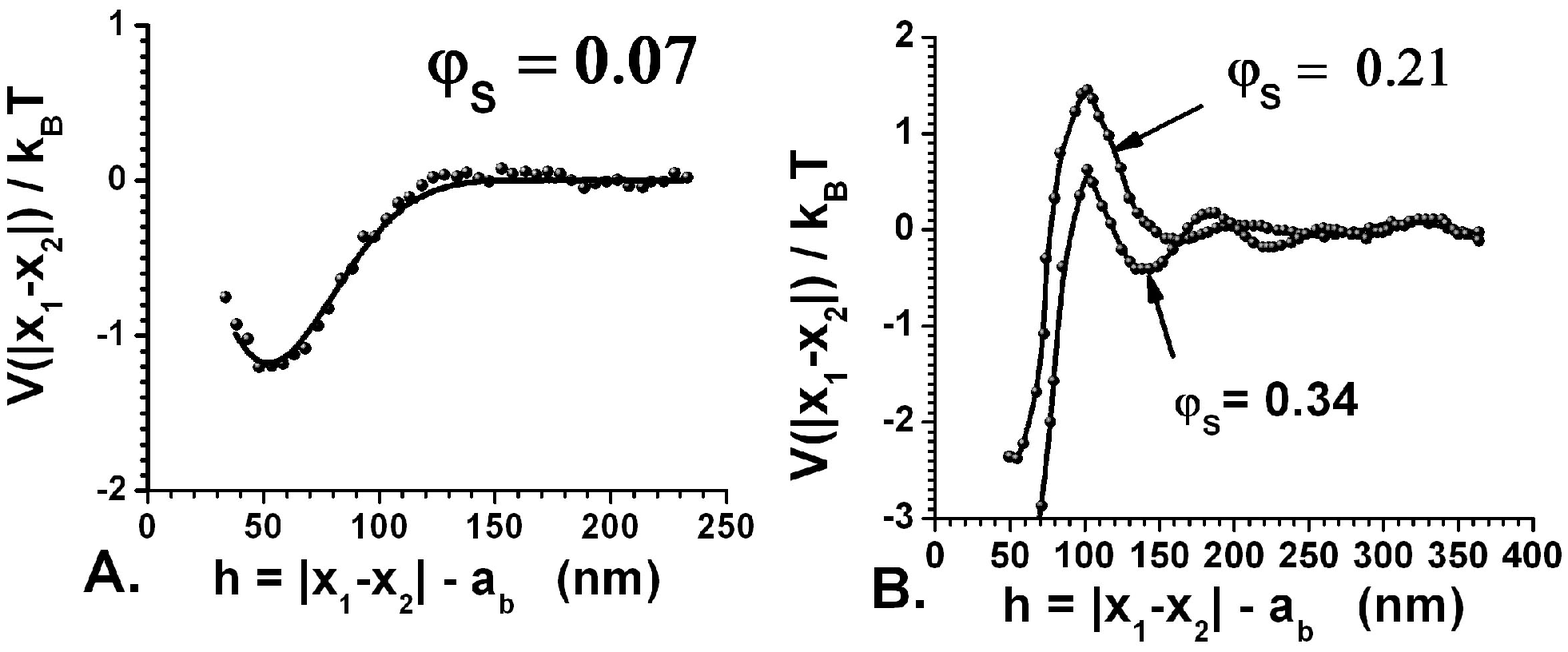}%
\caption[Laser tweezer measurements]{\newline\textbf{A.} The laser tweezer method was applied to very dilute solutions of $\symbol{126}1\mu m$ diameter large hard-spheres. Small volume fractions,
$\varphi_{s},$ of $\symbol{126}0.08\mu m$ diameter small hard-spheres were added. The pair interaction energy was fit well by the AO-Vrij theory. \newline\textbf{B. }The particle sizes are the same
as noted for Figure \textbf{A}. A range of volume fractions was reported by the authors as shown. Note the structure that is evident at $\varphi_{s}>0.07.$ Note that $r=|\vec{x}_{1}-\vec{x}_{2}|$
and that $h=r-a_{b}.$ These Figures were redrawn
from Figures 1 and 3 of reference~\cite{CR}.}%
\label{LaserTweezerLayout}%
\end{center}
\end{figure}

The fit to the AO-Vrij theory is convincing that the physics was captured quite well by a very simple PHS theory. However, Crocker \textit{et al.}~\cite{CR} report a significant deviation from the
AO-Vrij theory at larger volume fractions, $\varphi_{s}>0.07,$ as shown in Figure (\ref{LaserTweezerLayout} B.). Also see Zhou \cite{ZH} for a theoretical viewpoint. The authors show that there is a
substantial depletion repulsion at separations that correspond to approximately one small sphere diameter when $\varphi_{s}>1$.

The time-space correlation functions for the laser tweezer experiment as well as the TIRM experiment are in principle observable.

\paragraph{The molecular dynamic equivalent of the laser tweezer; steered MD.}

For example, see Park and Schulten~\cite{PAS} on calculating potentials of mean force from steered molecular dynamics simulation. The molecular dynamic equivalent of the laser tweezer experiment is
formulated as follows. The equations of motion of the small particles are for each particle, $s=1,2,\ldots,\#_{s}$,
\begin{equation}\label{NewtonSmall}
\begin{split}
m_{s}\frac{d\vec{v}_{s}}{dt}  & =\vec{F}_{s|s}+\vec{F}_{s|b}%
\\
\frac{d\vec{x}_{s}}{dt}  & =\vec{v}_{s}
\end{split}
\end{equation}
where $m_{s}$ is the mass of the small particle, $\vec{v}_{s}$ is its velocity, $\vec{F}_{s|s}=\sum_{\alpha=1}^{\#_{s}}\vec{F}_{s\alpha}$ is the sum of forces between the small particle $s$ and the
other small particles $\alpha,$ $\vec{F}_{s|b}=\sum_{n=1}^{2}\vec{F}_{sn}$ is the sum of forces between the small particle, $s$, and all of the $B$-particles. The small particles are not otherwise
constrained (other than for the usual cyclic boundary conditions).

Analogous to the laser tweezer experiment, a constraint to the motion of the large particles can be imposed through adding to Equation~(\ref{NewtonSmall}) and external force field \(\vec{E}\). The
equations of motion for the large particles (\(B-\)particles) are
\begin{equation}\label{NewtonBig}
\begin{split}
m_{B}\frac{d\vec{v}_{b}}{dt}  & =\vec{F}_{b|b}+\vec{F}_{b|s}+\vec
{E},\\
\frac{d\vec{x}_{b}}{dt}  & =\vec{v}_{b},
\end{split}
\end{equation}
where $m_{B}$ is the mass of the large particle and $\vec{v}_{b}$ its velocity. The forces are computed between the two large particles, $\vec{F}_{b|b}$ and
$B$-particles interacing with small, $\vec{F}_{b|s}=\sum_{\alpha=1}^{\#s}%
\vec{F}_{b\alpha}.$ In addition, the large particles are constrained to move on a line (without friction), but allowed by a harmonic force function to experience small fluctuations away from the
line:
\begin{equation}
\vec{E}(x,y,z)=-K_{B}(x\hat{e}_{x}+y\hat{e}_{y}).\label{HarmonicForce}%
\end{equation}
where $K_{B}$ is the spring constant and $\hat{e}_{x},\hat{e}_{y}$ are unit vectors perpendicular to the $z$ axis taken along the line of constraint.

The various forces $\vec{F}_{s\alpha},\vec{F}_{b\alpha},$ and $\vec{F}_{b|b}$ may be hard sphere, PHS or follow some simple but convenient force law such as the gradient of L-J 12-6 potentials. The
specifications of these force laws allow the accurate separation of the depletion effect from the total force of interaction.

A protocol for the simulation involves the initial positioning of the particles and assigning initial velocities from a Maxwell distribution to the small particles. Initially, the $B$-particles are
fixed along the $z$-axis separated by about $10$ small particle diameters and the small particles will redistribute following the dynamical Equations (\ref{NewtonSmall}) until equilibrium is reached
as suggested by Figure (\ref{FigPartAB}, A). Then the $B$-particles are released and given an initial speed drawn from a Maxwell distribution and allowed to move toward collision following the
dynamics of Equation (\ref{NewtonBig}).

It is clear that the characteristic time constant can be quite different for the two classes of particles and that makes the simulation of a binary solution with many $B$-particles awkward if not
impractical. Indeed if the set of $B$-particles are free to move but their number density is small, expect collisions to be so rare that the depletion effect will be rare and therefore the
statistics of the depletion potential will be very poorly represented. However, the harmonic force, Equation~(\ref{HarmonicForce}), steers two particles sufficiently that they will collide
relatively often and thereby enhance the determination of the distribution function from which the pair potential for the B-particles is estimated. Also, since the acceleration, velocity and
position of all particles are available it is possible to compute averages such as $<\vec{a}_{n}(t)\cdot\vec{a}_{\acute{n}}(t^{\prime})>$ and compare these "experimental" numbers with theory.

\section{Underlying Model and Stochastic Limit\label{sec-pk}}
In constructing our model we speak of particles instead of atoms or molecules. The term \textit{solute} refers to the large particles and the term \textit{solvent} to the medium of small particles.
In 1905 Einstein~\cite{EI} developed a model of Brownian motion to describe the motion of the large particles as a result of their interaction with the small particles. He assumed that the motions
of the large particles are statistically independent provided the system is very dilute; that is, the large particles are far apart from one another. (See Figure~(\ref{FigPartAB}, A).) In rigorous
mathematical treatments of several Brownian motions (Wiener processes), the assumption of independence, regardless of separation distance, has become widely accepted. However, we saw in
Section~\ref{sec-jam} that when the large particles are close together the depletion effect induces a force that attracts the particles to one another. In particular, the motions of the large
particles become statistically correlated when they are close, as measured by the correlation length~\(\sqrt{\varepsilon}\).

We therefore seek a model for correlated Brownian motions of the large particles that satisfies the following four
\newpage\textbf{Desiderata:}
\begin{enumerate}
    \item\label{desid1} The \emph{marginal} motion of any single  particle is Brownian (Wiener process).\footnote{Since we are dealing here with the joint motion of several large particles and each motion is, in a sense,
    an infinite dimensional random variable, the term \textit{marginal}  must properly be defined in this context. We do this for pair-motions (two large particles) in Appendix~\ref{app-marg}.}
    \item\label{desid2} If the particles are widely separated (dilute system), they perform approximately independent Brownian motions.
    \item\label{desid3} If the separation between particles is small, their motions are correlated. Moreover, the correlation is such that if the separation is sufficiently small, as measured by the
    the  correlation length, they tend statistically to approach one another further.
    \item\label{desid3b}As the correlation length tends to zero, the particles become \(\delta-\)~correlated in space and time.\footnote{That is, in the limit the particles become
    uncorrelated unless they collide.}
\end{enumerate}

Kotelenez~\cite{KO3} obtained a class of correlated Brownian motions as a scaling limit for the positions of several large particles immersed in a medium of infinitely many smaller
particles.\footnote{Kotelenez~\cite{KO5,KO6}(1995) introduced correlated Brownian motions as a driving term in stochastic ordinary differential equations (SODEs) and stochastic partial differential
equations (SPDEs). (See, for example, Equation~(\ref{eq3.2}.))} Here is a brief sketch of this work.

We need to consider two levels of description of the particle system. On the \emph{microscopic} level, we suppose Newtonian mechanics governs the equations of motion of the individual atoms
or molecules. 
These equations are cast in the form of a system of deterministic coupled nonlinear equations. The next level is called \emph{mesoscopic}. On this level the motion of the large particles is
stochastic; the randomness of their motions is determined by the surrounding medium.
Here, spatially extended particles are replaced by point particles; large and small particles are distinguished by their large and small masses.\footnote{For a rare gas a mean-field force can be a
result of coarse graining in space and time, where on a finer scale the interaction is governed by collisions. See also Comment~\ref{com3.2}.} Furthermore, the interaction between small particles is
assumed to be negligible and interactions between large particles can (temporarily) be neglected.\footnote{As the interaction between large particles occurs on a much slower time scale than the
interaction between large and small particles, it can be included after the scaling limit employing fractional steps (See Goncharuk and Kotelenez~\cite{GN}).}

We suppose that the interaction between large and small particles is governed by a scalar-valued potential of the form \(\varphi(|\bs{r}-\bs{q}|^2)\), where \bs{r} denotes the position of a large
particle and \bs{q} denotes the position of a small particle. Thus we assume that the potential does not depend on the locations of the two particles but only on their vector difference, \(\bs{r}-
\bs{q}\); it is a \textit{homogeneous} or \textit{shift invariant} function of \bs{r} and~\bs{q}. In fact, we assume that the potential depends only on the \emph{magnitude of the difference} or the
\emph{separation}, \(|\bs{r}-\bs{q}|\); it is an \textit{isotropic} function of the difference.\footnote{ This assumption is entirely consistent with the Principle of Material Frame Indifference,
discussed in Subsection~\ref{subsec-frame}. In fact, for a homogeneous scalar function of two vector arguments, \hyperlink{isotropic}{isotropy} is equivalent to isotropy in the difference. Here
\(|\bs{r}-\bs{q}|\) denotes the Euclidean distance in the state space \(\mathbb{R}^d\) between the two particles. } This case will be studied in detail in Section~\ref{sec-sep}. Recall that Brownian
motion is interpreted to be the result of collisions between  many, fast moving, small particles and a few, slowly moving, large particles. Often it is assumed that these collisions are elastic.

Specifically, we suppose that the force \(\hat{\bs{g}}_{\varepsilon,\mu}(\bs{r}-\bs{q})\) on a large particle at~\bs{r} due to a small particle at~\bs{q} is derivable from the potential
\(\hat{\varphi}_{\varepsilon,\mu}\):
\begin{equation}\label{pot}
\hat{\bs{g}}_{\varepsilon,\mu}(\bs{r}-\bs{q}):=-\bs{\nabla}\hat{\varphi}_{\varepsilon,\mu}(|\bs{r}-\bs{q}|^2)=-2(\bs{r}-\bs{q})\hat{\varphi}_{\varepsilon,\mu}^{\prime}(|\bs{r}-\bs{q}|^2).
\end{equation}
The potential function \(\hat{\varphi}_{\varepsilon,\mu}\) and, hence, the force \(\hat{\bs{g}}_{\varepsilon,\mu}\) depend on two parameters: the correlation length \(\sqrt{\varepsilon}\) and a
time-scale parameter \(\mu\).\footnote{The parameter \(\mu\) has the units of reciprocal time (\([=]\frac{1}{T}\)),  $\bs{\nabla}$ denotes the spatial gradient in \(\mathbb{R}^d\), and the prime
(\(^{\prime}\)) denotes differentiation with respect to the scalar argument.}

In the classical model of a dilute system of large (Brownian) particles, mentioned above, the fluctuation forces \(\bs{f}(\bs{r}^1,t_1)\) and \(\bs{f}(\bs{r}^2,t_2)\) on the positions of two large
particles, located at \(\bs{r}^1\) and \(\bs{r}^2\) at times \(t_1\) and \(t_2\), are assumed to be $\delta-$correlated in space and time: \(\langle\bs{f}(\bs{r}^1,t_1)\bs{f}^T(\bs{r}^2,t_2)\rangle
\varpropto \bs{\delta} (\bs{r}^1-\bs{r}^2)\otimes\delta(|t_1-t_2|)\), where $\bs{\delta} \text{ and }\delta$ denote Dirac's $\delta-$function in \(\mathbb{R}^d\) and \(\mathbb{R}\).\footnote{ In
terms of stochastic analysis this means that the ``noise'' (replacing the solute of small particles) is \emph{white} in space and time. In particular, this implies the independent increments in
Brownian motions.} In contrast, our model can capture the qualitative behavior of both dilute and non-dilute systems. 
Since we consider $\delta-$correlated
noise to be an approximation to the more realistic spatially correlated noise, we require in Desideratum~\ref{desid3b} that the fluctuation forces associated with our model be $\delta-$correlated in
the limit as the correlation length $\sqrt{\varepsilon}\downarrow 0$.

Suppose there are $N$ large particles and infinitely many small particles distributed in the Euclidean state space \({\mathbb{R}^d}\).\footnote{Infinitely many small particles are needed to generate
independent increments in the limiting Brownian motion. See our Comment~\ref{com3.1} and Kotelenez~\cite{KO3}.} At time \(t\), the position of the $\alpha$th large particle is denoted by
$\bs{r}^{\alpha}(t)$ and its velocity is denoted by $\bs{v}^{\alpha}(t)$. The  position and velocity of the $\lambda$th small particle are denoted by $\bs{q}^{\lambda}(t)$ and $\bs{w}^{\lambda}(t)$.
The mass of a large particle is $m_l$,  the mass of a small particle is $m_s$, and we suppose that \(m_s\ll m_l\). The empirical mass distributions of large and small particles are (formally) given
by
\begin{equation}\label{empdist}
\mathcal{X}(d\bs{r},t)=m_l\sum_{{\alpha}=1}^N\delta_{\bs{r}^{\alpha}(t)}(d\bs{r})\qquad\text{and}\qquad \mathcal{Y}(d\bs{q},t)=m_s\sum_{\lambda\in\mathbb{N}}\delta_{\bs{q}^{\lambda}(t)}(d\bs{q}).
\end{equation}
We assume, at the microscopic level, that the interaction between small and large particles can be described by the following infinite system of coupled nonlinear dynamic equations:\footnote{Note
that Equation~(\ref{pot}) implies \(\hat{\bs{g}}_{\varepsilon,\mu}(\bs{r}-\bs{q})=-\hat{\bs{g}}_{\varepsilon,\mu}(\bs{q}-\bs{r})\); that is, the force on \bs{r} due to \bs{q} is necessarily equal
and opposite to the force on \bs{q} due to \bs{r}. Strictly speaking \(\hat{\bs{g}}_{\varepsilon,\mu}\) plays the r\^{o}le of a  \emph{force density} or \emph{force per small particle} in
Equations~(\ref{eq3.1a}) and therefore has units of force per unit volume (\( [=]\frac{M\cdot L}{T^2}\frac{1}{L^d}\)).}
\begin{equation}\label{eq3.1a}
    \begin{split}
    \frac{d}{dt}\bs{r}^{\alpha}(t)&=\bs{v}^{\alpha}(t),\\
    \frac{d}{dt}\bs{v}^{\alpha}(t)&=-\mu\bs{v}^{\alpha}(t)+\frac{1}{m_l m_s}\int_{\mathbb{R}^d}\hat{\bs{g}}_{\varepsilon,\mu}(\bs{r}^{\alpha}(t)-\bs{q})\mathcal{Y}(d\bs{q},t),\\
    \frac{d}{dt}\bs{q}^{\lambda}(t)&=\bs{w}^{\lambda}(t),\\
    \frac{d}{dt}\bs{w}^{\lambda}(t)&=-\frac{1}{m_l m_s}\int_{\mathbb{R}^d}\hat{\bs{g}}_{\varepsilon,\mu}(\bs{r}-\bs{q}^{\lambda}(t))\mathcal{X}(d\bs{r},t),
    \end{split}
\end{equation}
where \(\alpha=1,2,\cdots,N\) and \(\lambda\in\mathbb{N}\). The positive time-scale parameter, \(\mu\), is also introduced here as a Stokes-friction parameter associated with the large particles. To
specify an evolutionary system we append the random initial conditions:
\begin{equation}\label{eq3.1b}
\bs{r}^{\alpha}(0)=\bs{r}^{\alpha}_0,\quad\bs{v}^{\alpha}(0)=\bs{v}^{\alpha}_0,\quad\bs{q}^{\lambda}(0)=\bs{q}^{\lambda}_0,\quad\bs{w}^{\lambda}(0)=\bs{w}^{\lambda}_0.
\end{equation}

The resulting mesoscopic model of correlated Brownian motions can be defined as follows: Let $w(d\bs{q},ds)$ denote standard Gaussian white noise on $\mathbb{R}^d\times\mathbb{R}^+$, which is a
space-time generalization of the time increments of a standard scalar-valued Brownian motion. (See Appendix~\ref{app-white} for details of this generalization.) The white noise and the random
initial data are defined on the same probability space and are assumed independent.

In the stochastic limit, the positions of the large particles are shown to be the solutions of the \(N\) kinematic stochastic integral equations
\begin{equation}\label{eq3.2}
\bs{r}^{\alpha}(t) = \bs{r}^{\alpha}_0 + \int_0^t\int_{\mathbb{R}^d} \bs{g}_{\varepsilon}(\bs{r}^{\alpha}(s)-\bs{q})w(d\bs{q},ds),\quad \alpha = 1,...,N.
\end{equation}
The kernel,~\(\bs{g}_{\varepsilon}\), in Equations~(\ref{eq3.2}) is induced by the force density field \(\hat{\bs{g}}_{\varepsilon,\mu}\) of Equations~(\ref{eq3.1a}) through the transition from a
second order system in time (dynamic description) to a first order system in time (kinematic description).
This transition requires an assumption that the force density field \( \hat{\bs{g}}_{\varepsilon,\mu} \) has specific asymptotic behavior as the time-scale/friction parameter \(\mu\) gets large;
specifically, there is a function \(\bs{r}\mapsto\bs{g}_{\varepsilon}(\bs{r})\) such that\footnote{Observe that large values of \(\mu\) correspond to short time-scales.}
\begin{equation}\label{timescale}
\hat{\bs{g}}_{\varepsilon,\mu}(\bs{r})\approx\mu\bs{g}_{\varepsilon}(\bs{r})\qquad\text{as }\mu\rightarrow\infty.
\end{equation}
In the sequel, we refer to the function \(\bs{g}_{\varepsilon}\) as the \hyperlink{kernel}{forcing kernel}.\footnote{If we suppose that the space-time white noise, \(w(d\bs{q},dt)\), has the units
of volume times time (\([=]L^d T\)), the \hyperlink{kernel}{forcing kernel}, \(\bs{g}_{\varepsilon}\), cannot have the units of  a force density.}

\begin{com}\label{com3.2}
Suppose that small particles move with different velocities. If most of the small particles moving in the direction of a large particle can avoid collisions with other small particles (as in a rare
gas or in the PHS model of Section~\ref{sec-jam}), fast small particles coming from ``far away'' can collide with a given large particle at approximately the same time as slow small particles that
were close to the large particle before the collision. If, in repeated microscopic time steps, collisions of a given small particle with the \emph{same} large particle are negligible, then, in a
mesoscopic
time unit, 
the collision dynamics can be replaced by long-range mean field dynamics. Dealing with a wide range of velocities and working with discrete time steps, a long range force is generated.

If, for instance, we assume that the empirical velocity distribution of the small particles is approximately Maxwellian, the aforementioned force density, \(\hat{\bs{g}}_{\varepsilon,\mu}\), can be
given by an expression in the form of Equation~(\ref{timescale}): (See Kotelenez~\cite{KO3}, Equation~(1.2))\footnote{Clearly, the right-hand side of Equation~(\ref{eq3.3}) is of the form of
Equation~(\ref{pot}); that is, it is the negative gradient of a potential. }

\begin{equation}\label{eq3.3}
\hat{\bs{g}}_{\varepsilon,\mu}(\bs{r}-\bs{q})\approx\mu \kappa_{\varepsilon,d}(\bs{r}-\bs{q})\, e^{-\frac{|\bs{r}-\bs{q} |^2}{2\varepsilon}},
\end{equation}
where \(\mu\) is the friction/time-scale parameter associated with the large particles and \(\kappa_{\varepsilon,d}\) is a normalizing constant chosen so that the particles become
\(\delta-\)correlated in a standard way as \(\varepsilon\downarrow 0\).

In other words, if the above assumptions hold in a first approximation, the interactions between large and small particles are governed by a velocity field for which the variance of the distribution
is the correlation length. Obviously, this example can be generalized to an arbitrary velocity field of the small particles. (A more realistic model might involve some, possibly nonlinear,
transformation of the velocity field, taking into account collisions between small particles, etc.) For the purpose of our work here, it suffices to consider a general \hyperlink{kernel}{forcing
kernel}, as in Equations~(\ref{eq3.2}), and show that for certain kernels the right hand side behaves according to the requirements stated at the beginning of this section.
\end{com}

\begin{maxwell}\label{com-delcor}
We must choose the constant \(\kappa_{\varepsilon,d}\) of Equation~(\ref{eq3.3}) in order to satisfy  Desideratum~\ref{desid3b}. Later we will see that for the \hyperlink{maxker}{Maxwell kernel},
\(\kappa_{\varepsilon,d}\) should be chosen so that if \(\bs{g}_{\varepsilon}(\bs{r}):=\kappa_{\varepsilon,d}\bs{r}e^{-\frac{|\bs{r}|^2}{2\varepsilon}}\) in the kinetic stochastic
Equation~(\ref{eq3.2}), then \(\bs{r}\mapsto\frac{1}{d}| \bs{g}_{\varepsilon}(\bs{r}) |^2\) must approximate the \(\delta-\)function as \(\varepsilon\downarrow 0\).  This requirement implies
\(\kappa^2_{\varepsilon,d}\frac{\varepsilon }{2} (\pi\varepsilon)^{d/2} =1\). This normalization implies that \(\bs{r}\mapsto\frac{1}{d}| \bs{g}_{\varepsilon}(\bs{r}) |^2\) is a probability density
on \(\mathbb{R}^d\). We record for later use, that the variance of the associated probability distribution is \(\frac{(d+2)\varepsilon}{2}\).

Of course, Desideratum~\ref{desid3b} requires that for \emph{any} forcing function,  \(\bs{r}\mapsto\frac{1}{d}| \bs{g}_{\varepsilon}(\bs{r}) |^2\) must approximate the \(\delta-\)function as
\(\varepsilon\downarrow 0\). Later we will see that this is equivalent to requiring that, in the limit as \(\varepsilon\downarrow 0\), the infinitessimal generator associated with the diffusion is
one-half the Laplacian. (See Comment~\ref{com-vkflux}.) This, in turn, is equivalent to requiring that, in the limit as \(\varepsilon\downarrow 0\), each particle experiences a \emph{standard}
Brownian motion and that the motions are independent.
\end{maxwell}

\begin{com}\label{com-forcing}
The transition from the microscopic (dynamic) to the mesoscopic (kinematic) description is quite technical. See Kotelenez~\cite{KO3} for the details. Heuristically, however, an examination of the
resulting kinematic Equations~(\ref{eq3.2}) reveals that one consequence of the procedure is to render negligible the inertial effect on \emph{each} large particle due to its interactions with
\emph{all} the small particles. In the sequel our model will imply that the effect on each large particle by all the small particles will be, at most, fluctuations in the position of the large
particle. An analysis of the mutual spatial correlations of these motions is the object of our work here. To describe these correlations and compare them with the depletion phenomenon we have here
excluded the interactions between the large Brownian particles.\footnote{There is a large literature on \emph{interacting} Brownian motions. In the context of coagulating Brownian particles  refer
to a recent paper by Hammond and Rezakhanlou~\cite{HA} and the references therein. }
\end{com}

It is important to note that the system of Equations~(\ref{eq3.2}) is coupled only through the Gaussian space-time white noise $w(d\bs{q},ds)$.\footnote{The space-time white noise \(w(d\bs{q},ds)\)
is obtained in a scaling limit from the number of small particles, in the small volume \(d\bs{q}\) and the small time interval \(ds\), with a given large particle. See Kotelenez~\cite{KO3} and the
brief discussion here of the scaling limit.} Regularity assumptions are assumed sufficient to guarantee that the integrals (It\^{o} integrals) define continuous square integrable martingales and
that, for the large class of kernels $\bs{g}_{\varepsilon}$ we consider, each of the Equations~(\ref{eq3.2}) has a unique solution, which is itself a Brownian motion. In Section~\ref{sec-partsys} we
give a detailed analysis of the correlations between these motions when \(N=2\); that is, when there are two large particles. It will be shown that the \emph{joint} motion of the pair is not
Brownian.

The transition from the microscopic description of Equations~(\ref{eq3.1a}, \ref{eq3.1b}) to the  mesoscopic (stochastic) description of Equation~(\ref{eq3.2})  is accomplished through the following
steps:

\begin{itemize}

\item  Form small clusters (ensembles) of particles, if their initial positions and velocities are similar (coarse graining in
space).

\item Replace the time derivative in Equations~(\ref{eq3.1a}) by an Euler scheme (coarse graining in time).

\item Randomize the initial distribution of clusters, where the probability distribution is determined by the
relative sizes of the clusters, assuming statistical independence of the initial distributions of different clusters.

\item Assume that the initial average velocity, $\langle\bs{w}_0\rangle$, of the small particles and the time-scale/friction
coefficient $\mu$ associated with the large particles both tend to infinity in such a way that $\sqrt{\varepsilon}\mu \ll \langle\bs{w}_0\rangle$.

\item Allow the small particles to escape to infinity after interacting with the large particles for a macroscopically small
time.\footnote{This hypothesis seems to be acceptable if, for spatially extended particles, the interparticle distance is considerably greater than the diameter of a typical particle. The assumption
holds for a gas (See Lifshits and Pitayevskii~\cite{LI}, Ch.1, \S 3 ), but not for a liquid, like water. For a liquid, we refer to the PHS model, introduced in Section~\ref{sec-jam}.}
\end{itemize}

Carrying out these steps, Kotelenez~\cite{KO3}  shows that the positions of the large particles in a sequence of coarse grained versions of Equations~(\ref{eq3.1a},~\ref{eq3.1b})  tend to the
solutions of Equations~(\ref{eq3.2}) weakly in an appropriate space of functions with values in $\mathbb{R}^{d N}$, $d \geq 2$.\footnote{More precisely, in the Skorohod space of \textsc{cadlag}
functions with values in $\mathbb{R}^{d N}$, $d \geq 2$.} We begin the substance of our work here with the motions governed by the system of Equations~(\ref{eq3.2}).

\begin{com}\label{com3.1}
The escape to infinity after a short period of interaction with the large particles is necessary to generate independent increments in the limit. This can be seen as follows: The small time induces
a partition of the time axis into small time intervals. In each of the small time intervals the large particles are being displaced by the interaction with clusters of small particles. Note that the
vast majority of small particles had previously not interacted with the large particles and that they disappear toward infinity after that time step. Since clusters have started independently, this
implies almost independence of the displacements of the large particles in different time intervals. In a scaling limit, as the initial velocities of the small particles and the friction coefficient
for the large particles tend to infinity, the motions of the large particles have independent increments. The requirements of an infinite number of small particles and their eventual escape to
infinity are both needed to obtain independent increments in time. A similar result is obtained if no friction term is introduced in the dynamic equations for the large particles. In this case,
however, the limit is an Ornstein-Uhlenbeck model (described by Langevin equations), where the velocities, rather than the positions, perform correlated Brownian motions.
\end{com}

\begin{com}\label{gurarie}
Klyatskin and Gurarie~\cite{GU} show that a kinetic model of the form \(\frac{d}{dt}\bs{r}(t)=\bs{f}(\bs{r}(t),t)\), where \(\bs{f}\) is a field consisting of a deterministic term plus a random
term, can exhibit clustering in simulations when the vector field \bs{f} is compressive; that is, derivable from a potential. The vector field in our kinetic model is compressive.
\end{com}

\section{Preliminaries\label{sec-prelim}}
\subsection{Notation}\label{subsec-not}
In this section we set out the basic notational conventions used throughout the rest of our paper.

The real numbers are denoted by \(\mathbb{R}\) and the non-negative real numbers by \(\mathbb{R}^+\). Scalars are always lightface symbols. In particular, \(t\text{ and }s\) always denote times.
Vectors and square matrices are denoted by boldface symbols. Generally, we use Latin minuscules for vectors and Latin majuscules for square matrices in the \(d\)-dimensional real Euclidean vector
space \(\mathbb{R}^d,\,d\geq1\). Unless otherwise stated, all matrices are \emph{vectors} or \emph{square} matrices; that is, \((d\times 1)\)-matrices,  \((1\times d)\)-matrices, or  \((d\times
d)\)-matrices.  If \(\bs{M}\) denotes a matrix and \bs{x} denotes a vector, \(\bs{M}^T\) and \(\bs{x}^T\) denote their transposes. We say \textit{vector} when we mean \textit{column vector;} that
is, \(\bs{x}\in\mathbb{R}^{d\times 1}\) and, hence, \(\bs{x}^T\in\mathbb{R}^{1\times d}\) is a \textit{row vector.} For a matrix, we say \textit{transformation} when we mean to emphasize its
r\^{o}le as a linear transformation. For two vectors, \(\bs{x},\bs{y}\), their inner (scalar) product is denoted by \(\bs{x}\bullet\bs{y}\) and their outer (tensor) product is denoted by
\(\bs{x}\otimes\bs{y}\). We have \(\bs{x}^T\bs{y}\equiv\bs{x}\bullet\bs{y}\), a scalar, and \(\bs{x}\bs{y}^T\equiv\bs{x}\otimes\bs{y}\), a square matrix. The norm, or length, of a vector \bs{x} is
\(|\bs{x}|:=\sqrt{\bs{x}\bullet\bs{x}}\).  (We also write \(|\gamma|\) for the modulus of a scalar.) The inner (scalar) product of two matrices, \bs{M} and \bs{N}, is the scalar given by
\(\bs{M}\bullet\bs{N}:=trace(\bs{M}\bs{N}^T)\).\footnote{Note that \(trace(\bs{x}\bs{y}^T)=\bs{x}^T\bs{y}\) or, equivalently, \(trace(\bs{x}\otimes\bs{y})=\bs{x}\bullet\bs{y}\).} The norm of a
matrix induced by this inner product is the Hilbert-Schmidt norm \(|\bs{M}|:=\sqrt{\bs{M}\bullet\bs{M}}\). \footnote{It is topologically equivalent to the operator norm, \(\| \bs{M} \|\), of the
matrix as a linear operator on \(\mathbb{R}^d\).}

For \(\bs{y}\not=\bs{0}\), we write \(\bs{u}(\bs{y})\) for the \textit{unit outward radial vector (at~\bs{y}):}
\begin{equation}\label{normvec}
\bs{u}(\bs{y}):=\frac{\bs{y}}{|\bs{y}|}.
\end{equation}
Any identity matrix is denoted by \bs{1} and any matrix or vector whose entries are all zeros is denoted by \bs{0}. For \(\bs{y}\not =\bs{0}\) we write \(\bs{P}(\bs{y})\) and
\(\bs{P}^{\perp}(\bs{y})\)  for the complementary orthogonal projections onto the \bs{y} direction, the \(1-\)dimensional subspace~\(\{\bs{y}\}\), and the hyperplane orthogonal to \bs{y}, the
\((d-1)-\)dimensional subspace \(\{\bs{y}\}^{\perp}\), given by\footnote{If necessary, we extend these definitions to include \(\bs{y}=\bs{0}\) by requiring \(\bs{P}(\bs{0}):=\bs{0}\), the
projection on \(\{\bs{0}\}\) and, hence, \( \bs{P}^{\perp}(\bs{0}):=\bs{1} \), the projection on \(\{\bs{0}\}^{\perp}\).}

\begin{equation}\label{proj}
\bs{P}(\bs{y}):=\frac{\bs{y}\bs{y}^T}{|\bs{y}|^2}\qquad\text{and}\qquad\bs{P}^{\perp}(\bs{y}):=\bs{1}-\bs{P}(\bs{y})=\bs{1}-\frac{\bs{y}\bs{y}^T}{|\bs{y}|^2}.
\end{equation}
Note that, for \(\bs{y}\not =\bs{0}\),
\begin{equation}\label{proj1}
\bs{P}(\bs{y})\bullet\bs{P}(\bs{y})=1,\quad\bs{P}^{\perp}(\bs{y})\bullet\bs{P}^{\perp}(\bs{y})=d-1,\quad\text{and}\quad\bs{P}(\bs{y})\bullet\bs{P}^{\perp}(\bs{y})=0.
\end{equation}

The position vector of a large particle is denoted by~\(\bs{r}\). In case there are more than one, the position vector of the \(\alpha\)th large particle is denoted by~\(\bs{r}^{\alpha}\). In
general, large particles are indexed by Greek minuscules, so \(\alpha,\beta=1,2,\ldots ,N.\) In  \(\mathbb{R}^d\), the standard coordinates of a vector \bs{r} are indexed by Latin minuscules; thus,
\(r^{\alpha}_j\) denotes the \(j\)th standard coordinate of \(\bs{r}^{\alpha}\).\footnote{Coordinates are scalars, so the corresponding symbol is lightface. \textit{Standard} means with respect to
some prescribed orthonormal basis.}

The underlying probability space for all random variables is \((\Omega,\mathcal{F},\mathcal{P})\). If \(Z\) denotes a \emph{random variable,} \(Z=Z(\omega),\,\omega\in\Omega\), we suppress the
dependence on~\(\omega\), unless confusion is likely or emphasis is needed. We write \(E\big[Z\big]\) to denote its expectation: \( E\big[Z\big]:=\int_{\Omega}Z(\omega)\mathcal{P}(d\omega) \).

The norm of a function, \(f\), (scalar- or vector-valued) in an appropriate \(L^2\) setting is denoted by \(\|f\|\). If \(\zeta\mapsto f(\zeta)\) is a  function of a scalar argument (scalar-,
vector-, or tensor-valued) differentiable at \(\zeta=\hat{\zeta}\), we write \(f^{\prime}(\hat{\zeta})\) for \(\frac{d}{d\zeta} f(\zeta)\big|_{\zeta=\hat{\zeta}}\).

We adopt the following special convention:  If \((\bs{x}^{1},\bs{x}^{2})\) is an (ordered) pair of vectors in \(\mathbb{R}^d\equiv\mathbb{R}^{d\times 1}\)
then \(\hat{\bs{x}}\)  is the vector in \(\mathbb{R}^{2d}\equiv\mathbb{R}^{2d\times 1}\) given by \(\hat{\bs{x}}=\begin{pmatrix}\bs{x}^{1}\\
\bs{x}^{2}\end{pmatrix}\).

\subsection{Material Frame-Indifference}\label{subsec-frame}
 The material descriptions that underlie our model are assumed to be invariant under changes in external observer. This fundamental requirement, known since the time of Stokes,
is now called the \hypertarget{frame-inv}{\textit{Principle of Material Frame-Indifference.}} As formulated by Noll~\cite{NO}, it asserts:
\begin{quote}
The constitutive laws governing the internal interactions between the parts of a system should not depend on whatever external frame of reference is used to describe them.
\end{quote}
In our context, this requirement imposes invariance relations on certain materially significant scalar-, vector-, and matrix-valued functions of position. In the context of constitutive laws,
functions restricted by this principle are often said to be \textit{frame-indifferent;} mathematically, they are \hypertarget{isotropic}{\textit{isotropic.}}\footnote{There is an extensive
literature on such functions in the context of mechanics. See, for example, the comprehensive article in the \textit{Encyclopedia of Physics - The Non-Linear Field Theories of Mechanics} by
Truesdell and Noll~\cite{TR}. In particular, they state and outline a proof of the fundamental Representation Theorem of Cauchy~\cite{CA} on which Lemma~\ref{lem0} is based. (For details, see
Truesdell and Noll~\cite{TR}, \S B,II,11.)}

\begin{define}[Isotropic Functions]\label{def1}
Let \(\bs{r}\rightarrow \psi(\bs{r}),\,\bs{f}(\bs{r}),\,\text{and }\bs{F}(\bs{r})\) denote scalar-, vector-, and matrix-valued functions of position  on \(\mathbb{R}^d\). They are \emph{isotropic}
functions whenever they satisfy:
\begin{align}
\psi(\bs{Q}\bs{r})      &=\psi(\bs{r}),\label{frinv1}\\
\bs{f}(\bs{Q}\bs{r})    &=\bs{Q}\bs{f}(\bs{r}),\label{frinv2}\\ \bs{F}(\bs{Q}\bs{r})    &=\bs{Q}\bs{F}(\bs{r})\bs{Q}^T,\label{frinv3}
\end{align}
for all orthogonal transformations (\((d\times d)\)-orthogonal matrices) \bs{Q} on \(\mathbb{R}^d\).\footnote{The collection of \((d\times d)\)-orthogonal matrices, \bs{Q}, is the \textit{orthogonal
group}, \(\mathcal{O}^d\). The \textit{proper orthogonal group}, \(\mathcal{O}^d_+ \), is the sub-group that preserves orientation; these are the \textit{rotations.} In particular, if
\(\bs{Q}\in\mathcal{O}^d\) then \(\text{det}\bs{Q}=\pm1\); while if \(\bs{Q}\in\mathcal{O}^d_+\) then \(\text{det}\bs{Q}=1\).}
\end{define}
Isotropic functions must have very special forms.

\begin{lem}[Representations for \hyperlink{isotropic}{Isotropic} Functions]\label{lem0}
\begin{enumerate}
\item\label{lem01} A scalar-valued function of position, \(\psi:\mathbb{R}^d\rightarrow\mathbb{R}\), is \hyperlink{isotropic}{isotropic} if and only if there is a scalar-valued function \(\xi\mapsto\alpha(\xi):\mathbb{R}^+\rightarrow\mathbb{R}\)
such that
\begin{equation}\label{frinv4}
\psi(\bs{r})=\alpha(|\bs{r}|^2);
\end{equation}
\item\label{lem02} A vector-valued function of position, \(\bs{f}:\mathbb{R}^d\rightarrow\mathbb{R}^d\), is \hyperlink{isotropic}{isotropic} if and only if there is a scalar-valued function
\(\xi\mapsto\beta(\xi):\mathbb{R}^+\rightarrow\mathbb{R}\) such that\footnote{\hyperlink{isotropic}{Isotropy} here requires that \(\bs{f}(\bs{0})=\bs{0}\).}
\begin{equation}\label{frinv5}
\bs{f}(\bs{r})=\beta(|\bs{r}|^2)\bs{r};\text{ and}
\end{equation}
\item\label{lem03} A matrix-valued function of position, \(\bs{F}:\mathbb{R}^d\rightarrow\mathbb{R}^{d\times d}\), is \hyperlink{isotropic}{isotropic} if and only if there are two
scalar-valued functions \(\xi\mapsto\gamma(\xi),\eta(\xi):\mathbb{R}^+\rightarrow\mathbb{R}\) such that\footnote{\hyperlink{isotropic}{Isotropy} here requires that \(\bs{F}(\bs{0})=\kappa\bs{1}\),
for some constant \(\kappa\). This is a special case of the well-known result that a matrix commutes with all orthogonal matrices if, and only if, it is a multiple of the identity. See
Theorem~\ref{full} and the discussion in Appendix~\ref{app-a}. This is implicit in Equations~(\ref{frinv6}) and~(\ref{frinv7}). Note also that the conclusion of Part~\ref{lem03} implies that a
\hyperlink{isotropic}{isotropic} matrix function is symmetric.}
\begin{equation}\label{frinv6}
\bs{F}(\bs{r})=\gamma(|\bs{r}|^2)\bs{1}+\eta(|\bs{r}|^2)\bs{r}\bs{r}^T.
\end{equation}
\end{enumerate}
\end{lem}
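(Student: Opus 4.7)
The plan is to dispatch the sufficiency (``if'') direction in all three parts by direct substitution and to handle the necessity (``only if'') direction with an orbit-and-stabilizer analysis of the action of \(\mathcal{O}^d\) on \(\mathbb{R}^d\). For sufficiency, substituting the forms (\ref{frinv4}), (\ref{frinv5}), and (\ref{frinv6}) into the isotropy conditions (\ref{frinv1})--(\ref{frinv3}) and using the identities \(|\bs{Q}\bs{r}| = |\bs{r}|\), \(\bs{Q}\bs{1}\bs{Q}^T = \bs{1}\), and \(\bs{Q}(\bs{r}\bs{r}^T)\bs{Q}^T = (\bs{Q}\bs{r})(\bs{Q}\bs{r})^T\) makes the verification immediate in each case.

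For necessity in Part \ref{lem01}, I will use that \(\mathcal{O}^d\) acts transitively on each sphere \(\{\bs{r}:|\bs{r}|=\rho\}\), so an \hyperlink{isotropic}{isotropic} scalar function is constant on spheres and hence depends only on \(|\bs{r}|^2\); I then set \(\alpha(\xi) := \psi(\sqrt{\xi}\,\bs{e})\) for any fixed unit vector \(\bs{e}\).

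For Part \ref{lem02}, plugging \(\bs{r} = \bs{0}\) into (\ref{frinv2}) gives \(\bs{f}(\bs{0}) = \bs{Q}\bs{f}(\bs{0})\) for every \(\bs{Q} \in \mathcal{O}^d\), forcing \(\bs{f}(\bs{0}) = \bs{0}\). For \(\bs{r} \neq \bs{0}\), the identity (\ref{frinv2}) restricted to the stabilizer \(\mathrm{Stab}(\bs{r}) := \{\bs{Q}\in\mathcal{O}^d : \bs{Q}\bs{r} = \bs{r}\}\) yields \(\bs{f}(\bs{r}) = \bs{Q}\bs{f}(\bs{r})\) for every \(\bs{Q} \in \mathrm{Stab}(\bs{r})\). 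Since \(\mathrm{Stab}(\bs{r})\) acts trivially on \(\mathrm{span}(\bs{r})\) and as the full orthogonal group on \(\{\bs{r}\}^{\perp}\), the only vectors fixed by all its elements lie in \(\mathrm{span}(\bs{r})\); hence \(\bs{f}(\bs{r}) = \lambda(\bs{r})\bs{r}\) for some scalar \(\lambda(\bs{r})\). Transitivity then shows \(\lambda\) depends only on \(|\bs{r}|\): if \(|\bs{s}| = |\bs{r}|\) and \(\bs{Q}\bs{r} = \bs{s}\), comparing \(\bs{f}(\bs{s}) = \bs{Q}\bs{f}(\bs{r}) = \lambda(\bs{r})\bs{s}\) with \(\bs{f}(\bs{s}) = \lambda(\bs{s})\bs{s}\) gives \(\lambda(\bs{s}) = \lambda(\bs{r})\), and I set \(\beta(|\bs{r}|^2) := \lambda(\bs{r})\).

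Part \ref{lem03} follows the same template, but with a two-dimensional centralizer. At \(\bs{r} = \bs{0}\), (\ref{frinv3}) says \(\bs{F}(\bs{0})\) commutes with every orthogonal matrix and therefore equals a scalar multiple of \(\bs{1}\). For \(\bs{r} \neq \bs{0}\), (\ref{frinv3}) restricted to \(\mathrm{Stab}(\bs{r})\) gives \(\bs{F}(\bs{r})\bs{Q} = \bs{Q}\bs{F}(\bs{r})\) for every \(\bs{Q} \in \mathrm{Stab}(\bs{r})\). Since this subgroup acts trivially on \(\mathrm{span}(\bs{r})\) and as the full orthogonal group on \(\{\bs{r}\}^{\perp}\), a Schur-type argument forces \(\bs{F}(\bs{r})\) to preserve the orthogonal splitting \(\mathbb{R}^d = \mathrm{span}(\bs{r}) \oplus \{\bs{r}\}^{\perp}\) and to act as a scalar multiple of the identity on each summand. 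In the projection notation (\ref{proj}),
\[
\bs{F}(\bs{r}) = a(\bs{r})\bs{P}(\bs{r}) + b(\bs{r})\bs{P}^{\perp}(\bs{r}) = b(\bs{r})\bs{1} + \frac{a(\bs{r}) - b(\bs{r})}{|\bs{r}|^2}\,\bs{r}\bs{r}^T.
\]
Applying transitivity once more shows \(a\) and \(b\) depend only on \(|\bs{r}|\), and the asserted representation follows with \(\gamma(|\bs{r}|^2) := b(\bs{r})\) and \(\eta(|\bs{r}|^2) := (a(\bs{r}) - b(\bs{r}))/|\bs{r}|^2\). I expect the main technical point to be the characterization of the centralizer of \(\mathrm{Stab}(\bs{r})\) inside \(\mathbb{R}^{d\times d}\): the presence of exactly two scalar coefficients in (\ref{frinv6}) reflects the two irreducible invariant summands, and this is the only genuinely new structural ingredient beyond the machinery used for Parts \ref{lem01} and \ref{lem02}.
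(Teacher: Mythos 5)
Your proposal is correct, but it reaches the representations by a genuinely different route than the paper. The paper proves Part~\ref{lem01} by invoking Cauchy's Representation Theorem (Theorem~\ref{cauchy}) with \(m=1\); proves Part~\ref{lem02} by applying that theorem to the auxiliary scalar function \(\phi(\bs{r},\bs{a}):=\bs{f}(\bs{r})\bullet\bs{a}\) of two vectors and then using linearity in \(\bs{a}\) together with a parity argument to kill the component of \(\bs{f}(\bs{r})\) in \(\{\bs{r}\}^{\perp}\); and proves Part~\ref{lem03} by feeding \(\bs{F}(\bs{r})\bs{r}\) and \(\bs{F}^T(\bs{r})\bs{r}\) through Part~\ref{lem02} to obtain the eigenpair \(\left<\lambda(|\bs{r}|^2);\bs{r}\right>\), deducing invariance of \(\{\bs{r}\}\) and \(\{\bs{r}\}^{\perp}\), and then showing \(\bs{P}^{\perp}(\bs{r})\bs{F}(\bs{r})\bs{P}^{\perp}(\bs{r})\) commutes with every orthogonal map preserving \(\{\bs{r}\}\), so that Theorem~\ref{full} forces it to be a multiple of the identity on \(\{\bs{r}\}^{\perp}\). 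You instead argue directly from the group action: transitivity of \(\mathcal{O}^d\) on spheres for Part~\ref{lem01}, and a stabilizer/commutant (Schur-type) analysis for Parts~\ref{lem02} and~\ref{lem03}, with one more use of transitivity to make the scalar coefficients functions of \(|\bs{r}|\) alone. Note that both routes funnel through the same linear-algebra pivot: your statement that the commutant of \(\mathrm{Stab}(\bs{r})\) acts as a scalar on \(\{\bs{r}\}^{\perp}\) is exactly the content of Theorem~\ref{full} applied on \(\{\bs{r}\}^{\perp}\) (and is where the two coefficients in Equation~(\ref{frinv6}) come from), so you should cite or prove that fact rather than leave ``Schur-type'' implicit; over \(\mathbb{R}\), irreducibility alone does not immediately give a scalar commutant. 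What each approach buys: yours is more elementary and self-contained, avoiding Cauchy's theorem entirely and making the orbit/stabilizer structure transparent; the paper's use of Cauchy's theorem yields a statement valid for functions of several vector arguments and lets Part~\ref{lem03} reuse Part~\ref{lem02} without any block computation. Two small housekeeping points: your stabilizer arguments implicitly assume \(d\geq2\) (the case \(d=1\), where the statements reduce to even/odd parity, is dispatched in the paper by a one-line remark and should be noted), and in your Part~\ref{lem03} conversion the coefficient \(\eta(|\bs{r}|^2)=(a(\bs{r})-b(\bs{r}))/|\bs{r}|^2\) is only defined for \(\bs{r}\neq\bs{0}\), which is harmless since no regularity at the origin is claimed.
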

A proof of this Lemma is provided in Appendix~\ref{app-a}.\footnote{If the underlying vector space has dimension \(d=1\), the invariance of statements in Equations~\((\ref{frinv1}-\ref{frinv3})\)
are not all the same; the statements of Equations~(\ref{frinv1}) and~(\ref{frinv3}) are the same and imply that the function in question is even while that of Equation~(\ref{frinv2}) implies that
the function in question is odd.} We will also use the following alternate version  Lemma~(\ref{lem0}), Part~(\ref{lem03}), employing the spectral decomposition and the projections of
Equation~(\ref{proj}):
\begin{quote}\begin{description}\label{lem03alt}
\item[\indent\normalfont{\textit{3. (Alternate)}}]
\textit{A matrix-valued function of position, \(\bs{F}:\mathbb{R}^d\rightarrow\mathbb{R}^{d\times d}\), is \hyperlink{isotropic}{isotropic} if and only if there are two scalar-valued functions
\(\xi\mapsto\lambda(\xi),\lambda_{\perp}(\xi):\mathbb{R}^+\rightarrow\mathbb{R}\) such that}\footnote{Observe that the complementary orthogonal projections \(\bs{P}(\bs{r})\) and
\(\bs{P}^{\perp}(\bs{r})\) are themselves \hyperlink{isotropic}{isotropic}.}
\begin{equation}\label{frinv7}
\bs{F}(\bs{r})=\lambda(|\bs{r}|^2)\bs{P}(\bs{r})+\lambda_{\perp}(|\bs{r}|^2)\bs{P}^{\perp}(\bs{r}).
\end{equation}
\end{description}
\end{quote}
In this version \(\lambda(|\bs{r}|^2)\) and \(\lambda_{\perp}(|\bs{r}|^2)\) are the eigenvalues of \(\bs{F}(\bs{r})\) with corresponding eigenspaces \(\{\bs{r}\}\) and
\(\{\bs{r}\}^{\perp}\).\footnote{Here, \(\{\bs{r}\}\) denotes the subspace spanned by \bs{r} and \(\{\bs{r}\}^{\perp}\) denotes its orthogonal compliment.} If these eigenvalues are distinct, then
\(\lambda(|\bs{r}|^2)\) is a simple eigenvalue; if not, \(\bs{F}(\bs{r})=\lambda(|\bs{r}|^2)\bs{1}=\lambda_{\perp}(|\bs{r}|^2)\bs{1}\). The latter occurs if and only if in addition to being
isotropic, \bs{F} satisfies: \(\bs{F}(\bs{r})=\bs{F}(\bs{Q}\bs{r})\), for all orthogonal \bs{Q}.\footnote{In this case \(\bs{F}(\bs{r})\) commutes with all orthogonal transformations, so the result
follows independently from Theorem~\ref{full} in Appendix~\ref{app-a}.}



\section{One- and Two-Particle Systems\label{sec-partsys}} Henceforth we will track a single particle, with position vector \(\bs{r}\), or an ordered pair of
particles, \((\bs{r}^{\alpha},\bs{r}^{\beta})\).  Following our convention, the pair-position vector \(\hat{\bs{r}}\) is the vector in \(\mathbb{R}^{2d}\) with the block form
\begin{equation}\label{2dvec}
    \hat{\bs{r}}=\begin{pmatrix}\bs{r}^{\alpha}\\\bs{r}^{\beta}  \end{pmatrix}.
\end{equation}
In this context, the indices will always take the values \(\alpha,\beta=1,2\). We begin with some basics and the
fundamental one-particle system.

\subsection{The Fundamental Kinetic Equation}
Suppose the random position vector of a large particle, \bs{r}, depends on time, \(t\mapsto\bs{r}(t)\). From the stochastic limit, described in Section~\ref{sec-pk}, we begin with the assumption
that the random position vector of a \emph{single} large particle obeys a kinematic equation of the form:
\begin{equation}\label{ito1}
    d\bs{r}(t)=\int_{\mathbb{R}^d}\bs{g}(\bs{r}(t)-\bs{q})w(d\bs{q},dt).
\end{equation}
This is a \textit{stochastic integral equation} of It\^{o} type. The integrator, \(w(d\bs{q},dt)\), is standard space-time Gaussian white noise,\footnote{See Walsh~\cite{WA}} which represents the
influence of the medium of small particles upon the large particles. The nature of the forcing on the large particles is determined through the \hypertarget{kernel}{\textit{forcing kernel}} \bs{g}.

The \hyperlink{kernel}{forcing kernel} \(\bs{g}:\mathbb{R}^d\rightarrow\mathbb{R}^d;\bs{r}\mapsto\bs{g}(\bs{r})\), is a vector-valued function of position derived from the underlying physics
governing the interactions between large and small particles. In effect, \bs{g} can be taken proportional to the negative spatial gradient of the velocity distribution of the small
particles.\footnote{If \(\psi(\bs{r})\) is a differentiable scalar-valued isotropic function of position \bs{r}, then its gradient \(\bs{\nabla}_{\bs{r}}\psi(\bs{r})\) is a isotropic vector-valued
function of position. In particular, \(\bs{\nabla}_{\bs{r}}\gamma(|\bs{r}|^2)=2\gamma'(|\bs{r}|^2)\bs{r}\). The converse is also true.} It plays the r\^{o}le of a constitutive function in
Equation~(\ref{ito1}) and in the sequel. As required by the \hyperlink{isotropic}{Principle of Material Frame-Indifference,} the distributions implicit in the derivation of Equation~(\ref{ito1})
must be independent under changes in external observer. Consequently, we start with the following.\footnote{For convenience, we suppress the subscript~\(\varepsilon\) of \bs{g} in
Section~\ref{sec-pk} except when a specific form of the \hyperlink{kernel}{forcing kernel} is used.}
\begin{hyp}[\hyperlink{isotropic}{Isotropy} of the \hyperlink{kernel}{Forcing Kernel}]\label{g-1}
The \hyperlink{kernel}{forcing kernel},~\bs{g}, is isotropic:
\begin{equation}\label{g-2}
\bs{g}(\bs{Qr})=\bs{Q}\bs{g}(\bs{r}),
\end{equation} for all orthogonal transformations  \bs{Q} in \(\mathbb{R}^d\).
\end{hyp}

By this hypothesis and Lemma~(\ref{lem0}), there is a scalar function, called the \hypertarget{forcingfunction}{\textit{forcing function,}}
\(\phi:\mathbb{R}^+\rightarrow\mathbb{R};\xi\,\mapsto\phi(\xi)\) such that\footnote{If \(\bs{r}\mapsto\varphi(|\bs{r}|^2)\) is the distribution function for the velocities of the small particles,
then \(\phi(|\bs{r}|^2)\) is proportional to \(-2\varphi'(|\bs{r}|^2)\). }
\begin{equation}\label{frinv11}
\bs{g}(\bs{r})=\phi(|\bs{r}|^2)\bs{r}.
\end{equation}

\begin{hyp}[Regularity of the \hyperlink{kernel}{Forcing Kernel}]\label{g0}
The scalar-valued forcing function \(\phi\) in Equation~(\ref{frinv11}) is positive, decreasing, and sufficiently regular so that the \hyperlink{kernel}{forcing kernel}, \bs{g}, satisfies:
\begin{enumerate}
\item \bs{g} is twice continuously differentiable,
\item all partial derivatives up through order~\(2\) of all components of \bs{g} are square integrable (over \(\mathbb{R}^d\)),
\item  \(|\bs{g}|^n,\,1\leq n\leq 4\) is integrable (over \(\mathbb{R}^d\)), and
\item \(\lim_{|\bs{r}|\rightarrow\infty}|\bs{g}(\bs{r})|=0\).
\end{enumerate}
\end{hyp}

For a distinct pair of large particles indexed by \(\alpha\) and \(\beta\), \((\alpha\neq\beta)\), we posit the following two-particle kinematic system
\begin{equation}\label{ito2}
    \begin{pmatrix}d\bs{r}^{\alpha}(t)\\d\bs{r}^{\beta}(t) \end{pmatrix}
    =\begin{pmatrix}\int_{\mathbb{R}^d}\bs{g}(\bs{r}^{\alpha}(t)-\bs{q})w(d\bs{q},dt)\\
    \int_{\mathbb{R}^d}\bs{g}(\bs{r}^{\beta}(t)-\bs{q})w(d\bs{q},dt)\end{pmatrix}.
\end{equation}
The system of Equations~(\ref{ito2}) is presented as two copies of equation~(\ref{ito1}) and, as such, does not appear to be coupled. Nevertheless, the two equations are \emph{stochastically
coupled} through the common white noise integrator, \(w(d\bs{q},dt)\),  which represents the medium of small particles in which the two large particles move.\footnote{The nature of this coupling
will be made explicit in Section~\ref{sec-stochlimit}.} If we adjoin a (random) initial state
\begin{equation}\label{ito3a}
    \begin{pmatrix}
    \bs{r}^{\alpha}(0)\\
    \bs{r}^{\beta}(0)
    \end{pmatrix}
    =\begin{pmatrix}
    \bs{r}^{\alpha}_0\\
    \bs{r}^{\beta}_0
    \end{pmatrix}
\end{equation}
to Equations~(\ref{ito2}), the corresponding initial-value problem can be shown to have a unique solution \(t\mapsto\hat{\bs{r}}(t),\,t \geq 0\), which is a Markov process in
\(\mathbb{R}^{2d}\).\footnote{Kotelenez~\cite{KO5, KO4}} We express the initial-value problem~(\ref{ito2}, \ref{ito3a}) succinctly by
\begin{equation}\label{ito3}
    \begin{split}
    d\hat{\bs{r}}(t)
    &=\int_{\mathbb{R}^d}\hat{\bs{g}}(\hat{\bs{r}}(t)-\hat{\bs{q}})w(d\bs{q},dt),\,t\geq 0,\\
    \hat{\bs{r}}(0)&=\hat{\bs{r}}_0,
    \end{split}
\end{equation}
where

    \begin{equation}\label{ito5}
    \hat{\bs{g}}(\hat{\bs{y}})=\hat{\bs{g}}\bigg(\begin{pmatrix}\bs{y}^{\alpha}\\\bs{y}^{\beta}\end{pmatrix}\bigg):=\begin{pmatrix}\bs{g}(\bs{y}^{\alpha})\\
    \bs{g}(\bs{y}^{\beta})
    \end{pmatrix}.
    \end{equation}
In Equation~(\ref{ito3}), and below in Equation~(\ref{ito4}), \(\hat{\bs{q}}\defn\begin{pmatrix}\bs{q}\\\bs{q}\end{pmatrix}\), which emphasizes that the white noise integrator is the same for both
large particles.

\begin{com}
We  systematically consider a solvent containing just two large particles. But the structure we propose can be extended in a natural way to include any finite number of large particles.
\end{com}

For a single large particle, indexed by \(\alpha\), Equation~(\ref{ito1}), together with the initial condition \(\bs{r}^{\alpha}_0\), is the same as the stochastic integral
equation\footnote{Provided the driving noise, \(w(d\bs{q},dt)\), and the initial data, \(\bs{r}^{\alpha}_0\), are independent.}
\begin{equation}\label{ito1a}
    \bs{r}^{\alpha}(t)=\bs{r}^{\alpha}_0+\int_0^t\int_{\mathbb{R}^d}\bigg(\bs{g}(\bs{r}^{\alpha}(s)-\bs{q})\bigg)w(d\bs{q},ds).
\end{equation}
For the pair of distinct large particles, the initial value problem, given by the system of Equations~(\ref{ito3}), is equivalent to the single stochastic integral equation
\begin{equation}\label{ito4}
    \hat{\bs{r}}(t)=\hat{\bs{r}}_0+\int_0^t\int_{\mathbb{R}^d}\bigg(\hat{\bs{g}}(\hat{\bs{r}}(s)-\hat{\bs{q}})\bigg)w(d\bs{q},ds).
\end{equation}
For convenience, we write \(\bs{m}^{\alpha}(t)\) for the stochastic integral involving the process
 \(\bs{r}^{\alpha}(\cdot)\) on the right-hand
side of Equation~(\ref{ito1a}); thus\footnote{Our assumptions imply that the integrals defining the process \(\bs{m}^{\alpha}(\cdot)\) in~(\ref{ito1b}) are continuous, square-integrable, martingales
whenever the processes \(\bs{r}^{\alpha}(\cdot)\) are adapted (by their histories). The letter \bs{m} signifies \textit{\bs{m}artingale}.}
\begin{equation}\label{ito1b}
    \bs{m}^{\alpha}(t):=\int_0^t\int_{\mathbb{R}^d}\bigg(\bs{g}(\bs{r}^{\alpha}(s)-\bs{q})\bigg)w(d\bs{q},ds).
\end{equation}
If the process \(\bs{r}^{\alpha}(\cdot)\) in Equation~(\ref{ito1b}) is a solution to the evolutionary system Equation~(\ref{ito1a}) with random initial condition \(\bs{r}_0^{\alpha}\), we replace
\(\bs{r}^{\alpha}(s)\) in Equation~(\ref{ito1b}) with \(\bs{r}^{\alpha}(s,\bs{r}_0^{\alpha})\) in which case \(\bs{m}^{\alpha}(t)\) should properly be replaced by
\(\bs{m}^{\alpha}(t,\bs{r}_0^{\alpha})\), etc. Generally, we will suppress this dependence on the random initial data.

\begin{com}
It is important to bear in mind that in our model the space-time white noise,~\(w(d\bs{q},dt)\), is obtained as a scaling limit of small particle velocities acting in a short time scale in a small
box on the velocities of the large particles.\footnote{Kotelenez~\cite{KO3}} The significant difference between our model and the traditional one is that in the traditional model each large particle
is driven by its own independent Brownian motion, whereas in our model the large particles are driven by a  \emph{Brownian medium,} which is the \emph{same medium for all large particles.} The
marginal distributions associated with Brownian medium\footnote{These are obtained by convolution of \(\bs{g}(\cdot)\) with \(w(d(\bs{\cdot}),dt)\) and the initial conditions. See also our
discussion in Appendix~\ref{app-marg}.} for each large particle is a traditional Brownian motion, but the joint distribution of two or more large particles is \emph{not} Gaussian and, hence,
\emph{not} Brownian.
\end{com}

To make some computations more specific, we will use a \hyperlink{kernel}{forcing kernel} of the following specific form:
\begin{equation}\label{kernel1}
\bs{g}_{\varepsilon}(\bs{r})=\phi_{\varepsilon}(|\bs{r}|^2)\bs{r}=\kappa_{\varepsilon,d}e^{-\frac{|\bs{r}|^2}{2\varepsilon}}\bs{r},
\end{equation}
where \(\sqrt{\varepsilon}\) is a correlation length and \(\kappa_{\varepsilon,d}\) is a constant that depends on~\(\varepsilon\) and the physical dimension \(d\).\footnote{See
Comment~\ref{com-delcor}.} This special form of \bs{g} is induced by assuming a Maxwell distribution for the velocities of the small particles, which seems physically plausible. Note that the kernel
\(\bs{g}_{\varepsilon}\) is of the form given in Equation~(\ref{frinv11}) (or Equation~(\ref{eq3.3})) whose scalar part, the forcing function, is
\(\xi\mapsto\phi_{\varepsilon}(\xi)=\kappa_{\varepsilon,d}e^{-\frac{\xi^2}{2\varepsilon}}\), so it is \hyperlink{isotropic}{isotropic.}\footnote{The \hyperlink{isotropic}{isotropy} is expected,
since this formulation is based upon the basic physical laws and material assumptions governing the interaction of the small and large particles. Similar structures should obtain for any reasonable
(unimodal) distribution.} We call \(\bs{g}_{\varepsilon}\) the \hypertarget{maxker}{\textit{Maxwell kernel.}} In the sequel we continue to use the symbol \(\bs{g}\), reserving
\(\bs{g}_{\varepsilon}\) specifically for the Maxwell kernel of Equation~(\ref{kernel1}).


Return to the unique smooth forward flow induced by the initial value problem of Equations~(\ref{ito2}, \ref{ito3a}) (or the equivalent Equation~(\ref{ito4})). Since the two equations in
Equation~(\ref{ito2}) are the same, namely copies of Equation~(\ref{ito1}), they each generate the same flow in \(\mathbb{R}^d\) distinguished only through the initial condition for the particle
\bs{r} initially located at \( \bs{r}^{\alpha}_0\):
\begin{equation}\label{flow1}
    t\mapsto\bs{r}^{\alpha}(t)\defn\bs{r}(t,\bs{r}^{\alpha}_0).
\end{equation}
For a pair of large particles, we have the induced flow in \(\mathbb{R}^{2d}\) of Equation~(\ref{ito4}):
\begin{equation}\label{flow2}
    t\mapsto\hat{\bs{r}}(t)\defn\hat{\bs{r}}(t,\hat{\bs{r}}_0).
\end{equation}
Whenever we need to emphasize the role of the driving noise \(w(d\bs{q},dt)\), we write \(\bs{r}(t,\bs{r}^{\alpha}_0;w)\) for \(  \bs{r}(t,\bs{r}^{\alpha}_0)\);
 similarly for the pair-process, we write \( \hat{\bs{r}}(t,\hat{\bs{r}}_0;w) \) for \( \hat{\bs{r}}(t,\hat{\bs{r}}_0)
\),
where the driving noise \(w(d\bs{q},dt)\) is the same for both particles.

\begin{com}
The significance of the conclusion above is its connection with the motion of a large particle in the interacting particle system of many small and some large particles. The forward flow described
above for large particles is a stochastic limit of the motion of the large particles as the number of small particles becomes infinite; the evolution of the positions of the large particles follows
an Einstein-Smoluchowski model.\footnote{The limit is distributional in the sense that, for each \(\alpha\), the distribution associated with the motion of the large particle,
\(\bs{r}_n^{\alpha}(\cdot)  \), in the presence of the background of \(n\) small particles, \(\bs{r}_n^{\alpha}(\cdot)\), (weakly) approaches those associated with  \(\bs{r}^{\alpha}(\cdot)\) as
\(n\rightarrow\infty\). (See Kotelenez~\cite{KO3}.)}
\end{com}

We proceed to describe the properties of this stochastic limit. In particular, we will characterize the joint probability density in \(\mathbb{R}^d\) for a pair of distinct large particles.

\subsection{Properties of the Stochastic Limit}\label{sec-stochlimit}As emphasized earlier, the motions of each large particle (given by an
appropriate \textit{marginal} of the solutions to the system in Equation~(\ref{ito4})\footnote{These are described in Appendix~\ref{app-marg}.}) is Brownian, provided the initial state is
deterministic; however, we will see that the joint motion of the pair is not Brownian. 

First we observe that a pair of large particles that are initially distinct will almost never coincide. More precisely, an argument due to Dawson shows that if\footnote{Private communication.}
\begin{equation}\label{meet1}
    E[|\bs{r}^{\alpha}(0)-\bs{r}^{\beta}(0)|^{-2}]>0,
\end{equation}
then, for any \(T>0\),
\begin{equation}\label{meet2}
    \mathcal{P}[\{\omega\in\Omega:\exists t\in[0,T]\text{ such that }\bs{r}^{\alpha}(t,\omega)=\bs{r}^{\beta}(t,\omega)\}]=0:
\end{equation}
\textit{The probability is zero that two particles, initially distinct, ever coincide in a finite time interval.}

  Define the \((d\times d)-\)matrix
\(\bs{C}\) by
\begin{equation}\label{coeff1}
     \bs{C}:=\int_{\mathbb{R}^d}\bs{g}(\bs{q})\bs{g}^T(\bs{q})\,d\bs{q}.
\end{equation}
Now consider the process \(\bs{m}^{\alpha}(\cdot)\), defined in Equation~(\ref{ito1b}). By the translation-invariance of the integrals, it follows that, for each \(\alpha\), \(\bs{m}^{\alpha}(t)\)
is a \(d\)-dimensional Brownian motion with incremental covariance matrix \(\bs{C}\).\footnote{This follows from a \(d\)-dimensional version of Paul L\'{e}vy's Theorem (see Ethier and
Kurtz~\cite{ET}, Chapter 7, Theorem 1.1 or Theorem~\ref{L-I} in Appendix~\ref{app-qv}.)} By its construction, \bs{C} is non-negative definite and symmetric. It will follow, from Lemma~\ref{posdef},
that the \hyperlink{isotropic}{isotropy} of \bs{g} guarantees that \bs{C} is positive definite (non-degenerate).


For a process consisting of a pair of particles, \((\bs{r}^{\alpha}(\cdot),\bs{r}^{\beta}(\cdot))\), the \textit{cross quadratic variation of the processes} \(\bs{m}^{\alpha}(\cdot)\) with
\(\bs{m}^{\beta}(\cdot)\), denoted by \(\langle\negthinspace\langle\bs{m}^{\alpha},\bs{m}^{\beta}\rangle\negthinspace\rangle(\cdot)\), is a well-defined \((d\times d)\)-matrix valued process.
Definitions of quadratic variation and  cross quadratic variation are given in Appendix~\ref{app-qv}.

The properties of the white noise, \(w(d\bs{q},ds)\), imply
\begin{equation}\label{qvqr1}
    \begin{split}
    \langle\negthinspace\langle\bs{m}^{\alpha},\bs{m}^{\beta}\rangle\negthinspace\rangle(t)
    &=\int_0^t\int_{\mathbb{R}^d}\bs{g}(\bs{r}^{\alpha}(s)-\bs{q})\bs{g}^T(\bs{r}^{\beta}(s)-\bs{q})\,d\bs{q}\,ds,\\
    \end{split}
\end{equation}
which, using shift-invariance, is the same as
\begin{equation}\label{qvqr1a}
\langle\negthinspace\langle\bs{m}^{\alpha},\bs{m}^{\beta}\rangle\negthinspace\rangle(t)=\int_0^t
    \int_{\mathbb{R}^d}\bs{g}((\bs{r}^{\alpha}(s)-\bs{r}^{\beta}(s)-\bs{q})\bs{g}^T(-\bs{q})\,d\bs{q}\,ds.
\end{equation}
The essence of the computation leading to Equation~(\ref{qvqr1}) together with a discussion of some of the consequences that follow are also supplied in Appendix~\ref{app-qv}. Thus, assuming that
the two large particles were initially distinct in the sense that assumption~(\ref{meet1}) holds, the quadratic variation
\(\langle\negthinspace\langle\bs{m}^{\alpha},\bs{m}^{\beta}\rangle\negthinspace\rangle(\cdot)\) depends only on the process that is the \emph{difference} between the positions of the two large
particles, namely \( \bs{r}^{\alpha}(\cdot)-\bs{r}^{\beta}(\cdot) \). More important, \(\langle\negthinspace\langle\bs{m}^{\alpha},\bs{m}^{\beta}\rangle\negthinspace\rangle(\cdot)\) cannot vanish on
any interval, which implies that the processes  \(\bs{m}^{\alpha}(\cdot)\) and \(\bs{m}^{\beta}(\cdot)\) are correlated on every interval.\footnote{The fact that the matrix function
\(\langle\negthinspace\langle\bs{m}^{\alpha},\bs{m}^{\beta}\rangle\negthinspace\rangle(\cdot)\) cannot vanish on any interval follows from the characterization of \( \bs{D}^{\alpha\beta} \) in
Appendix~\ref{pairdiffsec} and the formula in Equation~(\ref{qvqrd3})below. In particular, the lateral component of \( \bs{D}^{\alpha\beta} \) can never vanish.} If the particles coincide, so
\(\alpha=\beta\), then \(\langle\negthinspace\langle\bs{m}^{\alpha},\bs{m}^{\beta}\rangle\negthinspace\rangle(t)=\langle\negthinspace\langle\bs{m}^{\alpha}\rangle\negthinspace\rangle(t)
=\langle\negthinspace\langle\bs{m}^{\beta}\rangle\negthinspace\rangle(t)=t\bs{C}=t\,c\bs{1}\), where \( \langle\negthinspace\langle\bs{m}\rangle\negthinspace\rangle \) denotes the quadratic
variation of the process \bs{m}.\footnote{See Appendix~\ref{app-qv}, especially Theorem~\ref{L-I}. We show below that \(\bs{C}=c\bs{1}\), for some \(c>0\).}

Equation~(\ref{qvqr1}) further implies that, in general, the \emph{joint} motion of two, initially distinct, large particles cannot be Gaussian and, \textit{a fortiori,} cannot be Brownian. However,
the motion of each large particle in the joint motion is Brownian if viewed as a \(d\)-dimensional \textit{marginal} process with deterministic initial conditions.\footnote{See the discussions in
Appendices~\ref{app-marg} and~\ref{app-qv}.} This becomes transparent when we examine the covariance process for the joint motion.

If \(\hat{\bs{r}}=\begin{pmatrix}\bs{r}^{\alpha}\\\bs{r}^{\beta}
\end{pmatrix}\) is a pair in \(\mathbb{R}^{2d}\equiv\mathbb{R}^{d}\times\mathbb{R}^{d}\) corresponding to two large particles, define \(\bs{D}^{\alpha\beta}(\hat{\bs{r}})\)
to be the \((d\times d)\)-matrix function on \(\mathbb{R}^{2d}\equiv\mathbb{R}^{d}\times\mathbb{R}^{d}\)
\begin{equation}\label{qvqrd}
\bs{D}^{\alpha\beta}(\hat{\bs{r}}):=\int_{\mathbb{R}^d}\bs{g}(\bs{r}^{\alpha}-\bs{q})\bs{g}^T(\bs{r}^{\beta}-\bs{q})\,d\bs{q}.
\end{equation}
Using shift invariance, this is the same as
\begin{equation}\label{qvqrd1}
\bs{D}^{\alpha\beta}(\hat{\bs{r}})=\int_{\mathbb{R}^d}\bs{g}((\bs{r}^{\alpha}-\bs{r}^{\beta})-\bs{q})\bs{g}^T(-\bs{q})\,d\bs{q};
\end{equation}
so
\begin{equation}\label{qvqrd2}
\bs{D}^{\alpha\beta}\bigg(\begin{pmatrix}\bs{r}^{\alpha}\\\bs{r}^{\beta}
\end{pmatrix}\bigg)=\bs{D}^{\alpha\beta}\bigg(\begin{pmatrix}0\\\bs{r}^{\alpha}-\bs{r}^{\beta}\end{pmatrix}\bigg)
\end{equation}
and \(\bs{D}^{\alpha\beta}\) is a function only of the difference \( (\bs{r}^{\alpha}-\bs{r}^{\beta})\in\mathbb{R}^d \). Observe that, from Equation~(\ref{qvqr1}), we have
\begin{equation}\label{qvqrd3}
\langle\negthinspace\langle\bs{m}^{\alpha},\bs{m}^{\beta}\rangle\negthinspace\rangle(t,\hat{\bs{r}}_0)=\int_0^t \bs{D}^{\alpha\beta}(\hat{\bs{r}}(s,\hat{\bs{r}}_0))\,ds,
\end{equation}
provided \(\hat{\bs{r}}(\cdot,\hat{\bs{r}}_0)\) is the pair-process determined by the stochastic evolutionary system in Equation~(\ref{ito4}).

Using Equation~(\ref{qvqrd2}) we see that if the two particles coincide, \(\bs{r}^{\alpha}=\bs{r}^{\beta}\), we have \(\bs{D}^{\alpha\alpha}(\hat{\bs{r}})=\bs{D}^{\beta\beta}(\hat{\bs{r}})=\bs{C}\),
where \bs{C} is the \emph{constant,} symmetric, positive definite covariance matrix associated with a single large particle given in Equation~(\ref{coeff1}). The \hyperlink{isotropic}{isotropy} of
\bs{g} implies that \bs{C} satisfies: \(\bs{Q}\bs{C}\bs{Q}^T=\bs{C}\) for all orthogonal transformations \(\bs{Q}\) on \(\mathbb{R}^d\); that is, \bs{C} commutes with all orthogonal transformations.
It follows from the Theorem~(\ref{full}) in Appendix~\ref{app-a} that \bs{C} is a constant, positive multiple of the identity matrix; that is, \(\bs{C}=c\bs{1},\,c>0\).\footnote{This gives another
argument using \hyperlink{isotropic}{isotropy} that \bs{C} is positive definite; for if \(c=0\) then \(\bs{C}=\bs{0}\), which is impossible.} Again, from Equation~(\ref{qvqrd2}), if the two
particles are distinct, \(\bs{r}^{\alpha}\not=\bs{r}^{\beta}\), the matrix \(\bs{D}^{\alpha\beta}(\hat{\bs{r}})\) is generally \emph{not constant} in~\(\hat{\bs{r}}\), for it depends specifically on
the difference, \( \bs{r}^{\alpha}-\bs{r}^{\beta} \), between the positions of the two large particles.

From its definition, Equation~(\ref{qvqrd}), we see that structurally \((\bs{D}^{\alpha\beta})^T=\bs{D}^{\beta\alpha}\). We have already observed that when \(\alpha=\beta\),
\(\bs{D}^{\alpha\alpha}=\bs{D}^{\beta\beta}=\bs{C} \) is symmetric. Less obvious is the fact that \hyperlink{isotropic}{isotropy} implies each \( \bs{D}^{\alpha\beta} \) is symmetric when
\(\alpha\not=\beta\). Thus
\begin{equation}\label{qvqrd3b}
(\bs{D}^{\alpha\beta})^T=\bs{D}^{\beta\alpha}=\bs{D}^{\alpha\beta}.
 \end{equation}
To reveal this symmetry, use shift invariance in the definition Equation~(\ref{qvqrd}), or any of the equivalent versions that follow, to obtain
\begin{equation}\label{qvqrd4}
\bs{D}^{\alpha\beta}(\hat{\bs{r}})=\int_{\mathbb{R}^d}\bs{g}\Big(\frac{1}{2}(\bs{r}^{\alpha}-\bs{r}^{\beta})-\bs{q}\Big)\bs{g}^T\Big(-\frac{1}{2}(\bs{r}^{\alpha}-\bs{r}^{\beta})-\bs{q}\Big)\,d\bs{q}.
\end{equation}
Now \hyperlink{isotropic}{isotropy} implies that \(\bs{g}(-\bs{y})=-\bs{g}(\bs{y})\), so Equation~(\ref{qvqrd4}) becomes
\begin{equation}\label{qvqrd5}
\bs{D}^{\alpha\beta}(\hat{\bs{r}})=\int_{\mathbb{R}^d}\bs{g}\Big(\bs{q}-\frac{1}{2}(\bs{r}^{\alpha}-\bs{r}^{\beta})\Big)\bs{g}^T\Big(\bs{q}+\frac{1}{2}(\bs{r}^{\alpha}-\bs{r}^{\beta})\Big)\,d\bs{q}.
\end{equation}
Structurally, its transpose is
\begin{equation}\label{qvqrd6}
(\bs{D}^{\alpha\beta})^T(\hat{\bs{r}})=\int_{\mathbb{R}^d}\bs{g}\Big(\bs{q}+\frac{1}{2}(\bs{r}^{\alpha}-\bs{r}^{\beta})\Big)\bs{g}^T\Big(\bs{q}-\frac{1}{2}(\bs{r}^{\alpha}-\bs{r}^{\beta})\Big)\,d\bs{q}.
\end{equation}
Finally, by changing the integration variable \bs{q} in Equation~(\ref{qvqrd6}) to \(-\bs{q}\) we recover \(\bs{D}^{\alpha\beta}(\hat{\bs{r}}) \) in Equation~(\ref{qvqrd4}).

For two  particles (\(\bs{r}^{\alpha},\,\bs{r}^{\alpha}\)) we define the corresponding \((2d\times 2d)\)-covariance matrix \(\hat{\bs{D}}(\hat{\bs{r}})\), given in block form by\footnote{Note that
   \(\hat{\bs{D}}(\hat{\bs{r}})
    =\int_{\mathbb{R}^d}\hat{\bs{g}}(\hat{\bs{r}}-\hat{\bs{q}})\hat{\bs{g}}^T(\hat{\bs{r}}-\hat{\bs{q}})\,d\bs{q},\)
where \(\hat{\bs{q}}\defn\begin{pmatrix}\bs{q}\\\bs{q}
\end{pmatrix}\).}

\begin{equation}\label{qyqr2}
   \begin{split}\hat{\bs{D}}(\hat{\bs{r}})
    &:=
    \begin{pmatrix}\bs{D}^{\alpha\alpha}(\hat{\bs{r}})&\bs{D}^{\alpha\beta}(\hat{\bs{r}})\\
    \bs{D}^{\beta\alpha}(\hat{\bs{r}})&\bs{D}^{\beta\beta}(\hat{\bs{r}})  \end{pmatrix}\\
    &=\begin{pmatrix}\bs{C}&\bs{D}^{\alpha\beta}(\hat{\bs{r}})\\
    \bs{D}^{\alpha\beta}(\hat{\bs{r}})&\bs{C}  \end{pmatrix}\\
    &=\begin{pmatrix}c\bs{1}&\bs{D}^{\alpha\beta}(\hat{\bs{r}})\\
    \bs{D}^{\alpha\beta}(\hat{\bs{r}})&c\bs{1}  \end{pmatrix}.
    \end{split}
\end{equation}
Structurally, the matrix \(\hat{\bs{D}}(\hat{\bs{r}})\) must be symmetric and non-negative definite; in addition, each block is symmetric. In Lemma~\ref{posdef}, we show that whenever the two
particles are distinct, \(\bs{r}^{\alpha}\not =\bs{r}^{\beta}\) the matrix \(\hat{\bs{D}}(\hat{\bs{r}})\) is positive definite (non-degenerate). Moreover, under our assumptions,
\(\bs{D}^{\alpha\beta}(\hat{\bs{r}})\rightarrow\bs{0}\) or, equivalently, \(\hat{\bs{D}}(\hat{\bs{r}})\rightarrow c\bs{1}  \) as \(|\bs{r}^{\alpha}-\bs{r}^{\beta}|\rightarrow\infty\). Observe that
the diagonal blocks of \(\hat{\bs{D}}(\hat{\bs{r}})\) are \emph{constant} but the off-diagonal blocks (cross terms) are \emph{non-constant} and \emph{non-zero} in \(\hat{\bs{r}}\);
they depend specifically on the difference \( \bs{r}^{\alpha}-\bs{r}^{\beta}\).

\begin{com}
For \emph{any} kernel, under the normalization of Comment~\ref{com-delcor}, \(c=\|\bs{C}\|=1\), the operator norm of \bs{C}. The Hilbert-Schmidt matrix norm of \(\bs{C}\) is then
\(|\bs{C}|=\sqrt{\bs{C}\bullet\bs{C}}=c\sqrt{d}=\sqrt{d}\). The covariance matrix then has the form
\begin{equation}\label{qyqr2a}
 \hat{\bs{D}}(\hat{\bs{r}})=\begin{pmatrix}\bs{1}&\bs{D}^{\alpha\beta}(\hat{\bs{r}})\\
    \bs{D}^{\alpha\beta}(\hat{\bs{r}})&\bs{1}  \end{pmatrix}.
\end{equation}
And, in the limit as \(\sqrt{\varepsilon}\downarrow 0\), \( \bs{D}^{\alpha\beta}(\hat{\bs{r}})\rightarrow\bs{0}\) or, equivalently, \( \hat{\bs{D}}(\hat{\bs{r}})\rightarrow\bs{1}\).
\end{com}

\subsection{The Generator for the Difference Process}\label{sec-gen}Recall that the pair-process  \(t\mapsto\hat{\bs{r}}(t,\hat{\bs{r}}_0)\) is the solution to the stochastic evolutionary
system in Equation~(\ref{ito4}); it is a homogeneous Markov process such that \(\hat{\bs{r}}(0,\hat{\bs{r}}_0)= \hat{\bs{r}}_0 \). This process is associated with a semigroup of linear operators
\(\{\hat{T}_t:t\geq 0\}\) through its \textit{transition probability function \(\hat{P}\):}
\begin{equation}\label{tp0}
    (t,\hat{\bs{x}},B)\mapsto
\hat{P}(t,\hat{\bs{x}},B):=\mathcal{P}[\hat{\bs{r}}(t,\hat{\bs{r}}_0)\in B\big|\hat{\bs{r}}_0=\hat{\bs{x}}].
\end{equation}
That is, \((t,\hat{\bs{x}},B)\rightarrow\hat{P}(t,\hat{\bs{x}},B)\) gives the conditional probability that the pair-process \(\hat{\bs{r}}(
    t,\hat{\bs{r}}_0)\) lies in the Borel set \(B\) at time \(t\) given that its state at time \(t=0\) was
\(\hat{\bs{r}_0}=\hat{\bs{x}}\).\footnote{Since the process is homogeneous with respect to time, the probability does not depend on the absolute time, but only on the \emph{time interval \(t\);}
that is,
\begin{equation*}
    (t,\hat{\bs{x}},B)\mapsto
\hat{P}(t,\hat{\bs{x}},B):=\mathcal{P}[\hat{\bs{r}}(s+t,\hat{\bs{r}}_0)\in B|\hat{\bs{r}}(s,\hat{\bs{r}}_0)=\hat{\bs{x}}],
\end{equation*}
for any \(t,s\geq 0\). Here, \(B\) denotes an arbitrary Borel
subset of \(\mathbb{R}^{2d}\).} Let \(\hat{f}\) be a bounded, measurable, real-valued function on \(\mathbb{R}^{2d}\). Define \(\hat{T}_t \hat{f}\) by the conditional expectation given through
\begin{equation}\label{tp2}
    (\hat{T}_t
    \hat{f})(\hat{\bs{x}})\equiv E_{\hat{\bs{x}}}[\hat{f}(\hat{\bs{r}}(t,\hat{\bs{r}}_0))]:=E[\hat{f}(\hat{\bs{r}}(t,\hat{\bs{r}}_0))\big|\hat{\bs{r}}_0=\hat{\bs{x}}]=\int_{\mathbb{R}^{2d}}\hat{f}(\hat{\bs{y}})\hat{P}(t,\hat{\bs{x}},d\hat{\bs{y}});
\end{equation}
so \((\hat{T}_t \hat{f})(\hat{\bs{x}})\) is the expected value of \(\hat{f}(\hat{\bs{r}}(t,\hat{\bs{r}}_0))\) given that \(\hat{\bs{r}}(0,\hat{\bs{r}}_0)=\hat{\bs{r}}_0  =\hat{\bs{x}}\).  In
particular, if \(\hat{f}=1_{B}\), the indicator function of the Borel set \(B\), then \((\hat{T}_t 1_{B})(\hat{\bs{x}})=\hat{P}(t,\hat{\bs{x}},B)\), which means that the transition probability can
be recovered from the semigroup.

The (infinitesimal) generator, \(\hat{A}\), of this semigroup operating on \(\hat{f}\) at \(\hat{\bs{x}} \) is the \emph{mean instantaneous rate of change of
\(\hat{f(}\hat{\bs{r}}(t,\hat{\bs{r}}_0))\) at \(t=0\) given that \(\hat{\bs{r}}(0,\hat{\bs{r}}_0)=\hat{\bs{x}} \)}:\footnote{The limit, whenever it exists, is strong in the sense that it is with
respect to the norm \(||\hat{f}||=\sup_{\hat{\bs{x}}\in\mathbb{R}^{2d}}\{| \hat{f}(\hat{\bs{x}}) |\}\).}
\begin{equation}\label{tp3}
    (\hat{A}\hat{f})(\hat{\bs{x}}):=\lim_{t\downarrow 0}\frac{1}{t}\big((\hat{T}_t \hat{f})(\hat{\bs{x}})-\hat{f}(\hat{\bs{x}})\big).
\end{equation}
The generator, \( \hat{A} \), of the semigroup \(\{\hat{T}_t:t\geq 0\}\) is determined explicitly through the covariance matrix  \(\hat{\bs{D}}(\hat{\bs{x}})\) of Equation~(\ref{qyqr2}).
\begin{thm}[Generator of the semigroup \(\{\hat{T}_t:t\geq 0\}\)]\label{semigrp}
The generator, \( \hat{A} \), of the semigroup \(\{\hat{T}_t:t\geq 0\}\) is determined by the second order elliptic differential operator, defined for twice continuously differentiable functions
\(\hat{f}\) on \(\mathbb{R}^{2d}\) that vanish at infinity, by\footnote{The domain of \(\hat{A}\), \(\mathcal{D}(\hat{A})\), densely contains the twice continuously differentiable functions on
\(\mathbb{R}^{2d}\) that vanish at infinity, \(\mathbb{C}^2_0(\mathbb{R}^{2d},\mathbb{R})\). See Ethier and Kurtz~\cite{ET}. }
\begin{equation}\label{g2}
\begin{split}
    (\hat{A}\hat{f})(\hat{\bs{x}}):&=\frac{1}{2}\sum_{l,m=1}^{2d}\hat{D}_{lm}(\hat{\bs{x}})\bigg(\frac{\partial^2}{\partial \hat{x}_l\partial
    \hat{x}_m}\hat{f}\bigg)(\hat{\bs{x}})\\
    &=\frac{1}{2}\sum_{\alpha,\beta=1}^2\sum_{i,j=1}^d  D_{ij}^{\alpha\beta}
    \bigg(\begin{pmatrix}\bs{x}^{1}\\\bs{x}^{2}  \end{pmatrix}\bigg)\bigg(\frac{\partial^2}{\partial x_i^{\alpha}\partial
    x_j^{\beta}}\hat{f}\bigg)\bigg(\begin{pmatrix}\bs{x}^{1}\\\bs{x}^{2}  \end{pmatrix}\bigg),
    \end{split}
\end{equation}
where \(\hat{\bs{D}}(\hat{\bs{x}})\) is the covariance matrix defined in Equation~(\ref{qyqr2}). Therefore we refer to \(\hat{\bs{D}}(\hat{\bs{x}})\) as the
\hyperlink{pairdiffusion}{\textit{diffusion matrix for the pair-process.}}
\end{thm}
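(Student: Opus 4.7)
The plan is to identify the generator via a standard application of It\^o's formula to the solution \(t\mapsto\hat{\bs{r}}(t,\hat{\bs{r}}_0)\) of Equation~(\ref{ito4}), combined with the explicit characterization of the cross quadratic variation already established in Equations~(\ref{qvqr1}) and~(\ref{qvqrd3}).

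First I would fix \(\hat{f}\in C^2_0(\mathbb{R}^{2d},\mathbb{R})\) and apply the multidimensional It\^o formula to \(\hat{f}(\hat{\bs{r}}(t,\hat{\bs{r}}_0))\). Since the drift in~(\ref{ito4}) is zero, the first-order contributions collect into a stochastic integral of the form
\begin{equation*}
\sum_{l=1}^{2d}\int_0^t\int_{\mathbb{R}^d}\frac{\partial\hat{f}}{\partial\hat{x}_l}\bigl(\hat{\bs{r}}(s,\hat{\bs{r}}_0)\bigr)\,\hat{g}_l\bigl(\hat{\bs{r}}(s,\hat{\bs{r}}_0)-\hat{\bs{q}}\bigr)\,w(d\bs{q},ds),
\end{equation*}
which is a mean-zero continuous square-integrable martingale by the hypotheses on \(\bs{g}\) and the boundedness of the first derivatives of~\(\hat{f}\). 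The second-order terms produce the usual sum \(\frac{1}{2}\sum_{l,m}\int_0^t \frac{\partial^2\hat{f}}{\partial\hat{x}_l\partial\hat{x}_m}(\hat{\bs{r}}(s,\hat{\bs{r}}_0))\,d\langle\!\langle\hat{r}_l,\hat{r}_m\rangle\!\rangle(s)\).

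Second, I would invoke Equation~(\ref{qvqrd3}), which identifies the cross quadratic variation of the component martingales as the time integral of the matrix-valued function \(\hat{\bs{D}}(\hat{\bs{r}}(s,\hat{\bs{r}}_0))\). In particular, \(d\langle\!\langle\hat{r}_l,\hat{r}_m\rangle\!\rangle(s)=\hat{D}_{lm}(\hat{\bs{r}}(s,\hat{\bs{r}}_0))\,ds\), so that
\begin{equation*}
\hat{f}(\hat{\bs{r}}(t,\hat{\bs{r}}_0))-\hat{f}(\hat{\bs{r}}_0)=M_t+\int_0^t\frac{1}{2}\sum_{l,m=1}^{2d}\hat{D}_{lm}(\hat{\bs{r}}(s,\hat{\bs{r}}_0))\frac{\partial^2\hat{f}}{\partial\hat{x}_l\partial\hat{x}_m}(\hat{\bs{r}}(s,\hat{\bs{r}}_0))\,ds,
\end{equation*}
with \(M_t\) a mean-zero martingale.

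Third, conditioning on \(\hat{\bs{r}}_0=\hat{\bs{x}}\) and taking expectations eliminates the martingale term. Dividing by \(t\) and letting \(t\downarrow 0\), I would use continuity of \(s\mapsto \hat{\bs{r}}(s,\hat{\bs{x}})\) at \(s=0\), continuity of \(\hat{\bs{D}}\) and of the second partials of~\(\hat{f}\), together with dominated convergence (guaranteed by \(\hat{f}\in C^2_0\) and the integrability bounds in Hypothesis~\ref{g0}), to identify
\begin{equation*}
\lim_{t\downarrow 0}\frac{1}{t}\bigl((\hat{T}_t\hat{f})(\hat{\bs{x}})-\hat{f}(\hat{\bs{x}})\bigr)=\frac{1}{2}\sum_{l,m=1}^{2d}\hat{D}_{lm}(\hat{\bs{x}})\frac{\partial^2\hat{f}}{\partial\hat{x}_l\partial\hat{x}_m}(\hat{\bs{x}}),
\end{equation*}
uniformly in \(\hat{\bs{x}}\). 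Rewriting the \(2d\times 2d\) sum in block form using the decomposition \(\hat{\bs{x}}=\begin{pmatrix}\bs{x}^1\\ \bs{x}^2\end{pmatrix}\) yields the second displayed form in Equation~(\ref{g2}), and density of \(C^2_0(\mathbb{R}^{2d},\mathbb{R})\) in the domain \(\mathcal{D}(\hat{A})\) completes the identification.

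The only technical obstacle I anticipate is ensuring the limit in~(\ref{tp3}) is strong (uniform in \(\hat{\bs{x}}\)) rather than just pointwise; this requires the uniform continuity of \(\hat{\bs{D}}\) on \(\mathbb{R}^{2d}\), which follows from its explicit convolution form in~(\ref{qvqrd}) together with the regularity and decay in Hypothesis~\ref{g0}, together with equicontinuity of the second partials of \(\hat{f}\in C^2_0\). Once this uniform control is in hand, the identification of \(\hat{A}\) on the core \(C^2_0(\mathbb{R}^{2d},\mathbb{R})\) follows in the standard manner of Ethier and Kurtz~\cite{ET}.
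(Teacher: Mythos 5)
Your argument is correct and follows essentially the same route as the paper's own proof in Appendix~\ref{app-semigrp}: apply It\^{o}'s formula to \(\hat{f}(\hat{\bs{r}}(t,\hat{\bs{r}}_0))\), use the identification \(\langle\negthinspace\langle\hat{\bs{m}}\rangle\negthinspace\rangle(t,\hat{\bs{r}}_0)=\int_0^t\hat{\bs{D}}(\hat{\bs{r}}(s,\hat{\bs{r}}_0))\,ds\) from Equation~(\ref{qvqrd3}), take the conditional expectation given \(\hat{\bs{r}}_0=\hat{\bs{x}}\) to kill the martingale term, and differentiate at \(t=0\). Your added attention to the uniformity of the limit and dominated convergence only fills in details the paper's outline leaves implicit.
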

An outline of the proof of this result is given in Appendix~\ref{app-semigrp}.

\begin{com}\label{com-diffnorm}
The normalization that we have used to comply with Desideratum~\ref{desid3b} is equivalent to the requirement that \(\bs{C}=\bs{1}\) and, in the limit as \(\varepsilon\downarrow 0\), the
infinitessimal generator \(\hat{A}\) becomes one half the Laplacian. That is,  \(\bs{D}^{\alpha\beta}\rightarrow\bs{0}\), as \(\varepsilon\downarrow 0\). In other words, in this limit, each particle
experiences a \emph{standard} Brownian motion and these motions are independent.

We could replace the requirement that the limititng generator be one half the Laplacian with the requirement that it be some other constant times the Laplacian, say Einstein's diffusion constant,
\(D\), times the Laplacian. Such a re-normalization will have no effect on our results.
\end{com}

We now focus on the difference-process \((\bs{r}^2(\cdot)-\bs{r}^1(\cdot))\) in \(\mathbb{R}^d\). The generator for this process can be extracted from the generator \(\hat{A}\) for the process in
\(\mathbb{R}^{2d}\), defined in Equation~(\ref{g2}), by means of an orthogonal transformation (rotation) in \(\mathbb{R}^{2d}\) followed by a projection. Here are the main steps. Define the
\((2d\times 2d)\) proper orthogonal matrix (rotation) \(\hat{\bs{R}}\) in block form by
\begin{equation}\label{g3}
    \hat{\bs{R}}=\frac{1}{\sqrt{2}}
    \begin{pmatrix}
        \bs{1}&\bs{1}\\
        -\bs{1}&\bs{1}
    \end{pmatrix}
\end{equation}
and the \((2d\times 2d)\) projection matrix \(\hat{\bs{P}}\) by the block form
\begin{equation}\label{g4}
    \hat{\bs{P}}=
    \begin{pmatrix}
        \bs{0}&\bs{0}\\
        \bs{0}&\bs{1}
    \end{pmatrix}.
\end{equation}
The matrix \(\hat{\bs{P}}\) determines the projection in \(\mathbb{R}^{2d}\) onto the subspace \( \bigg\{ \begin{pmatrix}\bs{x}^1\\
\bs{x}^2 \end{pmatrix}\in\mathbb{R}^{2d}: \bs{x}^1=\bs{0} \bigg\} \), which we will identify with \(\mathbb{R}^d\). That is, we identify \((\leftrightarrow)\) the \(\mathbb{R}^{2d}\) vector
\(\begin{pmatrix}\bs{0}\\\bs{b}  \end{pmatrix}\) with the \(\mathbb{R}^{d}\) vector \(\bs{b}\). In particular, we extract the (normalized) difference \(\frac{1}{\sqrt{2}}(\bs{r}^2-\bs{r}^1)\) from
\(\hat{\bs{R}}\) through the rotation followed by the projection:
\begin{equation}\label{g6}
\frac{1}{\sqrt{2}}(\bs{r}^2-\bs{r}^1)\leftrightarrow\hat{\bs{P}}\hat{\bs{R}}\hat{\bs{r}}.
\end{equation}
Write \(\check{\bs{x}}=\hat{\bs{R}}\hat{\bs{x}}\) for rotated vectors \(\hat{\bs{x}}\) in  \(\mathbb{R}^{2d}\). Then the semigroup \(\{\hat{T}_t:t\geq 0\}\) induces the ``rotated" semigroup
\(\{\check{T}_t:t\geq 0\}\)
whose ``rotated" generator we denote by \(\check{A}\). We will obtain a generator, \(\tilde{A}\), for the difference-process by extracting that part of \(\check{A}\) associated with the
difference-process. It is convenient, and perhaps more illuminating, to work directly through the original transition function, \(\hat{P}\), instead of the transition function, \(\check{P}\),
associated with the rotated semigroup, \(\{\check{T}_t:t\geq 0\}\).

 For any Borel set \(B\) in \(\mathbb{R}^d\), consider the cylinder set \(\mathbb{R}^d\times B\) in \(\mathbb{R}^{2d}\). Write \(\Gamma_B:=\hat{\bs{R}}^T(\mathbb{R}^d\times B)\) for the rotated
cylinder set. Now \(\hat{P}(t,\hat{\bs{r}},\Gamma_B)\) is the probability that the two particle system, which started at \(\hat{\bs{r}}_0\) at time \(t=0\), lies in \(\Gamma_B\) at time \(t\), so
\begin{equation}\label{gen8}
    \begin{split}
    \hat{P}(t,\hat{\bs{r}}_0,\Gamma_B)  &=\mathcal{P}[\hat{\bs{r}}(t,\hat{\bs{r}}_0)\in
    \hat{\bs{R}}^T(\mathbb{R}^d\times B)|\hat{\bs{r}}(0,\hat{\bs{r}}_0)=\hat{\bs{r}}_0]\\
    &=\mathcal{P}\bigg[\frac{1}{\sqrt{2}}
    \begin{pmatrix}
        \bs{r}^2(t,\bs{r}^2_0)+\bs{r}^1(t,\bs{r}^1_0)\\
        \bs{r}^2(t,\bs{r}^2_0)-\bs{r}^1(t,\bs{r}^1_0)\end{pmatrix}
    \in\mathbb{R}^d\times B]\\
        &=\mathcal{P}\bigg[\frac{1}{\sqrt{2}}(\bs{r}^2(t,\bs{r}^2_0)-\bs{r}^1(t,\bs{r}^1_0))\in B\bigg].
    \end{split}
\end{equation}
Recall that the solutions \(\bs{r}^1(\cdot,\bs{r}^1_0)\) and \(\bs{r}^2(\cdot,\bs{r}^2_0)\) have the \emph{same} driving noise \(w(d\bs{r},dt)\), which we emphasize by writing
\(\bs{r}^1(\cdot,\bs{r}^1_0;w)\) and \(\bs{r}^2(\cdot,\bs{r}^2_0;w)\).

Two random variables, say \(X\) and \(Y\), are said to be equivalent, written \(X\sim Y\), whenever they have the same distribution. The next result shows that the pair-process has the following
restricted translation invariance:\footnote{Kotelenez~\cite{KO4}}
\begin{lem}\label{lem1}
For all \(\bs{h}\) in \(\mathbb{R}^d\)
\begin{equation}\label{gen9}
    \begin{pmatrix}
    \bs{r}^1(\cdot,\bs{r}^1_0;w)+\bs{h}\\
    \bs{r}^2(\cdot,\bs{r}^2_0;w)+\bs{h}
    \end{pmatrix}
    \sim
    \begin{pmatrix}
    \bs{r}^1(\cdot,\bs{r}^1_0+\bs{h};w)\\
    \bs{r}^2(\cdot,\bs{r}^2_0+\bs{h};w)
    \end{pmatrix},
\end{equation}
considered as \(C([0,\infty),\mathbb{R}^{2d})\)-valued random variables.
\end{lem}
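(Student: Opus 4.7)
My plan is to exhibit a pathwise identity that, after a shift-of-variable in the space-time white noise, converts the left-hand side of~(\ref{gen9}) into a solution of the stochastic integral equations corresponding to the shifted initial data on the right-hand side. The equality of distributions will then follow from the shift-invariance in space of the white noise, together with pathwise uniqueness of the two-particle system~(\ref{ito4}).

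First, define the candidate processes
\begin{equation*}
    \tilde{\bs{r}}^{\alpha}(t):=\bs{r}^{\alpha}(t,\bs{r}^{\alpha}_0;w)+\bs{h},\qquad \alpha=1,2,
\end{equation*}
using the \emph{same} driving noise \(w(d\bs{q},ds)\) for both particles. Substituting into the defining stochastic integral equation~(\ref{ito1a}), I would rewrite each equation in the form
\begin{equation*}
    \tilde{\bs{r}}^{\alpha}(t)=(\bs{r}^{\alpha}_0+\bs{h})+\int_0^t\int_{\mathbb{R}^d}\bs{g}\big(\tilde{\bs{r}}^{\alpha}(s)-(\bs{q}+\bs{h})\big)\,w(d\bs{q},ds).
\end{equation*}
Next, I would perform the spatial change of variable \(\tilde{\bs{q}}:=\bs{q}+\bs{h}\) in the stochastic integral. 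Let \(\tilde{w}\) be the push-forward of \(w\) under this translation, that is, \(\tilde{w}(B\times[0,t]):=w((B-\bs{h})\times[0,t])\) for Borel \(B\subset\mathbb{R}^d\). Then for each \(\alpha\),
\begin{equation*}
    \tilde{\bs{r}}^{\alpha}(t)=(\bs{r}^{\alpha}_0+\bs{h})+\int_0^t\int_{\mathbb{R}^d}\bs{g}\big(\tilde{\bs{r}}^{\alpha}(s)-\tilde{\bs{q}}\big)\,\tilde{w}(d\tilde{\bs{q}},ds),
\end{equation*}
so the pair \((\tilde{\bs{r}}^{1},\tilde{\bs{r}}^{2})\) is a pathwise solution of the two-particle system~(\ref{ito4}) with initial data \((\bs{r}^1_0+\bs{h},\bs{r}^2_0+\bs{h})\) and driving noise~\(\tilde{w}\).

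Now I would invoke the spatial shift-invariance of Gaussian space-time white noise (stated in Appendix~\ref{app-white}): since \(\tilde{w}\) is obtained from \(w\) by a deterministic rigid translation in the space variable, \(\tilde{w}\) and \(w\) have the same law as random Gaussian measures on \(\mathbb{R}^d\times\mathbb{R}^+\). The key point is that the \emph{same} translation is applied to the noise driving both components, so equality in law holds \emph{jointly}. Finally, by pathwise uniqueness of solutions to~(\ref{ito4}) under Hypotheses~\ref{g-1} and~\ref{g0} (Kotelenez~\cite{KO5,KO4}), the map from the driving noise and initial data to the \(C([0,\infty),\mathbb{R}^{2d})\)-valued solution is measurable, so equality in distribution of the inputs \((w,\bs{r}^1_0+\bs{h},\bs{r}^2_0+\bs{h})\) and \((\tilde{w},\bs{r}^1_0+\bs{h},\bs{r}^2_0+\bs{h})\) transfers to equality in distribution of the solutions, giving~(\ref{gen9}).

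The main technical point to be careful about is the transfer step for the stochastic integral under the spatial shift: one must verify that the It\^{o} integral with respect to \(\tilde{w}\) is well-defined and agrees \emph{pathwise} (or at least almost surely) with the original integral after the change of integration variable. This is standard because \(\bs{g}\) is shift-equivariant under the change of spatial variable and the integrand is adapted with the requisite square-integrability from Hypothesis~\ref{g0}; nevertheless, it is the one place in the argument where a casual substitution could hide a subtle distributional assertion, so I would spell it out using the defining limit of the It\^{o} integral on simple integrands.
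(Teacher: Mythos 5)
Your argument is correct, and it is worth noting that the paper itself gives no proof of Lemma~\ref{lem1}: the statement is attributed in a footnote to Kotelenez~\cite{KO4}. The route you take --- shift the solution by \(\bs{h}\), absorb the shift into the spatial variable of the driving noise, observe that the translated noise \(\tilde{w}\) has the same law as \(w\) (jointly for both components, since the \emph{same} deterministic translation is applied to the single common integrator), and then transfer equality of laws to the solutions --- is the standard argument and is presumably the one behind the cited result, so your proposal fills in what the paper leaves to a reference. Two points deserve to be made explicit to close the argument cleanly. First, because the initial data \((\bs{r}^1_0,\bs{r}^2_0)\) may be random, the transfer step needs equality of the \emph{joint} law of (noise, initial data); this holds because \(\tilde{w}\) is a deterministic measurable function of \(w\) and the paper assumes the white noise is independent of the initial data, so \((\tilde{w},\bs{r}^1_0+\bs{h},\bs{r}^2_0+\bs{h})\sim(w,\bs{r}^1_0+\bs{h},\bs{r}^2_0+\bs{h})\). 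Second, the final step requires uniqueness in law for the two-particle system~(\ref{ito4}) (equivalently, a measurable noise-to-solution map obtained from pathwise uniqueness and existence via a Yamada--Watanabe type argument); under Hypotheses~\ref{g-1} and~\ref{g0} this is exactly what the cited results of Kotelenez provide, so the appeal is legitimate rather than circular. Your cautionary remark about verifying the change of spatial variable in the It\^{o} integral on simple integrands and extending by the \(L^2\)-isometry identifies the right technical pressure point, and it does go through because the translation acts only on the deterministic spatial argument and leaves the time filtration untouched.
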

As a consequence of Lemma~(\ref{lem1}), the transition probability in Equation~(\ref{gen8}) depends only upon the  \emph{(normalized) difference}
\(\bs{x}_0:=\frac{1}{\sqrt{2}}(\bs{r}^2_0-\bs{r}^1_0)\). Therefore, suppressing the zero subscript, we can define the following transition probability:
\begin{equation}\label{gen10}
    \tilde{P}(t,\bs{x},B):=\hat{P}(t,\hat{\bs{r}},\Gamma_B)\big|_{\frac{1}{\sqrt{2}}(\bs{r}^2-\bs{r}^1)=\bs{x}}
    =\hat{P}\bigg(t,\hat{\bs{R}}^T\begin{pmatrix}\bs{0}\\ \bs{x} \end{pmatrix},\Gamma_B\bigg).
\end{equation}
The transition probability, \(\tilde{P}\), is that part of the transition probability, \(\hat{P}\) (or~\(\check{P}\)), associated with the difference-process.\footnote{The transition probability \(
\tilde{P} \) may also be considered a marginal transition probability.} We now obtain the generator,~\(\tilde{A}\), associated with \(\tilde{P}\). It will turn out to be that obtained from
\(\hat{A}\) through the change-of-variables given by the rotation followed by projection.

Following our previous notational scheme, if \(\hat{f}\in C_0^2(\mathbb{R}^{2d},\mathbb{R})\), then \(\check{f}\in C_0^2(\mathbb{R}^{2d},\mathbb{R})\) is just
\(\check{f}=\hat{f}\circ\hat{\bs{R}}^T\). Furthermore, if \(\check{f}\bigg(\begin{pmatrix}\bs{x}^1\\\bs{x}^2
\end{pmatrix}\bigg)=\check{f}\bigg(\begin{pmatrix}\bs{0}\\\bs{x}^2
\end{pmatrix}\bigg)\), for all \(\bs{x}^1,\bs{x}^2\in\mathbb{R}^d\), we identify \((\leftrightarrow)\) the latter with the element
\(\tilde{f}(\bs{x}^2)\) in \(C_0^2(\mathbb{R}^d,\mathbb{R})\).

Using Definition~(\ref{qvqrd}), define the \((d\times d)\)-matrix valued \(\tilde{\bs{D}}\) function on \(\mathbb{R}^d\) by
\begin{equation}\label{gen11}
    \tilde{\bs{D}}(\bs{x}):=\bs{D}^{11}\bigg(\hat{\bs{R}}^T\begin{pmatrix}\bs{0}\\\bs{x}
\end{pmatrix}\bigg)-\bs{D}^{12}\bigg(\hat{\bs{R}}^T\begin{pmatrix}\bs{0}\\\bs{x}
\end{pmatrix}\bigg)=\bs{C}-\bs{D}^{12}\bigg(
\frac{1}{\sqrt{2}}\begin{pmatrix}\bs{-x}\\\bs{x}
\end{pmatrix}\bigg).
\end{equation}
Henceforth, we call \(\tilde{\bs{D}}(\bs{x})\) the \hypertarget{pairmatrix}{\textit{diffusion matrix for the difference-process.}}

From the definition of \(\hat{\bs{D}}\) in Equation~(\ref{qyqr2}) and the comments following it, recall that \(\bs{D}^{11}(\hat{\bs{r}})=\bs{D}^{22}(\hat{\bs{r}})=\bs{C}\), a constant matrix, and
\(\bs{D}^{21}(\hat{\bs{r}})=\bs{D}^{12}(\hat{\bs{r}})\), where the latter matrix is not constant and depends only on the normalized difference \(\bs{x}:=\frac{1}{\sqrt{2}}(\bs{r}^2-\bs{r}^1)\). So
\( \tilde{\bs{D}}(\bs{x}) \) defined in Equation~(\ref{gen11}) is

\begin{equation}\label{gen20}
    \tilde{\bs{D}}(\bs{x})= (\hat{\bs{P}}\hat{\bs{R}})\hat{\bs{D}}\bigg(\hat{\bs{R}}^T\begin{pmatrix}\bs{0}\\\bs{x}
\end{pmatrix}\bigg)(\hat{\bs{P}}\hat{\bs{R}})^T,
\end{equation}
the rotation and projection of \(\hat{\bs{D}}\).  It is important to note that the \hyperlink{isotropic}{isotropy} of \bs{g} implies that the diffusion matrix \( \tilde{\bs{D}}(\bs{x})\) is
isotropic:
\begin{equation}\label{frame1}
\tilde{\bs{D}}(\bs{x})=\bs{Q}^T\tilde{\bs{D}}(\bs{Q}\bs{x})\bs{Q},
\end{equation}
for every \bs{x} in \(\mathbb{R}^d\) and every orthogonal transformation \bs{Q} on \(\mathbb{R}^d\). Putting all this together yields
\begin{thm}[The Diffusion Matrix for the Difference-Process]\label{gen13}
The transition probability function \((t,\bs{x},B)\mapsto\tilde{P}(t,\bs{x},B)\), defined in Equation~(\ref{gen10}), generates the \emph{difference diffusion process,}  a Markov-Feller process in
\(\mathbb{R}^d\) whose generator, \(\tilde{A}\), is given by the second order elliptic operator

\begin{equation}\label{gen12}
    (\tilde{A}\tilde{f})(\bs{x}):=\frac{1}{2}\sum_{i,j=1}^{d}\tilde{D}_{ij}(\bs{x})\bigg(\frac{\partial^2}{\partial x_i\partial
    x_j}\tilde{f}\bigg)(\bs{x}),
\end{equation}
where the \hyperlink{pairmatrix}{diffusion matrix for the difference-process,} \( \tilde{\bs{D}}(\bs{x}) \), is defined in Equation~(\ref{gen11}) (or, equivalently, in Equation~(\ref{gen20})).
\end{thm}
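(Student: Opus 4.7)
The plan is to derive $\tilde{A}$ by performing a change of variables on the generator $\hat{A}$ of the pair-process via the rotation $\hat{\bs{R}}$, observing that the resulting ``rotated'' diffusion matrix decouples into a block-diagonal form, and then extracting the block corresponding to the difference-process. The Markov-Feller character of the difference-process will follow from the restricted translation invariance established in Lemma~\ref{lem1}.

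First, I would use Lemma~\ref{lem1} to justify that $\tilde{P}(t,\bs{x},B)$, as given in Equation~(\ref{gen10}), is genuinely the transition function of a time-homogeneous Markov process on $\mathbb{R}^d$. The point is that the translation invariance implies the distribution of $\bs{r}^2(t)-\bs{r}^1(t)$ depends on the initial positions only through $\bs{r}_0^2-\bs{r}_0^1$, so the normalized difference is itself a Markov process with a well-defined transition function, and is Feller because the pair-process is. Second, for $\hat{f}\in C^2_0(\mathbb{R}^{2d},\mathbb{R})$ set $\check{f}:=\hat{f}\circ\hat{\bs{R}}^T$. The chain rule gives $\nabla^2\hat{f}(\hat{\bs{R}}^T\check{\bs{x}})=\hat{\bs{R}}^T\nabla^2\check{f}(\check{\bs{x}})\hat{\bs{R}}$, so by the tensor identity $\bs{A}\bullet(\bs{Q}^T\bs{B}\bs{Q})=(\bs{Q}\bs{A}\bs{Q}^T)\bullet\bs{B}$ applied to Equation~(\ref{g2}),
\begin{equation*}
(\hat{A}\hat{f})(\hat{\bs{R}}^T\check{\bs{x}})=\tfrac{1}{2}\,\check{\bs{D}}(\check{\bs{x}})\bullet\nabla^2\check{f}(\check{\bs{x}}),\qquad \check{\bs{D}}(\check{\bs{x}}):=\hat{\bs{R}}\hat{\bs{D}}(\hat{\bs{R}}^T\check{\bs{x}})\hat{\bs{R}}^T.
\end{equation*}

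Next I would carry out the block computation of $\check{\bs{D}}$. Using the block form of $\hat{\bs{D}}$ in Equation~(\ref{qyqr2}) together with the block form of $\hat{\bs{R}}$ in Equation~(\ref{g3}), a direct calculation shows that all off-diagonal $(d\times d)$ blocks of $\check{\bs{D}}$ vanish and the lower-right block equals $c\bs{1}-\bs{D}^{12}\!\left(\hat{\bs{R}}^T\begin{pmatrix}\bs{0}\\ \bs{x}\end{pmatrix}\right)$, which is precisely $\tilde{\bs{D}}(\bs{x})$ as defined in Equation~(\ref{gen11}). This decoupling is the crucial structural point: it says the ``sum'' and ``difference'' coordinates of the rotated pair-process are, at the level of the infinitesimal covariance, uncorrelated.

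I would then restrict to test functions of the form $\check{f}(\bs{y},\bs{x})=\tilde{f}(\bs{x})$ with $\tilde{f}\in C^2_0(\mathbb{R}^d,\mathbb{R})$, identified via the projection $\hat{\bs{P}}$. For such $\check{f}$ all partial derivatives in the first block vanish, so the double sum in the rotated generator collapses to the single $d\times d$ sum over the lower-right block, yielding exactly the right-hand side of Equation~(\ref{gen12}). Because of the decoupling, the image $(\tilde{A}\tilde{f})(\bs{x})$ depends only on $\bs{x}$, so the subspace of ``difference-only'' functions is invariant under the rotated semigroup — hence $\tilde{A}$ really is the generator of the projected semigroup, not just of $\check{A}$ restricted to a subspace of inputs. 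Finally, the isotropy statement in Equation~(\ref{frame1}) is read off from the isotropy of $\bs{g}$ through Equation~(\ref{gen11}) and shift invariance of the defining integral for $\bs{D}^{12}$.

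The main obstacle I anticipate is the step of promoting the pointwise formula for $\tilde{A}\tilde{f}$ to a genuine identification of $\tilde{A}$ as the Feller generator of the difference-process: one has to check that the Markov property of the difference-process (from Lemma~\ref{lem1}), the decoupling of $\check{\bs{D}}$, and the invariance of ``difference-only'' functions under $\check{T}_t$ together imply that the semigroup defined by $\tilde{P}$ in Equation~(\ref{gen10}) has generator given by the restricted operator on a core such as $C^2_0(\mathbb{R}^d,\mathbb{R})$. Once that is done, the elliptic form of $\tilde{A}$ and the Markov-Feller property both follow; the remaining verifications (non-degeneracy of $\tilde{\bs{D}}$ away from $\bs{0}$, boundedness of $\tilde{A}\tilde{f}$, continuity, and the core property) reduce to the corresponding facts for $\hat{A}$ stated in Theorem~\ref{semigrp} and the properties of $\bs{D}^{12}$ established after Equation~(\ref{qvqrd6}).
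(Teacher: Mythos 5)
Your proposal is correct and follows essentially the same route as the paper: rotate the pair-generator by $\hat{\bs{R}}$, observe that $\check{\bs{D}}=\hat{\bs{R}}\hat{\bs{D}}\hat{\bs{R}}^T$ block-diagonalizes into $\bs{C}\pm\bs{D}^{12}$, restrict to functions of the difference coordinate alone, and invoke Lemma~\ref{lem1} to identify the restricted operator as the generator of $\tilde{P}$. Your extra care about promoting the pointwise formula to the Feller generator of the projected semigroup is exactly the step the paper itself flags as needing Lemma~\ref{lem1}, so there is no gap.
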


Observe that the differential operator in Equation~(\ref{gen12}) is precisely that which would be obtained from the formula in Equation~(\ref{g2}) by the change-of-variables given through the
rotation followed by the projection. Indeed, if we change coordinates in \(\mathbb{R}^{2d}\) by the rotation, \(\hat{\bs{x}}\rightarrow\check{\bs{x}}=\hat{\bs{R}}\hat{\bs{x}}\) in
Equation~(\ref{g2}) and denote the differential operator \(\hat{A}\), now expressed in the rotated coordinates, by \(\check{A}\), we get

\begin{equation}\label{gen16}
\begin{split}
    (\check{A}\check{f})(\check{\bs{x}}):&=\frac{1}{2}\sum_{l,m=1}^{2d}\check{D}_{lm}(\check{\bs{x}})
    \bigg(\frac{\partial^2}{\partial \check{x}_l\partial
    \check{x}_m}\check{f}\bigg)(\check{\bs{x}})\\
    &=\frac{1}{2}\sum_{i,j=1}^d  (C_{ij}+D_{ij}^{12})
    \bigg(\frac{1}{\sqrt{2}}\begin{pmatrix}-\bs{x}^2\\\bs{x}^2  \end{pmatrix}\bigg)\bigg(\frac{\partial^2}{\partial x_i^1\partial
    x_j^1}\check
    {f}\bigg)\bigg(\begin{pmatrix}\bs{x}^{1}\\\bs{x}^{2}  \end{pmatrix}\bigg)\\
    &+\frac{1}{2}\sum_{i,j=1}^d  (C_{ij}-D_{ij}^{12})
    \bigg(\frac{1}{\sqrt{2}}\begin{pmatrix}-\bs{x}^2\\\bs{x}^2  \end{pmatrix}\bigg)\bigg(\frac{\partial^2}{\partial x_i^2\partial
    x_j^2}\check
    {f}\bigg)\bigg(\begin{pmatrix}\bs{x}^{1}\\\bs{x}^{2}  \end{pmatrix}\bigg),
    \end{split}
\end{equation}
where  \(\check{\bs{D}}(\check{\bs{x}}):=\hat{\bs{R}}\hat{\bs{D}}(\hat{\bs{R}}^T\check{\bs{x}})\hat{\bs{R}}^T\) is the block-diagonal matrix valued function of
\(\check{\bs{x}}=\bigg(\begin{pmatrix}\bs{x}^{1}\\\bs{x}^{2}
\end{pmatrix}\bigg)\) given by

\begin{equation}\label{gen17}
   \begin{split}\check{\bs{D}}(\check{\bs{x}})
    &=
        \begin{pmatrix}
        \Bigg(\bs{C}+\bs{D}^{12} \bigg(\frac{1}{\sqrt{2}}\begin{pmatrix}\bs{x}^1-\bs{x}^2\\\bs{x}^1+\bs{x}^2  \end{pmatrix}\bigg)\Bigg)  &   \bs{0}\\
        \bs{0}  & \Bigg(\bs{C}-\bs{D}^{12} \bigg(\frac{1}{\sqrt{2}}\begin{pmatrix}\bs{x}^1-\bs{x}^2\\\bs{x}^1+\bs{x}^2
        \end{pmatrix}\bigg)\Bigg)
        \end{pmatrix}.\\
    \end{split}
\end{equation}
Since \(\bs{D}^{12}\) depends only on \(\bs{x}^2\), we may take \(\bs{x}^1=\bs{0}\).\footnote{Recall that \(\bs{x}^1=\frac{1}{\sqrt{2}}(\bs{r}^1+\bs{r}^2)\)) and
\(\bs{x}^2=\frac{1}{\sqrt{2}}(\bs{r}^1-\bs{r}^2)\)} Thus,
\begin{equation}\label{gen25}
   \bs{D}^{12}(\hat{\bs{R}}^T\check{\bs{x}})= \bs{D}^{12} \bigg(\frac{1}{\sqrt{2}}\begin{pmatrix}\bs{x}^1-\bs{x}^2\\\bs{x}^1+\bs{x}^2  \end{pmatrix}\bigg)
   =\bs{D}^{12} \bigg(\frac{1}{\sqrt{2}}\begin{pmatrix}-\bs{x}^2\\\bs{x}^2  \end{pmatrix}\bigg).
\end{equation}
\(\check{A}\) must be the generator of the rotated semigroup  \(\{\check{T}_t:t\geq 0\}\). Finally, if we apply \(\check{A}\) to functions of the form \(\tilde{f}\) and call the result
\(\tilde{A}\tilde{f}\) we recover Equation~(\ref{gen12}), provided the symbol \(\bs{x}^2\) is replaced by \(\bs{x}\). We need Lemma~(\ref{lem1}) to justify this last step, namely that we indeed have
the generator of \(\tilde{P}\).
\begin{define}\label{def-x}
Henceforth, \(\bs{x}\) will denote the \emph{normalized difference} \(\bs{x}:=\frac{1}{\sqrt{2}}(\bs{r}^2-\bs{r}^1) \) and~\(\xi\) will denote its magnitude: \(\xi:=|\bs{x}|\). We also use the term
\emph{separation} for this magnitude.
\end{define}
Since \(\hat{\bs{r}}\rightarrow\bs{D}^{12}(\hat{\bs{r}})\equiv\bs{D}^{12}\bigg(\begin{pmatrix}\bs{r}^1\\
\bs{r}^2  \end{pmatrix}\bigg)= \bs{D}^{12}\bigg(\begin{pmatrix}\bs{0}\\
\bs{r}^2-\bs{r}^1  \end{pmatrix}\bigg)\)  depends only on~\(\bs{x}\), we will usually identify it \((\leftrightarrow)\) with a function \(\bs{x}\rightarrow\bs{D}(\bs{x})\); that is
\begin{equation}\label{diffmatrix}
\bs{D}(\bs{x}):=\bs{D}^{12} \bigg(\frac{1}{\sqrt{2}}\begin{pmatrix}-\bs{x}\\ \bs{x}  \end{pmatrix}\bigg)=\bs{D}^{12} \bigg(\begin{pmatrix}\bs{0}\\ \sqrt{2}\bs{x}
\end{pmatrix}\bigg)
\end{equation}

Earlier we asserted that whenever \(\bs{r}^1 \not =\bs{r}^2\) the matrix \(\hat{\bs{D}}(\hat{\bs{r}})\) was  positive definite. It is equivalent to show that this assertion is true for the block
diagonal matrix \(\check{\bs{D}}(\check{\bs{r}})\). The following Lemma, proved in Appendix~\ref{app-c}, suffices.

\begin{lem}\label{posdef}
The two \((d\times d)\)-blocks \( \bs{C}\pm\bs{D}^{12}(\hat{\bs{r}})\leftrightarrow\bs{C}\pm\bs{D}(\bs{x}) \) are each positive definite whenever \(\bs{x}\not =\bs{0}\).\footnote{ Of course, if
\(\bs{r}^1=\bs{r}^2\), then \( \bs{C}+\bs{D}^{12}=2\bs{C} \) is positive definite while \( \bs{C}-\bs{D}^{12}=\bs{0}\), in which case \(\hat{\bs{D}}\) is non-negative definite but not positive
definite.}
\end{lem}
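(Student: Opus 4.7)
The plan is to write $\bs{C}\pm\bs{D}(\bs{x})$ as an integral of a rank-one, nonnegative matrix-valued density (a Gram-type kernel), which gives positive semi-definiteness for free, and then to rule out degeneracy using the decay at infinity from Hypothesis~\ref{g0} together with the explicit form $\bs{g}(\bs{q})=\phi(|\bs{q}|^2)\bs{q}$ with $\phi>0$.

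First I would use the symmetric formulation of $\bs{D}^{12}(\hat{\bs{r}})$ from Equation~(\ref{qvqrd4}): with $\bs{y}\defn\bs{r}^2-\bs{r}^1$ and the shorthand $\bs{g}_{\pm}(\bs{q})\defn\bs{g}(\bs{q}\mp\tfrac{1}{2}\bs{y})$, the matrix $\bs{D}^{12}(\hat{\bs{r}})$ is the symmetrized integral $\tfrac{1}{2}\int(\bs{g}_-\bs{g}_+^T+\bs{g}_+\bs{g}_-^T)\,d\bs{q}$. By shift-invariance of the Lebesgue integral in Equation~(\ref{coeff1}), one also has $\bs{C}=\int \bs{g}_{\pm}\bs{g}_{\pm}^T\,d\bs{q}$, hence $\bs{C}=\tfrac{1}{2}\int(\bs{g}_-\bs{g}_-^T+\bs{g}_+\bs{g}_+^T)\,d\bs{q}$. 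Adding and subtracting then yields the key identity
\begin{equation*}
\bs{C}\pm\bs{D}^{12}(\hat{\bs{r}})=\frac{1}{2}\int_{\mathbb{R}^d}\bigl(\bs{g}_-(\bs{q})\pm\bs{g}_+(\bs{q})\bigr)\bigl(\bs{g}_-(\bs{q})\pm\bs{g}_+(\bs{q})\bigr)^T d\bs{q}.
\end{equation*}
For any $\bs{v}\in\mathbb{R}^d$ this gives $\bs{v}^T(\bs{C}\pm\bs{D}^{12})\bs{v}=\tfrac{1}{2}\int |(\bs{g}_-\pm\bs{g}_+)^T\bs{v}|^2\,d\bs{q}\geq 0$, so both matrices are symmetric and positive semi-definite.

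Next I would argue non-degeneracy. Suppose $\bs{v}^T(\bs{C}\pm\bs{D}^{12})\bs{v}=0$; then $\bs{g}(\bs{q}-\tfrac{1}{2}\bs{y})^T\bs{v}\pm\bs{g}(\bs{q}+\tfrac{1}{2}\bs{y})^T\bs{v}=0$ for Lebesgue-a.e.\ $\bs{q}$, and hence for every $\bs{q}$ by the continuity of $\bs{g}$ (Hypothesis~\ref{g0}). Setting $h(\bs{q})\defn\bs{g}(\bs{q})^T\bs{v}$ and shifting, this becomes the functional equation $h(\bs{q})=\mp h(\bs{q}+\bs{y})$ on $\mathbb{R}^d$. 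Assuming $\bs{x}\neq\bs{0}$, and so $\bs{y}\neq\bs{0}$, iteration gives $|h(\bs{q})|=|h(\bs{q}+n\bs{y})|$ for every integer $n$, while Hypothesis~\ref{g0}(4) forces $h(\bs{q}+n\bs{y})\to 0$ as $n\to\infty$. Therefore $h\equiv 0$. Invoking the representation $\bs{g}(\bs{q})=\phi(|\bs{q}|^2)\bs{q}$ with $\phi>0$ (Equation~(\ref{frinv11}) and Hypothesis~\ref{g0}), the identity $\phi(|\bs{q}|^2)\bs{q}^T\bs{v}=0$ for all $\bs{q}$ forces $\bs{q}^T\bs{v}=0$ for all $\bs{q}$, whence $\bs{v}=\bs{0}$. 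This proves positive definiteness.

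The main technical obstacle is the passage from the functional equation $h(\bs{q})=\mp h(\bs{q}+\bs{y})$ to $h\equiv 0$: one must exploit the decay at infinity of Hypothesis~\ref{g0}(4), which is what rules out nontrivial (anti-)periodic solutions. Everything else is essentially a completion-of-squares identity for the kernel together with isotropy via Lemma~\ref{lem0}. The argument evidently breaks down precisely when $\bs{y}=\bs{0}$, i.e.\ when the two particles coincide, in which case $\bs{g}_-=\bs{g}_+$ and the \emph{minus} combination integrates to $\bs{0}$, consistent with the footnote following the lemma.
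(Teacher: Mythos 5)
Your proposal is correct and rests on the same central device as the paper's own proof in Appendix~\ref{app-c}: a completion-of-squares identity obtained from shift invariance, giving \(\bs{a}^T(\bs{C}\pm\bs{D}(\bs{x}))\bs{a}=\frac{1}{2}\int\big(\psi(-\bs{q})\pm\psi(-\sqrt{2}\bs{x}-\bs{q})\big)^2\,d\bs{q}\) with \(\psi=\bs{a}^T\bs{g}\), hence non-negative definiteness, followed by ruling out a nontrivial (anti-)periodic \(\psi\). The only divergence is in how degeneracy is excluded: the paper notes that a periodic function integrable over \(\mathbb{R}^d\) must vanish a.e.\ and then uses isotropy (\(\bs{Q}^T\hat{\bs{a}}\bullet\bs{g}(\bs{q})=0\) for all orthogonal \(\bs{Q}\)) to force \(\bs{g}\equiv\bs{0}\), a contradiction, whereas you use the decay hypothesis \(\lim_{|\bs{r}|\to\infty}|\bs{g}(\bs{r})|=0\) together with continuity to kill \(h\), and then the explicit representation \(\bs{g}(\bs{q})=\phi(|\bs{q}|^2)\bs{q}\) with \(\phi>0\) to conclude \(\bs{v}=\bs{0}\) directly; both routes are valid under Hypothesis~\ref{g0}, and yours trades the integrability/isotropy argument for the pointwise decay and positivity of the forcing function.
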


\subsection{The Diffusion Matrix for the Difference Process}\label{pairdiffsec}
Henceforth we will be concerned exclusively with the \hyperlink{pairmatrix}{diffusion matrix for the difference-process} or, succinctly, the  \hyperlink{pairmatrix}{diffusion matrix}; that is, with
the matrix function \(\bs{x}\mapsto\tilde{\bs{D}}(\bs{x}):=\bs{C}-\bs{D}(\bs{x})\). Earlier we saw that the \hyperlink{isotropic}{isotropy} of the \hyperlink{kernel}{forcing kernel} \bs{g} implied
that \bs{C} had the form \(\bs{C}=c\bs{1}\), for some constant \(c>0\). We now use the \hyperlink{isotropic}{isotropy} of \bs{g} to reveal the deeper structure of \(\bs{D}(\bs{x})\). We know from
Equation~(\ref{frame1}), that it, too, is isotropic; but we go further.

Using Equation~(\ref{qvqrd}) and \bs{x} of Definition~\ref{def-x} we easily see that
\begin{equation}\label{d1}
\bs{D}(\bs{x})=\int_{\mathbb{R}^d}\bs{g}\bigg(\bs{q}+\frac{\sqrt{2}}{2}\bs{x}\bigg)\bs{g}^T\bigg(\bs{q}-\frac{\sqrt{2}}{2}\bs{x}\bigg)\,d\bs{q}.
\end{equation}
For convenience,  temporarily set \(\bs{x}=\sqrt{2}\bs{y}\) and use the representation of \bs{g} in Equation~(\ref{frinv11}), to get
\begin{equation}\label{d2}
\bs{D}(\sqrt{2}\bs{y})=\int_{\mathbb{R}^d}(\bs{q}+\bs{y})(\bs{q}-\bs{y})^T\phi(|\bs{q}+\bs{y}|^2)\phi(|\bs{q}-\bs{y}|^2)        \,d\bs{q}.
\end{equation}
Since we have already shown that  \(\bs{D}(\sqrt{2}\bs{y})\) is symmetric, Equation~(\ref{d2}) reduces to
\begin{equation}\label{d3}
\bs{D}(\sqrt{2}\bs{y})=\int_{\mathbb{R}^d}\bs{q}\bs{q}^T\phi(|\bs{q}+\bs{y}|^2)\phi(|\bs{q}-\bs{y}|^2)\,d\bs{q}-\bs{y}\bs{y}^T\int_{\mathbb{R}^d}\phi(|\bs{q}+\bs{y}|^2)\phi(|\bs{q}-\bs{y}|^2)\,d\bs{q}.
\end{equation}
Define a symmetric-matrix-valued function of \bs{y} by
\begin{equation}\label{d4}
\bs{A}(\bs{y}):=\int_{\mathbb{R}^d}\bs{q}\bs{q}^T\phi(|\bs{q}+\bs{y}|^2)\phi(|\bs{q}-\bs{y}|^2)\,d\bs{q}
\end{equation}
and a scalar-valued function of \bs{y} by
\begin{equation}\label{d5}
b(\bs{y}):=\int_{\mathbb{R}^d}\phi(|\bs{q}+\bs{y}|^2)\phi(|\bs{q}-\bs{y}|^2)\,d\bs{q},
\end{equation}
so that
\begin{equation}\label{d6}
\bs{D}(\sqrt{2}\bs{y})=\bs{A}(\bs{y})-b(\bs{y})\bs{y}\bs{y}^T.
\end{equation}
A bit of algebra and a change-in-variables yields the following: For any orthogonal transformation \(\bs{Q}\) on \(\mathbb{R}^d\) we have
\begin{equation}\label{d7}
\bs{Q}\bs{A}(\bs{y})\bs{Q}^T=\bs{A}(\bs{Q}\bs{y})\text{ and }b(\bs{y})=b(\bs{Q}\bs{y});
\end{equation}
so each is isotropic in the sense of Definition~(\ref{def1}). From Lemma~(\ref{lem0}) (using the alternate form of Part~(\ref{lem03})) we see that \(\bs{A}(\bs{y})\) and \(b(\bs{y})\) must have the
form
\begin{equation}\label{d9}
\bs{A}(\bs{y})=\alpha(|\bs{y}|^2)\bs{P}(\bs{y})+\alpha_{\perp}(|\bs{y}|^2)\bs{P}^{\perp}(\bs{y}) \text{ and } b(\bs{y})=\beta(|\bs{y}|^2)
\end{equation}
for some scalar-valued functions \(\alpha,\alpha_{\perp},\beta:\mathbb{R}^+\rightarrow\mathbb{R};\xi\mapsto\alpha(\xi^2),\alpha_{\perp}(\xi^2),\beta(\xi^2)\). Since \(\phi\) is non-negative and not
identically zero, \(\bs{A}(\bs{y})\) is positive definite, so we can also conclude that, for every \(\bs{y}\not =\bs{0}\),
\begin{equation}\label{d9a}
\alpha(|\bs{y}|^2)>0\qquad\text{and}\qquad\alpha_{\perp}(|\bs{y}|^2)>0.
\end{equation}
At \(\bs{y}=\bs{0}\)
\begin{equation}\label{d9aa}
\alpha(0):=c>0\qquad\text{and}\qquad\alpha_{\perp}(0)=c>0.
\end{equation}
Of course, \(\beta(|\bs{y}|^2)>0\) for all \bs{y}, as well. Our regularity assumptions on the \hyperlink{kernel}{forcing kernel} \bs{g}, Hypotheses~\ref{g0},  also imply the limits
\begin{equation}\label{d9bb}
\lim_{|\bs{y}|\rightarrow\infty}\alpha(|\bs{y}|^2)=0,\quad\lim_{|\bs{y}|\rightarrow\infty}\alpha_{\perp}(|\bs{y}|^2)=0,\quad\text{and}\quad\lim_{|\bs{y}|\rightarrow\infty}|\bs{y}|^2\beta(|\bs{y}|^2)=0.
\end{equation}
Combining the results above, we see that \bs{D} has the special form
\begin{equation}\label{d9b}
\bs{D}(\sqrt{2}\bs{y})=\alpha_{\perp}(|\bs{y}|^2)\bs{P}^{\perp}(\bs{y})+(\alpha(|\bs{y}|^2)-|\bs{y}|^2\beta(|\bs{y}|^2))\bs{P}(\bs{y}).
\end{equation}

Putting it all together we see that the \hyperlink{pairmatrix}{diffusion matrix}, \(\tilde{\bs{D}}(\bs{x})\),  of Equation~(\ref{gen20}) must have the form given in the following.
\begin{thm}[Structure of the \hyperlink{pairmatrix}{Diffusion Matrix}]\label{pairdiff}
The \hyperlink{pairmatrix}{diffusion matrix} \(\tilde{\bs{D}}(\bs{x})=\bs{C}-\bs{D}(\bs{x})\) has the form
\begin{equation}\label{d10}
\tilde{\bs{D}}(\bs{x})=\sigma_{\perp}(|\bs{x}|^2)\bs{P}^{\perp}(\bs{x})+\sigma(|\bs{x}|^2)\bs{P}(\bs{x}),
\end{equation}
where \(\sigma,\sigma_{\perp}:\mathbb{R}^+\rightarrow\mathbb{R}\) are given in terms of \(\alpha\), \(\alpha_{\perp}\) and \(\beta\) by
\begin{equation}\label{d10a}
\begin{split}
\sigma_{\perp}(|\bs{x}|^2)  &=c-\alpha_{\perp}\bigg(\frac{1}{2}|\bs{x}|^2\bigg)\\
\sigma(|\bs{x}|^2)          &=c-\alpha\bigg(\frac{1}{2}|\bs{x}|^2\bigg)+\bigg(\frac{1}{2}|\bs{x}|^2\bigg)\beta\bigg(\frac{1}{2}|\bs{x}|^2\bigg).
\end{split}
\end{equation}
We also have
\begin{equation}\label{d10aa}
0<\sigma_{\perp}(\xi^2)<c,\text{ for }\xi>0,\qquad\sigma_{\perp}(0)=0,\qquad\lim_{\xi\rightarrow\infty}\sigma_{\perp}(\xi^2)=c,
\end{equation}
and
\begin{equation}\label{d10aaa}
0<\sigma(\xi^2),\text{ for }\xi>0,\qquad\sigma(0)=0,\qquad\lim_{\xi\rightarrow\infty}\sigma(\xi^2)=c.
\end{equation}
\end{thm}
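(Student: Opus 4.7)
The plan is to read the claimed decomposition of $\tilde{\bs{D}}$ directly off the representation of $\bs{D}(\sqrt{2}\bs{y})$ already obtained in Equation~(\ref{d9b}), and then to verify the endpoint values, asymptotic limits, and the strict sign bounds on $\sigma$ and $\sigma_\perp$ using the inputs already collected for $\alpha$, $\alpha_\perp$, $\beta$ together with Lemma~\ref{posdef}.

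For the decomposition, substitute $\bs{y} = \bs{x}/\sqrt{2}$ in Equation~(\ref{d9b}); since the complementary orthogonal projections depend only on direction, $\bs{P}(\bs{x}/\sqrt{2}) = \bs{P}(\bs{x})$ and $\bs{P}^\perp(\bs{x}/\sqrt{2}) = \bs{P}^\perp(\bs{x})$, so
\begin{equation*}
\bs{D}(\bs{x}) = \alpha_\perp(\tfrac{1}{2}|\bs{x}|^2)\bs{P}^\perp(\bs{x}) + \Bigl(\alpha(\tfrac{1}{2}|\bs{x}|^2) - \tfrac{1}{2}|\bs{x}|^2\,\beta(\tfrac{1}{2}|\bs{x}|^2)\Bigr)\bs{P}(\bs{x}).
\end{equation*}
Writing $c\bs{1} = c\bs{P}(\bs{x}) + c\bs{P}^\perp(\bs{x})$ and subtracting yields Equation~(\ref{d10}) with $\sigma_\perp$ and $\sigma$ exactly as defined in Equation~(\ref{d10a}).

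For the endpoint and limit values, at $\bs{x} = \bs{0}$ Equation~(\ref{d4}) gives $\bs{A}(\bs{0}) = \int_{\mathbb{R}^d}\bs{q}\bs{q}^T\phi(|\bs{q}|^2)^2\,d\bs{q} = \bs{C} = c\bs{1}$, so $\alpha(0) = \alpha_\perp(0) = c$, yielding $\sigma_\perp(0) = \sigma(0) = 0$. The limits as $\xi \to \infty$ are immediate from the three decay statements in Equation~(\ref{d9bb}). For the sign claims, the strict positivity of both $\sigma_\perp(\xi^2)$ and $\sigma(\xi^2)$ for $\xi>0$ follows from Lemma~\ref{posdef}: $\tilde{\bs{D}}(\bs{x}) = \bs{C} - \bs{D}(\bs{x})$ is positive definite for $\bs{x}\neq\bs{0}$, and its eigenvalues are precisely $\sigma_\perp(|\bs{x}|^2)$ (with multiplicity $d-1$, on $\{\bs{x}\}^\perp$) and $\sigma(|\bs{x}|^2)$ (simple, on $\{\bs{x}\}$), so both must be strictly positive. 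The strict upper bound $\sigma_\perp(\xi^2) < c$ follows by inspection of Equation~(\ref{d4}): since $\phi > 0$ everywhere, the integrand defining $\alpha_\perp$ is pointwise positive, giving $\alpha_\perp(\tfrac{1}{2}\xi^2) > 0$, hence $\sigma_\perp(\xi^2) = c - \alpha_\perp(\tfrac{1}{2}\xi^2) < c$.

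There is no single hard step, since the real work -- the isotropic representation of $\bs{A}$ and $b$ via Lemma~\ref{lem0}, the symmetry of $\bs{D}(\bs{x})$, and the positive definiteness of $\bs{C}\pm\bs{D}(\bs{x})$ -- has already been done in the preceding subsection. The one place where one might be tempted into extra work is the strict positivity of $\sigma$: a direct Cauchy--Schwarz attempt on the integrals for $\alpha$ and $\beta$ would demand additional monotonicity hypotheses on $\phi$, so routing the argument through Lemma~\ref{posdef} is the clean move.
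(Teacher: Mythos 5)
Your proposal is correct and follows essentially the same route as the paper: it reads the decomposition off the isotropic representation in Equation~(\ref{d9b}) with \(\bs{y}=\bs{x}/\sqrt{2}\), uses \(\bs{C}=c\bs{1}=c\bs{P}(\bs{x})+c\bs{P}^{\perp}(\bs{x})\) to get Equations~(\ref{d10},~\ref{d10a}), and then obtains the endpoint values and limits from Equations~(\ref{d9aa},~\ref{d9bb}) and the strict positivity of the eigenvalues from Lemma~\ref{posdef}, with \(\sigma_{\perp}<c\) coming from \(\alpha_{\perp}>0\) as in Equation~(\ref{d9a}). Your remark that routing positivity through Lemma~\ref{posdef} avoids needing extra monotonicity of \(\phi\) is exactly the paper's intent.
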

Note that \(\sigma(|\bs{x}|^2)\) and \(\sigma_{\perp}(|\bs{x}|^2)\) are the eigenvalues of \(\tilde{\bs{D}}(\bs{x})\). Unless they are equal, \(\sigma(|\bs{x}|^2)\) is a simple eigenvalue and
\(\sigma_{\perp}(|\bs{x}|^2)\) has geometric multiplicity \(d-1\). We refer to \(\sigma\) as the \textit{radial eigenvalue} and \(\sigma_{\perp}\) as the \textit{lateral eigenvalue.}

\begin{define}\label{d10a4}
We say the \hyperlink{pairmatrix}{diffusion matrix} \(\tilde{\bs{D}}\) is \hypertarget{raddom}{\emph{radially dominant}} whenever its eigenvalues satisfy:
\begin{equation}\label{d10aaaa}
\sigma_{\perp}(\xi^2)<\sigma(\xi^2),\quad\forall\xi>0.
\end{equation}
\end{define}
Radial dominance of \(\tilde{\bs{D}}\) will be shown later to guarantee that the difference-process is recurrent in two dimensions. The next result gives a sufficient condition for radial dominance
in terms of the \hyperlink{forcingfunction}{forcing function} \(\phi\).


\begin{thm}\label{d10a6}
If, in addition to the Hypothesis~\ref{g0}, the scalar-valued forcing function~\(\phi\) satisfies the logarithmic convexity condition \((\ln\phi)''\leq0\), the \hyperlink{pairmatrix}{diffusion
matrix} is \hyperlink{raddom}{radially dominant.}\footnote{This condition for guaranteing \hyperlink{raddom}{radial dominantce} is a bit strong. See Kotelenez~\cite{KO4} for a somewhat weaker
condition. The logarithmic convexity condition also guarantees that the functions \(\xi\mapsto\alpha(\xi^2)\) and  \(\xi\mapsto\alpha_{\perp}(\xi^2)\) are strictly decreasing.}
\end{thm}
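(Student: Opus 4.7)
The plan is to translate the statement into a scalar inequality, reinterpret it probabilistically, and then exploit the log-concavity hypothesis. By Theorem~\ref{pairdiff} and Equations~(\ref{d10a}), with $|\bs{y}|^2=\tfrac12\xi^2$, radial dominance is equivalent to
\begin{equation*}
\alpha(|\bs{y}|^2)-\alpha_\perp(|\bs{y}|^2)\;<\;|\bs{y}|^2\,\beta(|\bs{y}|^2),\qquad \bs{y}\neq\bs{0}.
\end{equation*}
Using isotropy I may fix $\bs{y}=y\bs{e}_1$, $y>0$, and write the inequality as $I(y)<0$, where
\begin{equation*}
I(y):=\int_{\mathbb{R}^d}\!\big[q_1^2-q_2^2-y^2\big]\,K(\bs{q};y)\,d\bs{q},\quad K(\bs{q};y):=\phi(|\bs{q}+y\bs{e}_1|^2)\,\phi(|\bs{q}-y\bs{e}_1|^2).
\end{equation*}
The clarifying framework is that $\mu_y:=K(\cdot;y)/\beta(y^2)$ is a log-concave probability measure on $\mathbb{R}^d$ with mean $\bs{0}$ (by $\bs{q}\mapsto-\bs{q}$ symmetry), whose parallel and perpendicular variances equal $\alpha(y^2)/\beta(y^2)$ and $\alpha_\perp(y^2)/\beta(y^2)$; the target simply says the anisotropy of $\mu_y$ never exceeds $y^2$.

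I next treat small $y$ by a Taylor expansion. Isotropy of $K(\cdot;0)$ gives $I(0)=0$, and the vanishing of $\partial_yK|_{y=0}$, forced by the $\bs{q}\mapsto-\bs{q}$ symmetry, gives $I'(0)=0$. Setting $h:=\ln\phi$, a direct computation yields
\begin{equation*}
\partial_y^2K(\bs{q};y)\big|_{y=0}=4\phi^2(|\bs{q}|^2)\big[h'(|\bs{q}|^2)+2q_1^2\,h''(|\bs{q}|^2)\big].
\end{equation*}
Converting to spherical coordinates and using $\int_{S^{d-1}}(\omega_1^2-\omega_2^2)\,d\sigma=0$ together with $\int_{S^{d-1}}(\omega_1^4-\omega_1^2\omega_2^2)\,d\sigma=\tfrac{2\sigma(S^{d-1})}{d(d+2)}>0$ collapses $I''(0)$ to the sum of a strictly negative multiple of $\int\phi^2\,d\bs{q}$ and a non-positive multiple of $\int_0^\infty r^{d+3}h''(r^2)\phi^2(r^2)\,dr$ (the latter being non-positive because $h''\le 0$). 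Hence $I''(0)<0$, and so $I(y)<0$ on some interval $(0,y_*)$.

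The main obstacle is extending this to all $y>0$; my plan is to prove strict monotonicity of $I$ on $(0,\infty)$. After the shift $\bs{u}=\bs{q}+y\bs{e}_1$, which recasts the measure into the symmetric form $\phi(|\bs{u}|^2)\phi(|\bs{u}-2y\bs{e}_1|^2)$ with mean $y\bs{e}_1$, the integrand of $I'(y)$ can be rewritten using the identity $u_1\phi'(|\bs{u}|^2)=\tfrac12\partial_{u_1}\phi(|\bs{u}|^2)$. A single integration by parts in $u_1$ kills the first-moment piece (which vanishes under the $u_1\mapsto 2y-u_1$ symmetry of the density) and leaves an integrand that, after regrouping via $\phi'=h'\phi$, factors through the sign hypotheses $h'\le 0$ and $h''\le 0$ into a manifestly non-positive expression. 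The delicate point is deducing \emph{strict} monotonicity from the non-strict hypothesis $(\ln\phi)''\le 0$: I expect strictness to follow from $\phi>0$ combined with $h''\not\equiv 0$, with the log-affine (Maxwell) case $h''\equiv 0$ handled separately, in which case $\alpha\equiv\alpha_\perp$ makes the inequality trivial. A conceptually cleaner alternative would apply the Brascamp-Lieb variance inequality to $\mu_y$ using $-\nabla_{\bs{q}}^2\ln\mu_y\succeq -2(h'_++h'_-)\bs{1}\succeq\bs{0}$, but the non-strict Hessian lower bound forces an extra truncation/limiting argument that I expect to be less direct than the integration-by-parts route.
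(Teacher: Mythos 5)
Your reduction is set up correctly: with $\bs{y}=y\bs{e}_1$, radial dominance is exactly $I(y):=\alpha(y^2)-\alpha_\perp(y^2)-y^2\beta(y^2)<0$ for $y>0$, and your Taylor analysis at $y=0$ is plausible. The genuine gap is the global step: $I$ cannot be strictly (or even weakly) decreasing on $(0,\infty)$. Under Hypothesis~\ref{g0} one has the limits in Equation~(\ref{d9bb}), namely $\alpha(|\bs{y}|^2)\to0$, $\alpha_\perp(|\bs{y}|^2)\to0$ and $|\bs{y}|^2\beta(|\bs{y}|^2)\to0$ as $|\bs{y}|\to\infty$, so $I(y)\to0$ at infinity for \emph{every} admissible kernel. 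A function with $I(0)=0$, $I<0$ somewhere, and $I(\infty)=0$ must turn around and increase back to zero, so the claimed ``manifestly non-positive'' integrand for $I'(y)$ cannot be correct in general. This is not a borderline strictness issue: already for the Maxwell kernel (where $\ln\phi$ is affine) one computes $\alpha\equiv\alpha_\perp$ and $I(y)=-\kappa_{\varepsilon,d}^2(\pi\varepsilon)^{d/2}\,y^2e^{-y^2/\varepsilon}$, which decreases only up to $y=\sqrt{\varepsilon}$ and then increases toward $0$; any nearby strictly log-concave $\phi$ behaves the same way at large $y$. So ``negative near $0$ plus monotone on $(0,\infty)$'' is structurally impossible, and your proposal as it stands does not reach all separations.

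What you are missing is the simplification that makes the paper's proof short: since $\beta(\xi^2)>0$ for all $\xi$, it suffices to prove the $\beta$-free inequality $\alpha(\xi^2)-\alpha_\perp(\xi^2)\le0$, i.e.\ that the variance of your measure $\mu_y$ \emph{along} the axis joining the two bumps never exceeds the transverse variance; strictness of $\sigma_\perp<\sigma$ then comes for free from the discarded positive term $\xi^2\beta(\xi^2)$. The paper proves $\alpha-\alpha_\perp\le0$ (Lemma~\ref{app-c-lem2}) by writing the difference in spherical coordinates via Lemma~\ref{lem-c-1}, observing that the angular weight $\bigl(\cos^2\varphi-\tfrac{1}{d-1}\sin^2\varphi\bigr)\sin^{d-2}\varphi$ is an exact derivative, integrating by parts in the polar angle, and then showing $\partial_\varphi\Phi(\xi,\zeta,\varphi)\ge0$ on $[0,\pi/2]$ by the mean value theorem applied to $(\ln\phi)'$ together with $(\ln\phi)''\le0$. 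If you want to salvage your probabilistic framing, aim your log-concavity argument at the variance comparison $\operatorname{Var}_{\mu_y}(q_1)\le\operatorname{Var}_{\mu_y}(q_2)$ rather than at the full inequality $I<0$; trying to control the $y^2\beta(y^2)$ term by monotonicity is the step that fails.
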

This result is proved in in Appendix~\ref{detstruct}.

The general properties for \(\tilde{\bs{D}}(\bs{x})\) are all typified by the \hyperlink{maxker}{Maxwell kernel}, for which \(\tilde{\bs{D}}_{\varepsilon}(\bs{x})\) can be computed explicitly.

\begin{maxwell}\label{maxpair}For the \hyperlink{maxker}{Maxwell kernel}, \(\bs{g}_{\varepsilon}\),  given by Equation~(\ref{kernel1}), the \hyperlink{pairmatrix}{diffusion matrix} \(\tilde{\bs{D}}_{\varepsilon}(\bs{x})\), defined in Equation~(\ref{gen11}), has the explicit form:
\begin{equation}\label{gen26}
    \begin{split}
    \tilde{\bs{D}}_{\varepsilon}(\bs{x})    &=\bs{C}_{\varepsilon}-\bs{D}_{\varepsilon}(\bs{x})\\
    &=\sigma_{\varepsilon\perp}(|\bs{x}|^2)\bs{P}^{\perp}(\bs{x})+\sigma_{\varepsilon}(|\bs{x}|^2)\bs{P}(\bs{x})\\
    &=c_{\varepsilon,d}\bigg(1-e^{-\frac{|\bs{x}|^2}{2\varepsilon}}\bigg)\bs{P}^{\perp}(\bs{x})
    +c_{\varepsilon,d}\bigg(1-e^{-\frac{|\bs{x}|^2}{2\varepsilon}}+\frac{|\bs{x}|^2}{\varepsilon}e^{-\frac{|\bs{x}|^2}{2\varepsilon}}   \bigg)\bs{P}(\bs{x}),
    \end{split}
\end{equation}
where
\begin{equation}\label{gen27}
    c_{\varepsilon,d}=\frac{\varepsilon}{2}\kappa^2_{\varepsilon,d}(\pi\varepsilon)^{\frac{d}{2}}.
\end{equation}
In Equation~(\ref{gen26})
\begin{equation}\label{gen27a}
\sigma_{\varepsilon\perp}(|\bs{x}|^2)=c_{\varepsilon,d}\bigg(1-e^{-\frac{|\bs{x}|^2}{2\varepsilon}}\bigg)\quad\text{and}
\quad\sigma_{\varepsilon}(|\bs{x}|^2)=c_{\varepsilon,d}\bigg(1-e^{-\frac{|\bs{x}|^2}{2\varepsilon}}+\frac{|\bs{x}|^2}{\varepsilon}e^{-\frac{|\bs{x}|^2}{2\varepsilon}} \bigg).
\end{equation}
The \hyperlink{maxker}{Maxwell kernel} clearly induces a positive-definite, symmetric, radially dominant \hyperlink{pairmatrix}{diffusion matrix}, provided \(\bs{x}\not=\bs{0}\). Finally, if we
invoke the normalization of Comment~\ref{com-delcor}, we have
\begin{equation}\label{gen27b}
    c_{\varepsilon,d}=1.
\end{equation}
\end{maxwell}

We close this section with the observation that the difference-process, \(\bs{z}(\cdot)\), generated in Theorem~(\ref{gen13}) can be represented as the unique solution to a stochastic It\^{o}
differential equation.
\begin{thm}\label{gen14}
Let \(\sqrt{\tilde{\bs{D}}}\)  denote the unique positive definite symmetric square root of \(\tilde{\bs{D}}\) and let
\(\bs{\beta}\) be a process in \(\mathbb{R}^d\) whose components \(\beta_j,\,j=1,2,\ldots,d\) are \emph{i.i.d.}
one-dimensional standard Brownian motions.
Then the Markov-Feller difference-process of Theorem~(\ref{gen13}), \(\bs{z}(\cdot)\), generated by the transition probability function \((t,\bs{x},B)\mapsto\tilde{P}(t,\bs{x},B)\), can be
represented as the unique solution to the following It\^{o} stochastic initial value problem:
\begin{align}\label{gen15}
    d\bs{z}(t)  &=\sqrt{\tilde{\bs{D}}(\bs{z}(t))}\bs{\beta}(dt),\\
    \bs{z}(0)   &=\bs{x}.
\end{align}
Or, equivalently, the It\^{o} stochastic integral equation
\begin{equation}\label{gen18}
    \bs{z}(t)=\bs{x}+\int_0^t \sqrt{\tilde{\bs{D}}(\bs{z}(s))}\bs{\beta}(ds).
\end{equation}
By virtue of the structure Theorem~\ref{pairdiff}, the latter can be cast in the form
\begin{equation}\label{gen18b}
    \bs{z}(t)=\bs{x}+\int_0^t \sqrt{\sigma_{\perp}(|\bs{z}(s)|^2)}\bs{P}^{\perp}(\bs{z}(s))\bs{\beta}(ds)+\int_0^t \sqrt{\sigma(|\bs{z}(s)|^2)}\bs{P}(\bs{z}(s))\bs{\beta}(ds).
\end{equation}
\end{thm}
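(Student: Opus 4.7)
The plan is to prove Theorem~\ref{gen14} by (i) constructing the positive-definite symmetric square root of \(\tilde{\bs{D}}\) in closed form from its spectral decomposition, (ii) establishing existence and pathwise uniqueness of the resulting It\^{o} equation, and (iii) identifying the law of its solution with the transition function \(\tilde{P}\) of Theorem~\ref{gen13} by verifying the associated martingale problem.

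First, by Theorem~\ref{pairdiff}, the diffusion matrix has the eigen-decomposition \(\tilde{\bs{D}}(\bs{x})=\sigma_{\perp}(|\bs{x}|^2)\bs{P}^{\perp}(\bs{x})+\sigma(|\bs{x}|^2)\bs{P}(\bs{x})\) with non-negative eigenvalues, with the projections \(\bs{P}(\bs{x})\) and \(\bs{P}^{\perp}(\bs{x})\) being mutually orthogonal and idempotent. Consequently the unique positive semidefinite symmetric square root is simply
\begin{equation*}
\sqrt{\tilde{\bs{D}}(\bs{x})}=\sqrt{\sigma_{\perp}(|\bs{x}|^2)}\,\bs{P}^{\perp}(\bs{x})+\sqrt{\sigma(|\bs{x}|^2)}\,\bs{P}(\bs{x}).
\end{equation*}
This both produces the factored form in Equation~(\ref{gen18b}) and makes the regularity analysis concrete: away from the origin, \(\sqrt{\tilde{\bs{D}}}\) inherits smoothness from the smoothness of \(\bs{g}\) (and hence of \(\phi\)) through the definitions~(\ref{d4})--(\ref{d5}) of \(\bs{A}\) and \(b\), while the boundedness \(0\le\sigma,\sigma_{\perp}\le c\) keeps it uniformly bounded on \(\mathbb{R}^d\).

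Next I would establish existence and pathwise uniqueness for the It\^{o} equation~(\ref{gen18}). On any region \(\{|\bs{x}|\ge\delta>0\}\), the coefficient \(\sqrt{\tilde{\bs{D}}}\) is smooth (in fact \(C^2\) under Hypothesis~\ref{g0}) and bounded, so classical strong existence and uniqueness apply up to the first hitting time of any ball around the origin. To globalize, I would invoke the non-coincidence estimate of Dawson used in Equations~(\ref{meet1})--(\ref{meet2}): under the integrability assumption on \(|\bs{x}_0|^{-2}\), the process \(\bs{z}(\cdot)\) almost surely avoids the origin on any finite time interval, since hitting the origin in the difference coordinate corresponds precisely to coincidence of the two large particles. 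Exhausting \(\mathbb{R}^d\setminus\{\bs{0}\}\) by a sequence \(\{|\bs{x}|\ge\delta_n\}\) with \(\delta_n\downarrow 0\) and passing to the limit then produces a global strong solution which is unique in law. The main obstacle in this step is the degeneracy at the origin; handling it cleanly requires combining the Dawson argument with a localization scheme, and, strictly speaking, the case of a deterministic initial condition \(\bs{z}(0)=\bs{x}\ne\bs{0}\) must be treated by a separate comparison argument (or by noting that the hitting time of \(\bs{0}\) from any \(\bs{x}\ne\bs{0}\) is a.s.\ infinite for the non-degenerate elliptic operator \(\tilde{A}\) in the relevant dimensions).

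Finally I would identify the law of \(\bs{z}(\cdot)\) with \(\tilde{P}\). Applying It\^{o}'s formula to any \(\tilde{f}\in C_0^2(\mathbb{R}^d,\mathbb{R})\) and the solution of Equation~(\ref{gen18}) gives
\begin{equation*}
\tilde{f}(\bs{z}(t))-\tilde{f}(\bs{x})-\int_0^t\frac{1}{2}\sum_{i,j=1}^{d}\tilde{D}_{ij}(\bs{z}(s))\bigg(\frac{\partial^2\tilde{f}}{\partial x_i\,\partial x_j}\bigg)(\bs{z}(s))\,ds=\int_0^t\bigl(\bs{\nabla}\tilde{f}(\bs{z}(s))\bigr)^{T}\sqrt{\tilde{\bs{D}}(\bs{z}(s))}\,\bs{\beta}(ds),
\end{equation*}
the right-hand side being a continuous \(L^2\)-martingale by boundedness of \(\sqrt{\tilde{\bs{D}}}\). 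Comparing with Equation~(\ref{gen12}), this says \(\bs{z}(\cdot)\) solves the martingale problem for the generator \(\tilde{A}\). Since \(\tilde{A}\) is a bounded, uniformly elliptic (on compact subsets of \(\mathbb{R}^d\setminus\{\bs{0}\}\)) second-order operator with smooth coefficients, the martingale problem is well-posed in the sense of Stroock--Varadhan, and its unique solution is precisely the Markov--Feller process with transition function \(\tilde{P}\) from Theorem~\ref{gen13}. Uniqueness in law of the martingale problem then forces \(\bs{z}(\cdot)\sim\tilde{P}\), completing the identification and giving the equivalent radial/lateral splitting~(\ref{gen18b}) as an immediate rewriting via the spectral form of \(\sqrt{\tilde{\bs{D}}}\).
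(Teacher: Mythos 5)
Your proposal is correct, and it supplies an argument the paper itself leaves implicit: Theorem~\ref{gen14} is stated without proof, being treated as a standard representation result for the Feller diffusion generated by \(\tilde{A}\) (with the background existence/uniqueness theory delegated to Kotelenez~\cite{KO4}). Your route is the natural one and matches what the paper tacitly relies on: the spectral form \(\sqrt{\tilde{\bs{D}}(\bs{x})}=\sqrt{\sigma_{\perp}(|\bs{x}|^2)}\,\bs{P}^{\perp}(\bs{x})+\sqrt{\sigma(|\bs{x}|^2)}\,\bs{P}(\bs{x})\) follows immediately from Theorem~\ref{pairdiff} because \(\bs{P}\) and \(\bs{P}^{\perp}\) are complementary orthogonal projections, and it yields Equation~(\ref{gen18b}) on the spot; the identification of the solution law with \(\tilde{P}\) via It\^{o}'s formula and well-posedness of the martingale problem for \(\tilde{A}\) is exactly the Stroock--Varadhan mechanism one would cite. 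Two small points deserve care, both of which you flag: the degeneracy is confined to the origin (Lemma~\ref{posdef} gives positive definiteness for \(\bs{x}\neq\bs{0}\)), so the localization must be closed by an argument that the difference-process started at a deterministic \(\bs{x}\neq\bs{0}\) a.s.\ never reaches \(\bs{0}\) --- the Dawson estimate~(\ref{meet1})--(\ref{meet2}) is phrased for the pair-process with random initial data, so your remark that the deterministic-start case needs a separate (e.g.\ Bessel-type comparison) argument is the honest way to close that gap; and what the theorem ultimately requires is uniqueness in law (to match \(\tilde{P}\)), which your martingale-problem step delivers, the pathwise statements on \(\{|\bs{x}|\ge\delta\}\) being a convenient but not essential strengthening. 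With those caveats acknowledged, the proof is sound.
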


\section{Separation-Process}\label{sec-sep}

The matrix function \(\bs{x}\mapsto\tilde{\bs{D}}(\bs{x})\),  characterized in the previous section, determines the generator of the difference-process, which we have denoted by~\(\bs{x}(\cdot)\).
However, we are most concerned with the magnitude, \(|\bs{x}(\cdot)|\), of this process or, succinctly, the \hypertarget{sepproc}{\textit{separation-process}}. The following argument shows that the
separation-process is Markovian.

If $\bs{y}, \tilde{\bs{y}} \in {\mathbb{R}}^d$ and $|\bs{y}| = |\tilde{\bs{y}}|$ then there is an orthogonal matrix \bs{Q} such that $\tilde{\bs{y}} = \bs{Q}\bs{y}.$ Now set $ \bs{y} := \bs{r}^2_0-
\bs{r}^1_0, \ \ \tilde{\bs{y}}: = \tilde{\bs{r}}^2_0 - \tilde{\bs{r}}^1_0,$ so $\tilde{\bs{r}}^2_0 -  \tilde{\bs{r}}^1_0 = \bs{Q}\bs{r}^2_0 - \bs{Q}\bs{r}^1_0.$ Then, by Lemma~\ref{lem1} and
recalling that \(\sqrt{2}\bs{x}=\bs{r}^2-\bs{r}^1,\sqrt{2}\bs{x}_0=\bs{r}^2_0-\bs{r}^1_0\),
\begin{equation}\label{eq5.1}
    \begin{split}
    &\bs{x}(t,\bs{x}_0)\\
    &\sim \frac{\bs{r}(t,\bs{r}^2_0,w)-\bs{r}(t,\bs{r}^1_0,w)}{\sqrt{2}}\\
    &\sim \frac{\bs{r}(t,\bs{Q}\bs{r}^2_0,w)-\bs{r}(t,\bs{Q}\bs{r}^1_0,w)}{\sqrt{2}}\\
    &\sim \bs{x}(t,\tilde{\bs{x}}_0)
    \end{split}
\end{equation}
Let $B$ be a Borel subset of ${\mathbb{R}}$ and $\Psi(\cdot)$ denote the inverse image of the map $\bs{x} \longrightarrow |\bs{x}|$. Then Equation~(\ref{eq5.1})  and the preceding considerations
imply by Dynkin~\cite{DY} that

\begin{equation}\label{eq5.2}
\dot{P}(t,\xi,B)\big|_{\xi=|\bs{x}_0|} := \mathcal{P}\{\bs{x}(t,\bs{x}_0)\in \Psi(B)|\bs{x}(0,\bs{x}_0)=\bs{x}_0\}
\end{equation}
is a transition probability function for a Markov diffusion in ${\mathbb{R}}^+$.\\

To find its generator, denoted by \(\dot{A}\), we compute the action of the differential operator in Equation~(\ref{gen12}) on functions \(\tilde{f}\) of the form
\(\tilde{f}(\bs{x})=\varphi(|\bs{x}|)\). As a preliminary, compute the second derivative matrix \(\bs{\nabla}_{\bs{x}}\bs{\nabla}_{\bs{x}}^T\tilde{f}\) whose components are given by
\(\frac{\partial^2}{\partial x_i\partial
    x_j}\varphi(|\bs{x}|)\). For \(\bs{x}\not =\bs{0}\), this computation yields
\begin{equation}\label{mag1a}
\big(\bs{\nabla}_{\bs{x}}\bs{\nabla}_{\bs{x}}^T\tilde{f}\big)(\bs{x})=\frac{\varphi'(|\bs{x}|)}{|\bs{x}|}\bigg(\bs{1}-\frac{\bs{x}\bs{x}^T}{|\bs{x}|^2}
\bigg)+\varphi''(|\bs{x}|)\frac{\bs{x}\bs{x}^T}{|\bs{x}|^2}.
\end{equation}
Here it is useful to use the projections \(\bs{P}(\bs{x})\) and \(\bs{P}^{\perp}(\bs{x})\), defined in Equation~(\ref{proj}): In terms of these projections, Equation~(\ref{mag1a}) is cast in the
form
\begin{equation}\label{mag3a}
\big(\bs{\nabla}_{\bs{x}}\bs{\nabla}_{\bs{x}}^T\tilde{f}\big)(\bs{x})=\frac{\varphi'(|\bs{x}|)}{|\bs{x}|}\bs{P}^{\perp}(\bs{x})+\varphi''(|\bs{x}|)\bs{P}(\bs{x}).
\end{equation}
Using the diffusion matrix \(\tilde{\bs{D}}(\bs{x})\) in the form of Equation~(\ref{d10}) and Equation~(\ref{mag3a}),
the partial differential operator of Equation~(\ref{gen12}) applied to \(\tilde{f}(\bs{x}):=\varphi(|\bs{x}|)\) reduces to the ordinary differential operator\footnote{Recall that the inner product
of two square matrices, \(\bs{M}\) and \(\bs{N}\) was defined by \(\bs{M}\bullet\bs{N}:=trace(\bs{M}\bs{N}^T)\).}
\begin{equation}\label{mag5a}
\begin{split}
(\dot{A}\varphi)(|\bs{x}|)    &:=(\tilde{A}\tilde{f})(\bs{x})\\
                        &=\frac{1}{2}\bigg(\tilde{\bs{D}}(\bs{x})\bullet \big(\bs{\nabla}\bs{\nabla}^T\tilde{f}\big)(\bs{x})\bigg)\\
                        &=\frac{1}{2}\bigg(\frac{d-1}{|\bs{x}|}\sigma_{\perp}(|\bs{x}|^2)\varphi'(|\bs{x}|)
                        +\sigma(|\bs{x}|^2)\varphi''(|\bs{x}|)\bigg).
\end{split}
\end{equation}
In Section~\ref{sec-dir} we give an independent argument, using  Dirichlet forms, that this operator is indeed the generator for the separation-process.

\begin{maxwell}
For the special case of the \hyperlink{maxker}{Maxwell kernel} the operation of Equation~(\ref{mag5a}) yields
\begin{equation}\label{mag6}
    \begin{split}
     (\dot{A}\varphi)(\xi)
     &=\frac{c_{\varepsilon,d}}{2}\Bigg(\frac{(d-1)}{\xi}\bigg(1-e^{-\frac{\xi^2}{2\varepsilon}}\bigg)\varphi^{'}(\xi)\\
     &+\bigg(\big(1-e^{-\frac{\xi^2}{2\varepsilon}}\big)
     +\frac{\xi^2}{\varepsilon}e^{-\frac{\xi^2}{2\varepsilon}}  \bigg)\varphi^{''}(\xi)
     \Bigg).
    \end{split}
\end{equation}
Again, if we invoke the normalization of Comment~\ref{com-delcor}, we have
\begin{equation}\label{gen27b2}
    c_{\varepsilon,d}=1.
\end{equation}
\end{maxwell}

\begin{com}\label{sep1}
For large \(\xi\), the operator  \(\dot{A}\varphi\) in Equation~(\ref{mag5a}) (or Equation~(\ref{mag6})) behaves like
\begin{equation}\label{eq5.}
 \begin{array}{clcr}
 (\dot{A}\varphi)(\xi) \approx
\frac{c}{2}[\frac{(d-1)}{\xi}\varphi'(\xi)
                       +\varphi''(\xi)].
\end{array}
\end{equation}
Up to the constant multiplier,~\(c\), the right-hand side is precisely the generator of the Bessel process associated with a $d-$dimensional standard Brownian motion $\beta(\cdot)$ (See
Dynkin~\cite{DY} Ch. X, (10.87)). In other words, for large $\xi=|\bs{x}|$ the \hyperlink{sepproc}{separation-process} should behave like the Bessel process of the standard Brownian motion
$\beta(\cdot)$. Therefore, we say that the separation-process with the generator in Equation~(\ref{mag5a}) is a process of \emph{Bessel-type.}
\end{com}

For the general Bessel-type process with generator from Equation~(\ref{mag5a}), we obtain the following stochastic
ordinary differential equation for the magnitude process, \(\xi=|\bs{x}|\):\\

\begin{equation}\label{eq5.91}
 \begin{array}{clcr} {d\xi  = \frac{1}{2}\frac{(d-1)}{\xi}\sigma_{\perp}(\xi^2)dt
                       +\sqrt{\sigma(\xi^2)}\beta(dt),
                       }\\ \\
                       {\xi(0)= \xi_0,}
\end{array}
\end{equation}
where $\beta(\cdot)$ is a standard real-valued Brownian motion. Observe that the drift term in Equation~(\ref{eq5.91}) is proportional to \((d-1)\) and depends only on the lateral eigenvalue while
the diffusive (stochastic) term depends only on the radial eigenvalue.

We may now apply the criteria from stochastic analysis of one-dimensional diffusions to obtain a proof of the \emph{long-time behavior} of the \hyperlink{sepproc}{separation-process} for $d \geq 2$
and, therefore, of the attractive and repulsive behavior of the \hyperlink{sepproc}{separation-process} for large times
(See Proposition 5.9 and Theorem 5.13 of Kotelenez~\cite{KO4}).\\

\begin{thm}[Long-Term Behavior of the \hyperlink{sepproc}{Separation-Process}]\label{longterm}
\begin{enumerate}
\item For \(d =2\), the solution of Equation~(\ref{eq5.91}) is recurrent
whenever the \newline\hyperlink{pairmatrix}{diffusion matrix} \(\tilde{\bs{D}}\) is  \hyperlink{raddom}{radially dominant.}\footnote{A sufficient condition for radial dominance is that the forcing
function \(\phi\) satisfies the logarithmic convexity condition \((\ln\phi)''\leq0\). (See Theorem~\ref{d10a6}.)}
\item For \(d \geq 3\), the solution of Equation~(\ref{eq5.91}) is transient.
\end{enumerate}
\end{thm}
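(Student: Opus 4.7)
The plan is to treat Equation~(\ref{eq5.91}) as a one-dimensional It\^{o} diffusion on the open interval $(0,\infty)$ and invoke Feller's test for recurrence versus transience via the scale function. The justification for using $(0,\infty)$ as the state space comes from the Dawson argument behind Equation~(\ref{meet2}): two initially distinct large particles never coincide, so $\xi(t)>0$ almost surely for all $t$, making $0$ an unattainable (entrance) boundary. With $0$ inaccessible, the recurrence/transience dichotomy is governed entirely by the behavior at $+\infty$ of the scale function whose derivative is $s'(\xi)=\exp\!\big(-\int_{\xi_\ast}^{\xi}\frac{(d-1)\,\sigma_{\perp}(v^2)}{v\,\sigma(v^2)}\,dv\big)$, read off from the drift $\frac{d-1}{2\xi}\sigma_{\perp}(\xi^2)$ and diffusion coefficient $\sigma(\xi^2)$; recurrence holds iff $s(+\infty)=+\infty$ and transience iff $s(+\infty)<+\infty$.

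For part~(1), with $d=2$, radial dominance (Definition~\ref{d10a4}) gives $\sigma_{\perp}(v^2)<\sigma(v^2)$ for every $v>0$, hence $\frac{(d-1)\sigma_{\perp}(v^2)}{v\,\sigma(v^2)}=\frac{\sigma_{\perp}(v^2)}{v\,\sigma(v^2)}<\frac{1}{v}$ pointwise. Integrating and exponentiating yields $s'(\xi)\geq K/\xi$ for some constant $K>0$, so $s(\xi)\geq K\ln(\xi/\xi_\ast)\to +\infty$ as $\xi\to\infty$, which establishes recurrence.

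For part~(2), with $d\geq 3$, I use the asymptotics $\sigma(\xi^2),\sigma_{\perp}(\xi^2)\to c$ from Theorem~\ref{pairdiff} (Equations~(\ref{d10aa}),~(\ref{d10aaa})) to deduce $\sigma_{\perp}/\sigma\to 1$ as $\xi\to\infty$. Fix $\epsilon>0$ small enough that $(d-1)(1-\epsilon)>1$ (possible since $d-1\geq 2$). There exists $V$ such that $\sigma_{\perp}(v^2)/\sigma(v^2)\geq 1-\epsilon$ for $v\geq V$, so the exponent defining $s'$ is bounded above by $-(d-1)(1-\epsilon)\ln\xi + O(1)$, giving $s'(\xi)\leq C\,\xi^{-(d-1)(1-\epsilon)}$ for large $\xi$. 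Since the exponent exceeds $1$, $s'$ is integrable at $+\infty$ and $s(+\infty)<+\infty$, which yields transience.

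The main technical obstacle lies in part~(1): one must be sure that no pathology near $\xi=0$ can spoil recurrence. Here radial dominance is crucial, because it converts a mere asymptotic comparison into a global pointwise bound on $s'$ that survives integration over all of $(\xi_\ast,\infty)$. In $d=2$, the Bessel-type heuristic of Comment~\ref{sep1} sits exactly at the logarithmic borderline $s'\sim 1/\xi$, so any persistent shortfall (without radial dominance) could conceivably tip the conclusion the other way. For part~(2), by contrast, the argument depends only on the limiting ratio at infinity, and $d\geq 3$ supplies the comfortable margin $d-1>1$.
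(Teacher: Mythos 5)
Your proposal is correct and follows essentially the same route as the paper: the paper's proof introduces exactly this scale function (the Gikhman--Skorokhod functional \(s(\xi,\zeta)\) of Equation~(\ref{eq5.92}), via Theorem 3.1 of Ikeda and Watanabe) and uses the same two estimates — radial dominance to bound the scale density below by \(\rho/\xi\) for \(d=2\), giving logarithmic divergence, and the limit \(\sigma_{\perp}/\sigma\to 1\) to bound it above by \((\rho/\xi)^{(d-1)(1-\delta)}\) with \((d-1)(1-\delta)>1\) for \(d\geq 3\), giving a finite limit. The only difference is cosmetic: you dispose of the boundary at the origin by citing the Dawson non-coincidence argument of Equation~(\ref{meet2}), whereas the paper works on all of \(\mathbb{R}\) using the odd extension \(s(-\xi,-\rho)=-s(\xi,\rho)\) and verifies the criterion at both boundaries \(\pm\infty\).
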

As a consequence, if $d = 2$, the two large particles will attract and repel each other infinitely often and, if $d\geq 3$, the distance between the two large particles will tend to $\infty$ almost surely as \(t\rightarrow\infty\).\\

\noindent\begin{proof} The following functional was used by Gikhman and Skorokhod~(~\cite{GI1} or~\cite{GI2}, Ch. 4, Section 16), to study the asymptotic behavior of solutions of one-dimensional
stochastic ordinary differential equations (See also Skorokhod~\cite{SK}, Ch. 1.3).  We will follow the notation of Ikeda and Watanabe~\cite{IK}, Ch. VI.3).\\

\begin{equation}\label{eq5.92}
{s(\xi,\zeta) := \int_{\zeta}^\xi\exp\bigg[-\int_{\zeta}^y (d-1)\frac{\sigma_{\perp}(z^2)} {\sigma(z^2)}\,\frac{dz}{z}\bigg]dy,}
\end{equation}\\
where $\zeta$ is an arbitrary point in ${\mathbb{R}}$. 
We have to show that \\

\begin{equation}\label{eq5.93}
    s(\xi,\zeta) \longrightarrow
    \begin{cases}
    \pm\infty   &\text{ as }\xi \longrightarrow {\pm}\infty \text{ if }d=2,\\
                \text{and}\\
    s^{\pm}     &\text{ as }\xi \longrightarrow {\pm}\infty \text{ if }d\geq3,
    \end{cases}
\end{equation}
where $-\infty < s^- \leq s^+ < \infty$. Then the conclusion then follows from Theorem 3.1 in Ikeda and Watanabe~\cite{IK}. The arguments that follow rely on the properties of the
\hyperlink{pairmatrix}{diffusion matrix} given in Theorem~(\ref{pairdiff}). In particular, we note that\footnote{The ratio in the second inequality of Equation~(\ref{rec1}) is \(<1\) in the case of
the \hyperlink{maxker}{Maxwell kernel}. This simplifies the proof of the first assertion for that case.}

\begin{equation}\label{rec1}
\lim_{z\rightarrow\infty}\frac{\sigma_{\perp}(z^2)}{\sigma(z^2)}=1\qquad\text{and}\qquad 0<\frac{\sigma_{\perp}(z^2)}{\sigma(z^2)},\quad\forall z>0.
\end{equation}
Furthermore, when the \hyperlink{pairmatrix}{diffusion matrix} \(\tilde{\bs{D}}\) is radially dominant we also have
\begin{equation}\label{rec1aa}
0<\frac{\sigma_{\perp}(z^2)}{\sigma(z^2)}<1,\,\forall z>0.
\end{equation}

\begin{enumerate}

\item For any numbers $\xi, \rho, \zeta$, Equation~(\ref{eq5.92}) implies
\begin{equation}\label{rec1a}
s(\xi,\zeta) = s(\rho,\zeta) + \exp\bigg[-\int_{\zeta}^{\rho}\frac{(d-1)\sigma_{\perp}(z^2)} {z\sigma(z^2)}\,dz\bigg]s(\xi,\rho).
\end{equation}
Thus we may choose appropriate values for $\rho$ and prove (\ref{eq5.93}) for $s(\xi,\rho)$ instead of $s(\xi,\zeta)$. It is easy to see that $s(\xi,\zeta)$ is positive whenever, \(\xi>\zeta\), and
\(\xi\mapsto s(\xi,\zeta)\) is increasing.  Furthermore, a straightforward computation gives
\begin{equation}\label{rec6}
s(-\xi,-\rho)=-s(\xi,\rho).
\end{equation}

\item Fix an arbitrary positive number \(\delta<\frac{1}{2}\). Then, by the limit in Equation~(\ref{rec1}) and Equation~(\ref{rec1aa}) there is a \(\rho>0\) such that
\begin{equation}\label{rec2}
(1-\delta)<\frac{\sigma_{\perp}(z^2)}{\sigma(z^2)}<(1+\delta),
\end{equation}
whenever \(|z|>\rho\). We can replace the upper bound \((1+\delta)\) in Equation~(\ref{rec2}) by \(1\)  whenever the \hyperlink{pairmatrix}{diffusion matrix} \(\tilde{\bs{D}}\) is radially dominant,
say when \((\ln\phi)''\leq 0\). In this case, if \(z>\rho>0\) we  have
\begin{equation}\label{rec3}
(d-1)\frac{1}{z}>\frac{(d-1)\sigma_{\perp}(z^2)}{z\sigma(z^2)}>(d-1)(1-\delta)\frac{1}{z}.
\end{equation}

\item Suppose $d = 2$ and $\xi > \rho > 0$. Then, assuming the convexity condition,
\begin{equation}\label{rec4}
    \exp\bigg[-(d-1)\int_{\rho}^{y}\frac{\sigma_{\perp}(z^2)} {z\sigma(z^2)}\,dz\bigg]
    >\exp\bigg[-\int_{\rho}^{y}\frac{dz}{z}\bigg]
    =\bigg(\frac{\rho}{y}\bigg).
\end{equation}
So,
\begin{equation}\label{rec5}
s(\xi,\zeta) > s(\rho,\zeta) +\zeta\ln\frac{\xi}{\rho}\rightarrow \infty\text{ as }\xi\rightarrow\infty.
\end{equation}


\item Suppose $d=2$ and $-\xi < -\rho < 0$. Then, using (\ref{rec6}), the previous argument yields
\begin{equation}\label{rec7}
\lim_{\xi\rightarrow\infty}s(-\xi,-\rho)=-\infty.
\end{equation}
This verifies (\ref{eq5.93}) when \(d=2\) and  \((\ln\phi)''\leq0\).

\item Next, suppose $\xi > \rho > 0$ and $d \geq 3$. Then

\begin{equation}\label{rec8}
    \begin{split}
    s(\xi,\rho) &=\int_{\rho}^{\xi}\exp\bigg[-(d-1)\int_{\rho}^{y}\frac{\sigma_{\perp}(z^2)} {z\sigma(z^2)}\,dz\bigg]\,dy\\
                &<\int_{\rho}^{\xi}\exp\bigg[-(d-1)(1-\delta)\int_{\rho}^{y}\frac{dz}{z}\bigg]\,dy\\
                &=\frac{\rho}{(d-1)(1-\delta)-1}\Bigg(1-\bigg(\frac{\rho}{\xi}\bigg)^{(d-1)(1-\delta)-1}  \Bigg)\\
                &<\frac{\rho}{(d-1)(1-\delta)-1},
    \end{split}
\end{equation}
since \((d-1)(1-\delta)-1>0\) whenever \(d\geq3\) and \(0\leq\delta<\frac{1}{2}\). Therefore, since \(\xi\mapsto s(\xi,\rho)\) is increasing we have

\begin{equation}\label{rec9}
\lim_{\xi\rightarrow\infty}s(\xi,\rho)=s^+<\infty,
\end{equation}
for some \(s^+\in\mathbb{R}\).

\item In view of (\ref{rec6}) we also get

\begin{equation}\label{rec10}
\lim_{\xi\rightarrow\infty}s(-\xi,-\rho)=s^->-\infty,
\end{equation}
where \(s^-=-s^+\). So (\ref{eq5.93}) is verified for \(d\geq3\).

\end{enumerate}
The statements about the difference of the two large particles are a simple consequence of the recurrence and transience properties
\end{proof}

\section{Depletion Effect --- Stochastic Model}\label{sec-clump}
\subsection{van~Kampen's Probability Flux}\label{ssec-vk}
 We generalize van~Kampen's notion of
\textit{probability flux rate} to vector processes \(\bs{x}(\cdot)\) in \(\mathbb{R}^d\) in order to characterize regions with a bias in favor of attraction or repulsion between particle
pairs.\footnote{The \emph{regions} in \(\mathbb{R}^d\) to which we refer in this section are regions in the space of the variable~\bs{x}, the normalized vector particle difference. Thus, if one
particle is viewed as the origin, the other is located at~\(\sqrt{2}\bs{x}\).} (See van~Kampen~\cite{KA}.) Let \(\bs{x}\not =\bs{0}\) be a point and \bs{v} be a unit vector in \(\mathbb{R}^d\). Such
a pair determines an oriented hyperplane in \(\mathbb{R}^d\) through \bs{x} with orienting normal \bs{v}. Let \(X(\bs{x},t)\) denote the probability density for the process \(\bs{x}(\cdot)\) at the
point \bs{x} and time \(t\). We define the \emph{probability flux rate vector (at \((\bs{x},t)\) with the orientation \bs{v})} by
\begin{equation}\label{vk1}
\bs{j}(\bs{x},t,\bs{v}):=-\frac{1}{2}X(\bs{x},t)\tilde{\bs{D}}(\bs{x})\bs{v}.
\end{equation}
The vector \( \bs{j}(\bs{x},t,\bs{v}) \) is the area density of the inststantaneous probability flow rate at time \(t\) through a surface element at \bs{x} oriented by \bs{v}. (See
Comment~\ref{com3} for a \(1-\)dimensional motivation.)  The \textit{flux rate  (at \((\bs{x},t)\) with orientation \bs{v})}, denoted by \(J(\bs{x},\bs{v},t)\),
is defined as the divergence of this flux, namely\footnote{\(\bs{\nabla_{x}}\) denotes the spatial gradient in \(\mathbb{R}^d\). The symmetry of \(\tilde{\bs{D}}\) is used in the second line of
Equation~(\ref{fd1}).}
\begin{equation}\label{fd1}
\begin{split}
J(\bs{x},t,\bs{v})  &:=\bs{\nabla_{x}}\bullet\bs{j}(\bs{x},t,\bs{v})=\bs{\nabla_{x}}\bullet\bigg(-\frac{1}{2}X(\bs{x},t)\tilde{\bs{D}}(\bs{x})\bs{v}    \bigg)\\
                    &=-\frac{1}{2}X(\bs{x},t)\bs{div}\tilde{\bs{D}}(\bs{x})\bullet\bs{v}-\frac{1}{2}\tilde{\bs{D}}(\bs{x})\bs{\nabla_{x}}X(\bs{x},t)\bullet\bs{v}.
\end{split}
\end{equation}
Note that \(J(\bs{x},t,\bs{v})\) is associated with an \emph{oriented} surface element at \bs{x}; in fact, the functional \(\bs{v}\mapsto J(\bs{x},t,\bs{v})\) is linear.

Henceforth, for simplicity, we assume that, at time \(t\), the probability density, \(\bs{y}\mapsto X(\bs{y},t)\), is uniform (constant) in a neighborhood of \bs{x}.  In this case, the flux rate
reduces to \footnote{For a square matrix valued function \(\bs{M}(\bs{y})\), \(\bs{div}\bs{M}(\bs{y})\) is defined through the identity
\(\bs{div}\bs{M}(\bs{y})\bullet\bs{a}=\bs{\nabla_{y}}\bullet\bs{M}^T(\bs{y})\bs{a}\). Recall that \(\tilde{\bs{D}}\) is symmetric.}
\begin{equation}\label{vk2}
J(\bs{x},t,\bs{v})=-\frac{1}{2}X(\bs{x},t)\bs{div}\tilde{\bs{D}}(\bs{x})\bullet\bs{v},
\end{equation}
the \bs{v} component of the vector \(-\frac{1}{2}X(\bs{x},t)\bs{div}\tilde{\bs{D}}(\bs{x})\). Thus, if the probability density \(X\) is assumed locally spatially uniform at a given time, the  flux
rate is essentially a multiple of the divergence of the \hyperlink{pairmatrix}{diffusion matrix}; that is, a multiple of the vector \(\bs{div}\tilde{\bs{D}}\). Roughly speaking, this assumption
means that, at time \(t\), the probability of finding a second large particle near \(\sqrt{2}\bs{x}\), given that the first is at the origin, is locally constant in \bs{x}. Since our computations
will be essentially local with respect to a given point \bs{x}, the assumption is reasonable, provided \(\bs{y}\mapsto X(\bs{y},t)\) at time \(t\) is reasonably regular.

In view of the special form of \(\tilde{\bs{D}}(\bs{x})\), given in Equation~(\ref{d10}), \( \bs{div}\tilde{\bs{D}}(\bs{x}) \) is radial (parallel to \bs{x}); specifically,\footnote{
\(\bs{div}\bs{P}(\bs{x})=\frac{d-1}{|\bs{x}|}\bs{u}(\bs{x})\). Recall the definition of \(\bs{u}(\bs{x})\) in Equation~(\ref{normvec}).}
\begin{equation}\label{fd3a}
    \begin{split}
    \bs{div}\tilde{\bs{D}}(\bs{x})   &=\bs{\nabla_{x}}\sigma_{\perp}(|\bs{x}|^2)+\bs{P}(\bs{x})(\bs{\nabla_{x}}\sigma(|\bs{x}|^2)-\bs{\nabla_{x}}\sigma_{\perp}(|\bs{x}|^2))\\
    &+(\sigma(|\bs{x}|^2)-\sigma_{\perp}(|\bs{x}|^2))\bs{div}\bs{P}(\bs{x})\\
    &=\bs{\nabla_{x}}\sigma(|\bs{x}|^2)+\big(\sigma(|\bs{x}|^2)-\sigma_{\perp}(|\bs{x}|^2)\big)\bs{div}\bs{P}(\bs{x})\\
    &=\bigg(2\sigma'(|\bs{x}|^2)|\bs{x}|+\frac{d-1}{|\bs{x}|}\big(\sigma(|\bs{x}|^2)-\sigma_{\perp}(|\bs{x}|^2)\big)\bigg)\bs{u}(\bs{x}).
    \end{split}
\end{equation}
For later convenience, define \(\xi\mapsto\psi(\xi)\) on \((0,\infty)\) by
\begin{equation}\label{vk3}
\psi(\xi):=\frac{d}{d\xi}\sigma(\xi^2)+\frac{d-1}{\xi}\big(\sigma(\xi^2)-\sigma_{\perp}(\xi^2)\big),
\end{equation}
so that \(\bs{div}\tilde{\bs{D}}(\bs{x})  =\psi(|\bs{x}|)\bs{u}(\bs{x})\).\footnote{For \(d\geq 2\) all spatial integrals involving \(\psi\) will converge at the origin.} Thus, when the probability
density \(X(\bs{x},t)\) is assumed to be uniform (constant) in a neighborhood of \bs{x} at time~\(t\), the  flux rate, \(J(\bs{x},t,\bs{v}) \), is completely determined by \(\psi(|\bs{x}|)\). We
will be concerned with the \emph{radially oriented flux rate,} \( J(\bs{x},t):= J(\bs{x},t,\bs{u}(\bs{x})) \), which we henceforth call the \textit{\(d-\)dimensional van-Kampen flux rate.} \(
J(\bs{x},t)\) takes the form\footnote{Provided the initial probability density \( X(\bs{x},t) \) is locally spatially constant.}
\begin{equation}\label{vk3a}
 J(\bs{x},t)=\frac{1}{2}\Big(\frac{d-1}{\xi}\sigma_{\perp}(\xi^2)-\frac{d-1}{\xi}\sigma(\xi^2)-\frac{d}{d\xi}\sigma(\xi^2)   \Big)X(\bs{x},t).
\end{equation}
Note that, except for the probability density multiplier \(X(\bs{x},t)\), \( J(\bs{x},t)\) depends only on the magnitude \(\xi\) of \bs{x}; that is, on the separation.

In the next sub-section we give a nice geometric interpretation of the \(d-\)dimensional van~Kampen flux rate, \( J(\bs{x},t)\).

\subsection{Interpretation of van~Kampen's Flux --- Pill-Box}
In this section suppose \(d\geq2\).\footnote{With some obvious modifications, the results of this section hold as well if \(d=1\).} Let  \(\mathbb{S}^{d-1}\) denote the (unit) sphere in
\(\mathbb{R}^d\) centered at the origin and let \(\xi\mathbb{S}^{d-1}\) denote the sphere in \(\mathbb{R}^d\) of radius \(\xi\) centered at the origin. The (dimensionless) surface area of
\(\mathbb{S}^{d-1}\) is denoted by \(\omega_{d-1}\) so the surface area of \(\xi\mathbb{S}^{d-1}\) is \(\xi^{d-1}\omega_{d-1}\).\footnote{\(\omega_{d-1}=2\pi^{\frac{d}{2}}\Gamma(\frac{d}{2})^{-1}\)
is the surface area of \(\mathbb{S}^{d-1}\).} Fix a point \bs{z} on \(\mathbb{S}^{d-1}\) and a \textit{cone angle} \(\varphi,\, 0\leq\varphi\leq\pi\). The spherical cap,
\(\varpi_{d-1}(\varphi,\bs{z})\), on \(\mathbb{S}^{d-1}\) consists of those points \(\bs{z}'\) on \(\mathbb{S}^{d-1}\) for which \(\bs{z}'\bullet\bs{z}\geq\cos\varphi\).\footnote{Thus, \(\varphi\)
is the azimuthal angle in a general spherical coordinate system for \(\mathbb{R}^d\) whose ``north pole" is determined by \bs{z}} The (dimensionless) area of this cap, which depends only on the cone
angle \(\varphi\), is denoted by \(\omega_{d-1}(\varphi)\).\footnote{ Hence, \(\omega_{d-1}(\pi)=\omega_{d-1}\) and \(\omega_{d-1}(\varphi)=\omega_{d-2}\int_0^{\varphi}\sin^{d-2}\theta\,d\theta\).}
The spherical cap on \(\xi\mathbb{S}^{d-1}\) subtended by \(\varpi_{d-1}(\varphi,\bs{z})\) is denoted by \(\xi\varpi_{d-1}(\varphi,\bs{z})\) and its surface area is
\(\xi^{d-1}\omega_{d-1}(\varphi)\).

Now fix a point \(\bs{x}\not=\bs{0}\). Recall that \(\bs{u}(\bs{x}):=\bs{x}/|\bs{x}|\), the associated outward unit vector (or point on \(\mathbb{S}^{d-1}\)), and set \(\xi:=|\bs{x}|\). For some
small positive~\(\lambda\), \(0<\lambda\ll 1\), consider the spherical shell in \(\mathbb{R}^d\) of thickness \(\lambda\) between the concentric spheres  \(\xi\mathbb{S}^{d-1}\) and
\((\xi+\lambda)\mathbb{S}^{d-1}\). Construct a small  ``pill-box" \(B(\bs{x},\varphi,\lambda)\) in \(\mathbb{R}^d\) by taking that portion of the shell subtended by the spherical cap
\(\varpi(\varphi,\bs{u}(\bs{x}))\).\footnote{The bounding surfaces of the pill-box are coordinate surfaces in general spherical coordinates for \(\mathbb{R}^d\).} The volume of the pill-box
\(B(\bs{x},\varphi,\lambda)\) is given by the expression
\begin{equation}\label{pb0}
\frac{\omega_{d-1}(\varphi)}{d}\big((\xi+\lambda)^d-\xi^d \big)\overset{0<\lambda\ll 1}{\approx}\omega_{d-1}(\varphi)\xi^{d-1}\lambda.
\end{equation}
The  boundary, \(\partial B(\bs{x},\varphi,\lambda)\), of the pill-box consists of the inner and outer ends, with areas
\begin{equation}\label{pb1}
\omega_{d-1}(\varphi)\cdot\xi^{d-1}\qquad \text{and}\qquad  \omega_{d-1}(\varphi)\cdot(\xi+\lambda)^{d-1} ,
\end{equation}
together with the lateral surface, whose area is
\begin{equation}\label{pb2}
\frac{\omega_{d-2}}{d-1}\sin^{d-1}\varphi\cdot\big((\xi+\lambda)^{d-1}-\xi^{d-1}\big)\overset{0<\lambda\ll 1}{\approx}\omega_{d-2}\sin^{d-1}\varphi\cdot\xi^{d-2}\lambda
\end{equation}
Refer to the Figure~(\ref{figpb}) below for a visualization.\footnote{Geometric terms such as \textit{surface area, volume, perimeter, unit outward normal vector, etc.} must be construed in the
\(d\)-dimensional context.}

\begin{figure}[H]
\begin{center}
\includegraphics[natheight=8.0in, natwidth=12.0in,height=3in, width=6in, keepaspectratio=true]
{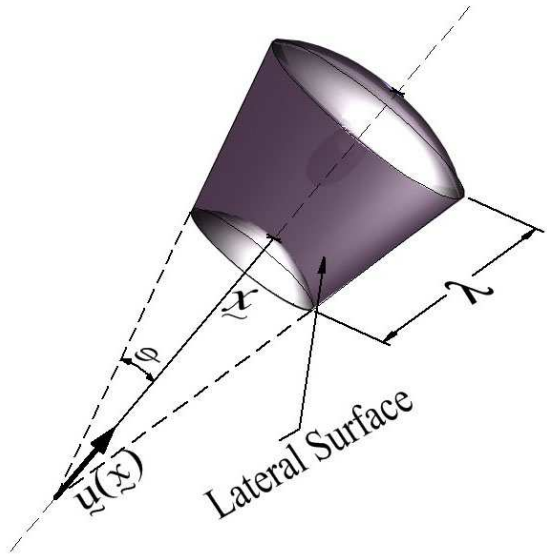}
\caption{The Pill Box $B(\bs{x},\varphi,\lambda)$ \label{figpb}}
\end{center}
\end{figure}

The \textit{net probability flux rate vector out of \(B(\bs{x},\varphi,\lambda)\) (at (\bs{x},t))} is the surface integral
\begin{equation}\label{vk4}
\int_{\partial B}\bs{j}(\bs{y},t,\bs{n}(\bs{y}))\,dA(\bs{y})=\int_{\partial B}-\frac{1}{2}X(\bs{x},t)\tilde{\bs{D}}(\bs{y})\bs{n}(\bs{y})\,dA(\bs{y}),
\end{equation}
where \(\bs{n}(\bs{y})\) always denotes the \emph{unit outward normal vector} to \(\partial B(\bs{x},\varphi,\lambda)\) at \bs{y}. Using the form of \(\tilde{\bs{D}}(\bs{y})\) in
Equation~(\ref{d10}), we can compute this flux explicitly. Here is the computation:\footnote{On the ends \(\bs{P}^{\perp}\bs{n}=\bs{0},\bs{P}\bs{n}=\bs{n}\). On the lateral surface
\(\bs{P}^{\perp}\bs{n}=\bs{n},\bs{P}\bs{n}=\bs{0}\).}
\begin{equation}\label{vk5}
    \begin{split}
    &\int_{\partial B}\bs{j}(\bs{y},t,\bs{n}(\bs{y}))\,dA(\bs{y})\\
    &=\int_{\partial B}-\frac{1}{2}X(\bs{x},t)\tilde{\bs{D}}(\bs{y})\bs{n}(\bs{y})\,dA(\bs{y})\\
    &=-\frac{1}{2}X(\bs{x},t)\Bigg\{\int_{\text{ends}}\sigma(|\bs{y}|^2)\bs{n}(\bs{y})\,dA(\bs{y})
    +\int_{\text{lateral}}\sigma_{\perp}(|\bs{y}|^2)\bs{n}(\bs{y})\,dA(\bs{y}) \Bigg\}\\
    &=-\frac{1}{2}X(\bs{x},t)\Bigg\{\omega_{d-2}\frac{\sin^{d-1}\varphi}{d-1}(\xi+\lambda)^{d-1}\sigma((\xi+\lambda)^2)-\omega_{d-2}\frac{\sin^{d-1}\varphi}{d-1}\xi^{d-1}\sigma(\xi^2)\\
    &-\omega_{d-2}\big(\sin^{d-1}\varphi\big)\int_{\xi}^{\xi+\lambda}\sigma_{\perp}(\zeta^2)\zeta^{d-2}\,d\zeta\Bigg\}\bs{u}(\bs{x})\\
    &=-\frac{1}{2}X(\bs{x},t)\omega_{d-2}\frac{\sin^{d-1}\varphi}{d-1}
    \Bigg\{(\xi+\lambda)^{d-1}\sigma((\xi+\lambda)^2)-\xi^{d-1}\sigma(\xi^2)\\
    &-(d-1)\bigg(\int_{\xi}^{\xi+\lambda}\sigma_{\perp}(\zeta^2)\zeta^{d-2}\,d\zeta\bigg)\Bigg\}\bs{u}(\bs{x})\\
    &\overset{0<\lambda\ll 1}{\approx}-\frac{1}{2}X(\bs{x},t)\omega_{d-2}\frac{\sin^{d-1}\varphi}{d-1}\xi^{d-1}\lambda\psi(\xi)\bs{u}(\bs{x}),
    \end{split}
\end{equation}
where \(\psi\) is given in Equation~(\ref{vk3})).\footnote{Recall that the probability density \(\bs{y}\mapsto X(\bs{y},t)\) is assumed to be constant in a neighborhood of \bs{x} for each fixed
\(t\).}

There are two important observations to make regarding the computation in Equation~(\ref{vk5}). First, the contribution to this net flux rate vector through the ends depends only upon the radial
eigenvalue \(\sigma\) while that through the lateral surface depends only upon the lateral eigenvalue  \(\sigma_{\perp}\); moreover, the lateral contribution is always positive. Second, the net flux
rate vector out of the pill-box is always radial in the sense that it is a multiple of the outward unit radial vector \(\bs{u}(\bs{x})\); thus,
\begin{equation}\label{vk6}
\int_{\partial B}\bs{j}(\bs{y},t,\bs{n}(\bs{y}))\,dA(\bs{y})=-\frac{1}{2}X(\bs{x},t)\mu(\xi,t,\varphi,\lambda)\bs{u}(\bs{x}),
\end{equation}
where \(\mu=\mu(\xi,t,\varphi,\lambda)\) is a scalar multiplier, which can be computed from the penultimate expression in Equation~(\ref{vk5}). The Divergence Theorem implies that
\begin{equation}\label{vk7}
-\frac{1}{2}X(\bs{x},t)\int_B \bs{div}\tilde{\bs{D}}(\bs{y})\,dV(\bs{y})=-\frac{1}{2}X(\bs{x},t)\int_{\partial B}\tilde{\bs{D}}(\bs{y})\bs{n}(\bs{y})\,dA(\bs{y}).
\end{equation}
Hence, using Equation~(\ref{vk4}),
\begin{equation}\label{vk8}
-\frac{1}{2}X(\bs{x},t)\int_B \bs{div}\tilde{\bs{D}}(\bs{y})\,dV(\bs{y})=-\frac{1}{2}X(\bs{x},t)\mu(|\bs{x}|,t,\varphi,\lambda)\bs{u}(\bs{x}).
\end{equation}

If the scalar multiplier \(\mu\) in Equation~(\ref{vk6}) is \emph{positive,} we can say that net probability flux rate vector out of \(B(\bs{x},\varphi,\lambda)\) is a radial vector pointing
\emph{inward} toward the origin. We interpret this to mean that, if the probability distribution at time \(t\) is locally spatially uniform, there is an instantaneous net statistical tendency for
points \(\bs{x}\) in \(B(\bs{x},\varphi,\lambda)\) to leave the pill-box and move radially inward toward the origin; in this sense, for \(B(\bs{x},\varphi,\lambda)\), there is a net statistical
tendency for \(|\bs{x}|\) to decrease and, hence, pairs of large particles have a net statistical tendency to move closer together. In Section~\ref{sec-dir} we we give an alternative context in
which this interpretation is reenforced.
\begin{define}
If the scalar multiplier \(\mu\) in Equations~(\ref{vk6},\ref{vk8}) is positive, we say that the region \(B(\bs{x},\varphi,\lambda)\) is \emph{attraction-biased} or \emph{attractive}. Similarly, if
\(\mu\) in Equations~(\ref{vk6},\ref{vk8}) is negative, we say that the region \(B(\bs{x},\varphi,\lambda)\) is \emph{repulsion-biased} or \emph{repulsive}. If \(\mu\) is zero we say the region
\(B(\bs{x},\varphi,\lambda)\) is \emph{neutral}.
\end{define}

By Equations~(\ref{fd3a}-\ref{vk3a}), we can decide whether \(B(\bs{x},\varphi,\lambda)\) is repulsive or attractive (or neutral) by assessing the algebraic sign of \(\psi\) in Equation~(\ref{vk3}).
Specifically, if \(\lambda\) is small, the volume of  \(B(\bs{x},\varphi,\lambda)\) is small, so \(B(\bs{x},\varphi,\lambda)\) is attractive if \(\psi(|\bs{x}|)>0\) and repulsive if
\(\psi(|\bs{x}|)<0\). Indeed, evaluating the left side of Equation~(\ref{vk8}) directly yields\footnote{Observe that
\begin{equation*}\label{vk9a}
\psi(\xi)\xi^{d-1}=\frac{\partial}{\partial\xi}\Big(\xi^{d-1}\sigma(\xi^2)\Big)-\frac{d-1}{\xi}\xi^{d-1}\Big(\sigma_{\perp}(\xi^2)\Big),
\end{equation*}
which shows that the expressions in Equations~(\ref{vk5}) and~(\ref{vk9}) are the same and, hence, independently verifies the Divergence Theorem for the pill-box. }
\begin{multline}\label{vk9}
-\frac{1}{2}X(\bs{x},t)\int_B \bs{div}\tilde{\bs{D}}(\bs{y})\,dV(\bs{y})\\
    =-\frac{1}{2}X(\bs{x},t)\omega_{d-2}\frac{\sin^{d-1}\varphi}{d-1}\bigg\{\int_{|\bs{x}|}^{|\bs{x}|+\lambda}\psi(\zeta)\zeta^{d-1}\,d\zeta\bigg\}\bs{u}(\bs{x})\\
    \overset{0<\lambda\ll 1}{\approx}-\frac{1}{2}X(\bs{x},t)\omega_{d-2}\frac{\sin^{d-1}\varphi}{d-1}|\bs{x}|^{d-1}\lambda\psi(|\bs{x}|)\bs{u}(\bs{x}).
\end{multline}

Based on this discussion we extend our definitions to more general regions in \(\mathbb{R}^d\). By a \textit{region} we here mean a closed and bounded set in \(\mathbb{R}^d\) so
regular that the Divergence Theorem applies.
\begin{define}
A region in \(\mathbb{R}^d\), not containing the origin, is \emph{attraction-biased} or \emph{attractive} whenever the scalar function \(\psi\) in Equation~(\ref{vk3}) is positive inside the region,
\emph{repulsion-biased} or \emph{repulsive} whenever \(\psi\) is negative inside the region, or \emph{neutral} whenever  \(\psi\) is zero inside the region. By extension, we say that the
\emph{point} \(\bs{x}\not=\bs{0}\) is \emph{attractive, repulsive,} or \emph{neutral} whenever \(\psi(|\bs{x}|)\) is positive, negative, or zero.\footnote{Since all integrals over regions including
the origin converge, we could extend the definition to such regions by a straightforward limiting process. Bear in mind that the regions to which we refer here are regions in the \emph{difference
space} of the variable~\bs{x}.}
\end{define}

\begin{com}\label{com-probdens}
If the probability density \(\bs{y}\mapsto X(\bs{y},t)\) is not constant in a neighborhood of \bs{x}, but \(|\bs{\nabla_{x}}X(\bs{x},t)|\) is small, our conclusions are not much affected. In this
case, the flux rate \(J(\bs{x},t)\) in Equation~(\ref{vk3a}) will have the small additional term \(-\frac{1}{2}\sigma(|\bs{x}|^2)\big(\bs{u}(\bs{x})\bullet\bs{\nabla_{x}}X(\bs{x},t)\big)\).
\end{com}

\begin{com}\label{com-geom}
A geometric observation is in order here. The outward radial component of the net probability flux rate vector out of \(B(\bs{x},\varphi,\lambda)\), defined through Equation~(\ref{vk4}),  is not the
same as the flux out of \(B(\bs{x},\varphi,\lambda)\) of the outward radial component of  \(\bs{y}\mapsto \bs{j}(\bs{y},t,\bs{u}(\bs{y}))\). The latter depends only upon the radial eigenvalue,
\(\sigma\), while the former depends on both \(\sigma\) and the lateral eigenvalue,~\(\sigma_{\perp}\). Indeed, they differ by
\(\frac{1}{2}X(\bs{x},t)\int_B\frac{d-1}{|\bs{y}|}\sigma_{\perp}(|\bs{y}|^2)\,d\bs{y}\), which is a consequence of the identity
\[(\bs{div}\tilde{\bs{D}}(\bs{x}))\bullet\bs{u}(\bs{x})-div(\tilde{\bs{D}}(\bs{x})\bs{u}(\bs{x}))=-\frac{1}{|\bs{x}|}\tilde{\bs{D}}(\bs{x})\bullet\bs{P}^{\perp}(\bs{x})=-\frac{d-1}{|\bs{x}|}\sigma_{\perp}(|\bs{x}|^2).\]
Thus our van~Kampen flux rate depends on the lateral as well as the radial effects induced by \(\tilde{\bs{D}}\).
\end{com}

\begin{com}
For \(\bs{v}=\bs{u}(\bs{x})\) we see that the algebraic sign of the van~Kampen flux \(J(\bs{x},t)\) is opposite that of \(\psi(|\bs{x}|)\). The algebraic sign of the radially oriented van~Kampen
flux rate density determines whether a point is attractive, repulsive, or neutral. Of course, the same is true for any outward \bs{v}; that is, for any \bs{v} such that
\(\bs{v}\bullet\bs{u}(\bs{x})>0\).
\end{com}

\begin{com}
Our notion of attraction/repulsion-bias is \emph{instantaneous} in the sense that it is expected to be valid over a short time interval after the initial time \(t\). For if there is an attraction-
or repulsion-bias at some time \(t\), we do not expect the distribution \(X(\bs{x},t)\), initially assumed uniform, to remain so for long.
\end{com}

\begin{com}\label{com3}
It is instructive to relate our definition of \emph{probability flux rate vector} in Equation~(\ref{vk1}) with the one-dimensional case. Let \(B\in\mathbb{R}\) denote the interval
\(B:=[x,x+\lambda]\), where for definiteness we take \(0<x<x+\lambda\). Again, \(X(x,t)\) denotes the probability density. The probability transfer rate from \(x\text { to } x+\lambda\)
(left-to-right) is \(\frac{1}{2}X(x,t)\tilde{D}(x)\) while the probability transfer rate from \(x+\lambda\text { to } x\) (right-to-left) is \(-\frac{1}{2}X(x+\lambda,t)\tilde{D}(x+\lambda)\). The
net probability transfer rate between \(x\) and \(x+\lambda\)  is therefore \(\frac{1}{2}X(x,t)\tilde{D}(x)-\frac{1}{2}X(x+\lambda,t)\tilde{D}(x+\lambda,t)\). If the the initial probability density
is locally spatially uniform, then,
\begin{equation}\label{vkc1}
\begin{split}
\frac{1}{2}X(x,t)\tilde{D}(x)-\frac{1}{2}X(x+\lambda,t)\tilde{D}(x+\lambda)
                        &=\frac{1}{2}X(x,t)[\tilde{D}(x)-\tilde{D}(x+\lambda)]\\
                        &=-\frac{1}{2} X(x,t)\lambda\frac{\tilde{D}(x+\lambda)-\tilde{D}(x)}{\lambda}\\
                        &\approx -\frac{1}{2} X(x,t)\lambda\tilde{D}'(x).
\end{split}
\end{equation}
In particular, if \(\tilde{D}'(x)>0\) and  \(0<\lambda\ll 1\), the net probability transfer rate between the endpoints of the small interval, \(B\), is negative; that is, toward the origin.

Now reproduce our pill-box discussion for \(d=1\); that is, in the vector space~\(\mathbb{R}^1\). The one-dimensional version of Equation~(\ref{vk1}) is \(j(x,t,v)=-\frac{1}{2}X(x,t)\tilde{D}(x)v\).
Here, \(v\) is construed as a one-dimensional unit vector in~\(\mathbb{R}^1\), namely \(v=\pm1\). In this context, the unit \emph{outward} normal vector to \(B\) at \(x\) is \(n(x)=-1\) and the unit
\emph{outward} normal vector to \(B\) at \(x+\lambda\) is \(n(x+\lambda)=+1\). Therefore, the net probability efflux rate vector out of \(B\) is (assuming that the initial probability density is
locally spatially uniform)
\begin{multline}\label{vkc2}
j(x,t,n(x))+ j(x+\lambda,t,n(x+\lambda))    =\\
                                            \bigg(-\frac{1}{2}X(x,t)\tilde{D}(x)n(x)\bigg)
                                            +\bigg(-\frac{1}{2}X(x+\lambda,t)\tilde{D}(x+\lambda)n(x+\lambda)\bigg)\\
                                            =\bigg(-\frac{1}{2}X(x,t)\tilde{D}(x)(-1)\bigg)
                                            +\bigg(-\frac{1}{2}X(x,t)\tilde{D}(x+\lambda)(+1)\bigg)\\
                                            =-\frac{1}{2}X(x,t)(\tilde{D}(x+\lambda)-\tilde{D}(x))\\
                                            \approx -\frac{1}{2} X(x,t)\lambda\tilde{D}'(x).
\end{multline}
Thus, \emph{for \(d=1\), the net probability flux rate vector out of \(B\) is precisely the net probability transfer rate between the end-points of~\(B\).} If \(\tilde{D}'(x)>0\) and \(0<\lambda\ll
1\), this vector points toward the origin. Note that \(-\frac{1}{2} X(x,t)\lambda\tilde{D}'(x)\) is approximately the integral of the van~Kampen flux: \(\int_B J(y,t,+1)\,dy\).

Of course, when \(d=1\) there is no lateral effect to consider. When \(d\geq2\) the lateral effect is significant.
\end{com}

\begin{maxwell}\label{com-vkflux}For the \hyperlink{maxker}{Maxwell kernel} \(\bs{g}_{\varepsilon}\),  given by Equation~(\ref{kernel1}), the
divergence of the diffusion coefficient matrix \(\tilde{\bs{D}}_{\varepsilon}(\bs{x})\) in Equation~(\ref{gen11}) is the vector valued function
\begin{equation}\label{flux10}
    \bs{div}\tilde{\bs{D}}_{\varepsilon}(\bs{x})=\psi(|\bs{x}|)\bs{u}(\bs{x})=\frac{c_{\varepsilon,d}}{\varepsilon^2}
    e^{-\frac{|\bs{x}|^2}{2\varepsilon}}|\bs{x}|\big((2+d)\varepsilon-|\bs{x}|^2\big)
    \bs{u}(\bs{x}).
\end{equation}

The van~Kampen flux when the \hyperlink{kernel}{forcing kernel}, \(\bs{g}_{\varepsilon}\), is derived from a Maxwellian velocity field for the small particles,  is immediate from
Equation~(\ref{flux10}):
\begin{equation}\label{flux3a}
J(\bs{x},t,\bs{v}) =\bigg(-\frac{1}{2}X(\bs{x},t)\frac{c_{\varepsilon,d}}{2\varepsilon^2}e^{-\frac{|\bs{x}|^2}{2\varepsilon}}
    |\bs{x}|\big((2+d)\varepsilon-|\bs{x}|^2\big)\bigg)\big(\bs{u}(\bs{x})\bullet\bs{v}\big)
\end{equation}
and, hence,
\begin{equation}\label{flux3}
J(\bs{x},t) =-\frac{1}{2}X(\bs{x},t)\frac{c_{\varepsilon,d}}{2\varepsilon^2}e^{-\frac{|\bs{x}|^2}{2\varepsilon}}
    |\bs{x}|\big((2+d)\varepsilon-|\bs{x}|^2\big)
\end{equation}

We conclude that, for  \(\bs{g}_{\varepsilon}\), the region \(\{\bs{x}\in\mathbb{R}^d:|\bs{x}| < \sqrt{(2+d)\varepsilon
}\}\)~is attraction-biased, and the region \(\{\bs{x}\in\mathbb{R}^d:|\bs{x}|
> \sqrt{(2+d)\varepsilon }\}\)   is repulsion-biased. The points of neutral bias have measure zero in \(\mathbb{R}^d\).
Note that the radius of the attraction-biased region is proportional to the correlation length \(\sqrt{\varepsilon}\) and increases with dimension. We should expect a similar result for any
unimodular distribution. Figure~(\ref{f3-flux1}) is a graph of the normalized van Kampen flux \(J\) as a function of the separation \(\xi=|\bs{x}|\) for the Maxwell kernel.

Observe that if we invoke the normalization of Comment~\ref{com-delcor}, so that \(\bs{x}\mapsto\frac{1}{d} |\bs{g}_{\varepsilon}(\bs{x})|^2\) is a probability density on \(\mathbb{R}^d\), then
\(\frac{1}{\sqrt{2}}\sqrt{(2+d)\varepsilon }\) is the standard deviation of the corresponding probability distribution for \bs{x}. In view of the fact that
\(\bs{x}=\frac{1}{\sqrt{2}}(\bs{r}^2-\bs{r}^1)\), we see that \emph{the separation that determines the region of attractive bias is precisely the standard deviation of the corresponding distribution
for the difference \(\bs{r}^2-\bs{r}^1\).}

In view of Comment~\ref{com-diffnorm}, this conclusion will not change if another normalization is used, provided that the probability density \(\bs{x}\mapsto\frac{1}{d}
|\bs{g}_{\varepsilon}(\bs{x})|^2\) is replaced by the appropriate probability density, say \(\bs{x}\mapsto\frac{2D}{d} |\bs{g}_{\varepsilon}(\bs{x})|^2\).

The curve in Figure~(\ref{f3-flux1}) measures the statistical tendency toward clustering with separation. Compare it with the curves in Figure~(\ref{LaserTweezerLayout} B) that measure the potential
between a large particle and a substrate with separation. Qualitatively they are similar. We expect that any assumed velocity distribution of the small particles that is qualitatively similar to the
Maxwell distribution will tell the same story.

\end{maxwell}

\begin{figure} [H]
\begin{center}
\includegraphics[height=3in, width=5in]{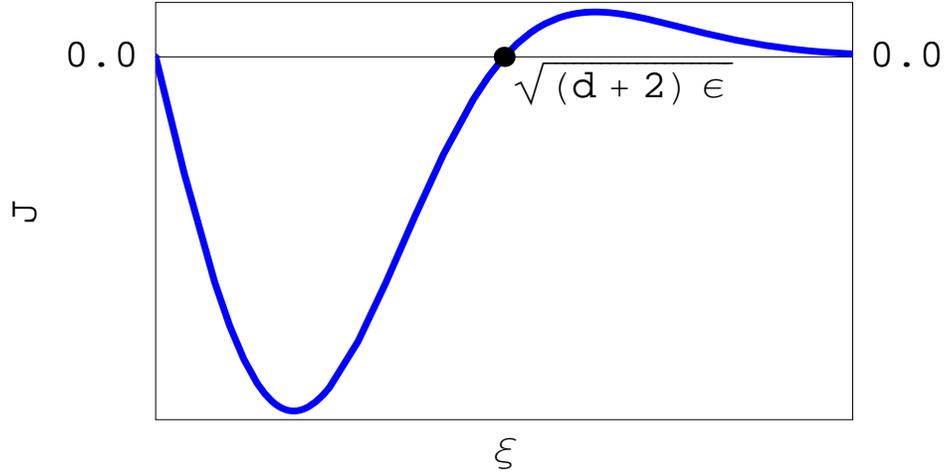}
\caption[van~Kampen Flux]{The (normalized) van~Kampen flux , $J(\bs{x},t)$, as a function of the separation distance, $\xi=|\bs{x}|$.}
\label{f3-flux1} %
\end{center}
\end{figure}

\begin{com}
Starting from a uniform distribution, at least for short times, we expect that the distance between the two particles in attraction-biased regions would tend to decrease and in repulsion-biased
regions this distance would tend to increase. In particular, for particles sufficiently close together, the attraction bias would be consistent with the depletion phenomena (clustering) observed in
colloids as described in Section~\ref{sec-jam}. However, the long-time behavior may be different. Indeed, a proof of the following result can be found in Skorohod~\cite{SK}, Ch. 1.3: Assume \(d=1\)
and let \(z(t,b)\) be the unique solution of Equation~(\ref{gen18}) with \(z(0) = b\neq 0\). Then,
\begin{equation}\label{p1}
    \mathcal{P}[ \lim_{t\rightarrow \infty} z(t,b) = \bs{0}] = 1.
\end{equation}
This means that, on the average, the distance between two particles on the real line will eventually tend to zero. In other words, the whole positive and negative real lines are attractive regions;
any repulsion-biased regions will have no long-term effect on the distances between the two particles. We believe the difference between long-time behavior and short-time behavior in repulsive
regions for dimension \(d=1\) to be the result of the recurrence of one-dimensional Brownian motions. 
\end{com}
\begin{com}
In Theorem~\ref{longterm} we showed that the long term behavior of the difference \(\bs{r}^2-\bs{r}^1\) is recurrent for \(d=2\), provided the diffusion matrix is \hyperlink{raddom}{radially
dominant}, and transient for \(d>2\) (See Kotelenez~\cite{KO4}). Our discussion, based as it is on the initial uniformity of the probability distribution \(X(\bs{x},t)\), is necessarily a short-time
result. The connection between the short- and long-time behavior remains an open and interesting problem, which will be investigated in future research.
\end{com}

\subsection{A Tale of Two Fluxes\label{sec-dir}} In Section~\ref{sec-sep}, the generator, \(\dot{A}\), of the \hyperlink{sepproc}{separation-process}, \(\xi=|\bs{x}|\), was derived.
If \(\dot{A}\) of Equation~(\ref{mag5a}) is used \emph{directly} in \((0,\infty)\) to compute the \(1-\)dimensional van~Kampen flux rate for the process~\(\xi\), the result is
\begin{equation}\label{dir0}
\dot{J}(\bs{x},t)=\frac{1}{2}\Big(\frac{d-1}{\xi}\sigma_{\perp}(\xi^2)-\frac{d}{d\xi}\sigma(\xi^2)\Big)X(\bs{x},t),
\end{equation}
assuming the initial probability density, \(X(\bs{x},t)\), to be locally spatially uniform.\footnote{For this computation we suppose that the probability density \(X\) depends on \bs{x} through
\(\xi=|\bs{x}|\). In view of the assumption, this is no restriction.} van~Kampen's procedure for this is provided in Appendix~\ref{app-b2}. This flux rate, of course, is different than our
\(d\)-dimensional van-Kampen flux rate, given in Equation~(\ref{vk3a}). The difference between them is that our flux has an additional \emph{radial} term, \(-\frac{d-1}{\xi}\sigma(\xi^2)\), which is
significant. Indeed, for the Maxwell kernel, our version predicts an attractive region for every value of \(d\geq 1\), which region increases with \(d\). On the other hand, for the same Maxwell
kernel, the van~Kampen flux of Equation~(\ref{dir0}) predicts such an attractive region only for \(d=1,2,3\), which region decreases with \(d\).\footnote{See Comment~\ref{com-vkflux2}.} This begs
the questions: Why are the two fluxes so different?\footnote{Of course they do agree when \(d=1\).} And, which one provides an appropriate means to measure a tendency toward clustering?

To address the relationship between the two fluxes we consider the Dirichlet quadratic form associated with the generator of the Markov-Feller process given in Equation~(\ref{gen12}).\footnote{For a
discussion of Dirichlet forms see Fukushima~\cite{FU} or Ma and R\"{o}ckner~\cite{MA}.} For suitable scalar-valued functions \(f,\,g\) defined in \(\mathbb{R}^d\) this quadratic form is\footnote{We
follow the convention using \( -\tilde{A} \) in the quadratic form. Also, recall that \(\bs{\nabla}_{\bs{x}}\bs{\nabla}_{\bs{x}}^Tf \) denotes the square, symmetric matrix whose components are given
by \(\frac{\partial^2}{\partial x_i\partial x_j}f(\bs{x})\).}
\begin{equation}\label{dir1}
\breve{\mathcal{E}}(f,g):=\left<-\tilde{A}f,g  \right>=-\int_{\mathbb{R}^d}\frac{1}{2}\Big(\tilde{\bs{D}}(\bs{x})\bullet\bs{\nabla}_{\bs{x}}\bs{\nabla}_{\bs{x}}^Tf(\bs{x})\Big)g(\bs{x})\,d\bs{x}
\end{equation}
The \emph{breve} (\(\,\breve{}\,  \)) over the \(\mathcal{E}\) here signifies that the underlying equilibrium measure for the quadratic form is Lebesgue measure, \(d\bs{x}\), in \(\mathbb{R}^d\). If
\(f\) and \(g\) are sufficiently regular to ensure that the boundary terms associated with the Divergence Theorem vanish, the quadratic form in Equation~(\ref{dir1}) can be rewritten
as\footnote{Surely this is the case if \(f\) and \(g\) have compact support.}
\begin{equation}\label{dir1a}
\begin{split}
\breve{\mathcal{E}}(f,g)=\left<-\tilde{A}f,g  \right>   &=\int_{\mathbb{R}^d}\frac{1}{2}\tilde{\bs{D}}(\bs{x})\bs{\nabla}_{\bs{x}}f(\bs{x})\bullet\bs{\nabla}_{\bs{x}} g(\bs{x})   \,d\bs{x}\\
                            &+\int_{\mathbb{R}^d}\frac{1}{2}\bs{div}\tilde{\bs{D}}(\bs{x})\bullet\bs{\nabla}_{\bs{x}}f(\bs{x}) g(\bs{x})   \,d\bs{x}.\\
\end{split}
\end{equation}

Next we compute the quadratic form, \(\breve{\mathcal{E}}\), of Equation~(\ref{dir1}) or~(\ref{dir1a}) for radially dependent functions:
\(\tilde{f}(\bs{x}):=\varphi(|\bs{x}|),\,\tilde{g}(\bs{x}):=\gamma(|\bs{x}|)\), as we did in Section~\ref{sec-sep}. If we suppose that the functions \(\varphi\) and \(\gamma\) are sufficiently
regular on \(\mathbb{R}^+\),
Equation~(\ref{dir1a}) becomes\footnote{We suppose that \(\varphi,\,\gamma\) are smooth and have compact support in \((0,\infty)\). 
Recall that, for \(\bs{x}\not=\bs{0}\), \(\bs{u}(\bs{x})=\frac{\bs{x}}{|\bs{x}|}\), \(\bs{\nabla}_{\bs{x}}\varphi(|\bs{x}|) =\varphi'(|\bs{x}|)\bs{u}(\bs{x})  \), etc.}
\begin{equation}\label{dir2}
\begin{split}
\breve{\mathcal{E}}(\tilde{f},\tilde{g})=\left<-\tilde{A}\tilde{f},\tilde{g} \right>
                            &=\frac{1}{2}\int_{\mathbb{R}^d}\tilde{\bs{D}}(\bs{x})\bs{u}(\bs{x})\bullet\bs{u}(\bs{x})\varphi'(|\bs{x}|)\gamma'(|\bs{x}|)\,d\bs{x}\\
                            &+\frac{1}{2}\int_{\mathbb{R}^d}\bs{div}\tilde{\bs{D}}(\bs{x})\bullet\bs{u}(\bs{x})\varphi'(|\bs{x}|)\gamma(|\bs{x}|)\,d\bs{x}.
\end{split}
\end{equation}
Observe that the first term in Equation~(\ref{dir1a} or \ref{dir2}) itself determines a Dirichlet quadratic form since it is positive definite and, more important, it is symmetric with respect to
Lebesgue measure, \(d\bs{x}\), in \(\mathbb{R}^d\). The second is neither symmetric with respect to Lebesgue measure, \(d\bs{x}\), in \(\mathbb{R}^d\) nor definite.\footnote{The second quadratic
form, while not symmetric, is not anti-symmetric in our context; in fact, it is anti-symmetric if and only if \(\psi(\xi)=\frac{k}{\xi^{d-1}}\), for some constant \(\kappa\).}

We now use the representation Equation~(\ref{d10}) of Theorem~\ref{pairdiff} together with the formulas of Equations~(\ref{fd3a}) and~(\ref{vk3}) to rewrite Equation~(\ref{dir2}) as
\begin{equation}\label{dir3}
\begin{split}
\breve{\mathcal{E}}(\tilde{f},\tilde{g})=\left<-\tilde{A}\tilde{f},\tilde{g} \right>
                                            &=\frac{1}{2}\int_{\mathbb{R}^d}\sigma(|\bs{x}|^2)\varphi'(|\bs{x}|)\gamma'(|\bs{x}|)\,d\bs{x}\\
                                            &+\frac{1}{2}\int_{\mathbb{R}^d}\psi(|\bs{x}|)\varphi'(|\bs{x}|)\gamma(|\bs{x}|)\,d\bs{x}.
\end{split}
\end{equation}
Recall that \(\bs{div}\tilde{\bs{D}}(\bs{x})=\psi(|\bs{x}|)\bs{u}(\bs{x})\), so the direction of this vector is always radial; the scalar-valued function \(\psi\) is just its outward radial
component. Since the integrands in Equation~(\ref{dir3}) depend only on the radial component of \bs{x},  \(\xi=|\bs{x}|\), we can rewrite Equation~(\ref{dir3}) in terms of integrals over
\([0,\infty)\). Thus\footnote{Recall that \(\omega_{d-1}\) is the surface area of the unit sphere \(\mathbb{S}^{d-1}\).}
\begin{equation}\label{dir4}
\begin{split}
\bar{\mathcal{E}}(\varphi,\gamma)=\left<-\tilde{A}\tilde{f},\tilde{g} \right>
                                            &=\frac{1}{2}\omega_{d-1}\int_0^{\infty}\sigma(\xi^2)\varphi'(\xi)\gamma'(\xi)\xi^{d-1}\,d\xi\\
                                            &+\frac{1}{2}\omega_{d-1}\int_0^{\infty}\psi(\xi)\varphi'(\xi)\gamma(\xi)\xi^{d-1}\,d\xi.
\end{split}
\end{equation}
The \emph{bar} (\( \,\bar{}\, \)) over the \(\mathcal{E}\) indicates that we regard the underlying equilibrium measure of the quadratic form restricted to radially dependent functions to be the
\(1\)-dimensional \emph{weighted} measure \( m_d(d\xi) \) on \([0,\infty)\) given by the \(d-\)dimensional volume element
\begin{equation}\label{md} m_d(d\xi):=\omega_{d-1}\xi^{d-1}\,d\xi.
 \end{equation}
Observe now that the first term in Equation~(\ref{dir4}) is symmetric with respect to the measure defined by \(m_d(d\xi):=\omega_{d-1}\xi^{d-1}\,d\xi\) and the second is not symmetric with respect
to this measure. For later convenience, write
\begin{align}
\bar{\mathcal{E}}_s(\varphi,\gamma)  &:=\frac{1}{2}\int_0^{\infty}\sigma(\xi^2)\varphi'(\xi^2)\gamma'(\xi)\,m_d(d\xi),\\
\bar{\mathcal{E}}_{ns}(\varphi,\gamma)  &:=\frac{1}{2}\int_0^{\infty}\psi(\xi)\varphi'(\xi)\gamma(\xi)\,m_d(d\xi),
\end{align}
so \(\bar{\mathcal{E}}(\varphi,\gamma)=\bar{\mathcal{E}}_s(\varphi,\gamma)+\bar{\mathcal{E}}_{ns}(\varphi,\gamma) \).

Finally, after a bit of standard manipulation, involving another integration-by-parts, we obtain
\begin{equation}\label{dir5}
\bar{\mathcal{E}}(\varphi,\gamma)=\left<-\tilde{A}\tilde{f},\tilde{g} \right> =-\int_0^{\infty}\frac{1}{2}\Big(\sigma(\xi^2)\varphi''(\xi)+\frac{d-1}{\xi}\sigma_{\perp}(\xi^2)\varphi'(\xi)
\Big)\gamma(\xi)\,m_d(d\xi),
\end{equation}
or, equivalently,
\begin{equation}\label{dir6}
\breve{\mathcal{E}}(\tilde{f},\tilde{g})=\left<-\tilde{A}\tilde{f},\tilde{g} \right>    =\int_{\mathbb{R}^d}(-\dot{A}\varphi)(|\bs{x}|)\gamma(|\bs{x}|)\,d\bs{x},
\end{equation}
where \(\dot{A}\) is the Bessel-type operator defined in Equation~(\ref{mag5a}). 

Using the quadratic form, we can decompose the generator \(\dot{A}\) of Equation~(\ref{mag5a}) into two parts, \(\dot{A}=\dot{A}_{s}+\dot{A}_{ns}\), where \(\dot{A}_{s}\) is the generator associated
with the negative definite, symmetric part of the quadratic form in Equation~(\ref{dir4}) and \(\dot{A}_{ns}\) is associated with the remaining non-symmetric part. Thus,
\begin{align}\label{dir6a}
    \begin{split}
    (\dot{A}_{s}\varphi)(\xi) &=\frac{1}{2}\Big(\frac{d-1}{\xi}\sigma(\xi^2)\varphi'(\xi)+2\xi\sigma'(\xi^2)\varphi'(\xi)+\sigma(\xi^2)\varphi''(\xi)\Big)\\
    &=\frac{1}{2}\frac{1}{\xi^{d-1}}\Big(\big(\xi^{d-1}\sigma(\xi^2)\big)'\varphi'(\xi)+ \big(\xi^{d-1}\sigma(\xi^2)\big)\varphi''(\xi)  \Big)\\
    &=\frac{1}{2}\frac{1}{\xi^{d-1}}\Bigg(\xi^{d-1}\sigma(\xi^2)\varphi'(\xi)  \Bigg)',
    \end{split}\\
    \begin{split}
    (\dot{A}_{ns}\varphi)(\xi) &=\frac{1}{2}\Big(\frac{d-1}{\xi}\sigma_{\perp}(\xi^2)\varphi'(\xi)-\frac{d-1}{\xi}\sigma(\xi^2)\varphi'(\xi)-2\xi\sigma'(\xi^2)\varphi'(\xi)\Big)\\
    &=\frac{1}{2}\frac{1}{\xi^{d-1}}\Bigg(\frac{d-1}{\xi}\Big(\xi^{d-1}\sigma_{\perp}(\xi^2)\Big)\varphi'(\xi)-\Big(\xi^{d-1}\sigma(\xi^2)\Big)'\varphi'(\xi)   \Bigg)\\
    &=-\frac{1}{2}\psi(\xi)\varphi'(\xi).
    \end{split}
\end{align}
Our \(d\)-dimensional flux corresponds to the second, \emph{non-symmetric}, part of the Dirichlet quadratic form in Equation~(\ref{dir1a} or~\ref{dir2}). Indeed,
\(J(\bs{x},t)=-\frac{1}{2}\psi(|\bs{x}|)X(\bs{x},t)\) is the \(d\)-dimensional van~Kampen flux rate given in Equation~(\ref{vk3a}) of Section~\ref{ssec-vk}.

There is a Markov diffusion in \(\mathbb{R}^d\) associated with the first, \emph{symmetric}, part of this form, generated by \(\dot{A}_s\); moreover, this diffusion has a symmetric transition
probability density. That is, in a given time interval, the probability of moving from \(\bs{x}\) to \(\bs{y}\) is the same as moving from \(\bs{y}\) to \(\bs{x}\).\footnote{See
Kotelenez~\cite{KO4}.} For the magnitudes, there is also a Markov diffusion in \([0,\infty)\) associated with this term with a symmetric transition probability density. So, \textit{mutatis mutandis}
the probability of moving from \(|\bs{x}|\) to \(|\bs{y}|\) is the same as moving from \(|\bs{y}|\) to \(|\bs{x}|\). Thus there can be no net flux rate associated with the symmetric part of \(
\left<-\tilde{A}\tilde{f},\tilde{g} \right> \).
The remaining non-symmetric part, associated with \(\dot{A}_{ns}\), therefore characterizes the deviation from this neutral bias; it agrees precisely with our \(d\)-dimensional flux rate,
Equation~(\ref{vk3a}), computed by the pill-box argument in Section~\ref{sec-clump}. Note that \(\dot{A}_{ns}\) also generates a Markov process, but it is deterministic (first-order)  and, hence,
not a diffusion.

Next, with a bit of manipulation, we produce an alternative decomposition of the quadratic form \(\bar{\mathcal{E}}(\varphi,\gamma)\) of Equation~(\ref{dir4}). Namely,

\begin{align}\label{dir6b}
\bar{\mathcal{E}}(\varphi,\gamma) &=\frac{1}{2}\int_0^{\infty}\Big(\frac{d}{d\xi}\sigma(\xi^2)-\frac{d-1}{\xi}\sigma_{\perp}(\xi^2)\Big)\varphi'(\xi)\gamma(\xi)\,m_d(d\xi)\\
                                    &-\frac{1}{2}\int_0^{\infty}\frac{d}{d\xi}\Big(\sigma(\xi^2)\frac{d}{d\xi}\varphi(\xi)  \Big)\gamma(\xi)\,m_d(d\xi).
\end{align}
Write \(\bar{\mathcal{E}}(\varphi,\gamma)=\bar{\mathcal{E}}_1(\varphi,\gamma)+\bar{\mathcal{E}}_2(\varphi,\gamma)\), where
\begin{align}
\bar{\mathcal{E}}_1(\varphi,\gamma)   &:=\frac{1}{2}\int_0^{\infty}\Big(\frac{d}{d\xi}\sigma(\xi^2)-\frac{d-1}{\xi}\sigma_{\perp}(\xi^2)\Big)\varphi'(\xi)\gamma(\xi)\,m_d(d\xi),\label{dir6c}\\
\bar{\mathcal{E}}_2(\varphi,\gamma)   &:=-\frac{1}{2}\int_0^{\infty}\frac{d}{d\xi}\Big(\sigma(\xi^2)\frac{d}{d\xi}\varphi(\xi)  \Big)\gamma(\xi)\,m_d(d\xi).\label{dir6d}
\end{align}
In this alternative decomposition neither \(\bar{\mathcal{E}}_1(\varphi,\gamma)\) nor \(\bar{\mathcal{E}}_2(\varphi,\gamma)\) is symmetric with respect to the equilibrium measure \(m_d(d\xi)\).
However, suppose we \emph{formally} replace the equilibrium measure \(m_d(d\xi)\) with Lebesgue measure \(d\xi\) in both \(\bar{\mathcal{E}}_1(\varphi,\gamma)\) and
\(\bar{\mathcal{E}}_2(\varphi,\gamma)\), thereby ignoring the fact that the underlying context is \(d\)-dimensional.  Then \(\bar{\mathcal{E}}_2(\varphi,\gamma)\) is symmetric with respect to
Lebesgue measure \(d\xi\) and would correspond, in this context, to a symmetric transition probability density; that is, it would be neutral in the sense that it would contribute no net flux rate.
The non-neutral term, \(\bar{\mathcal{E}}_1(\varphi,\gamma)\), now corresponds to van~Kampen's \(1\)-dimensional   flux rate in Equation~(\ref{dir0}) associated with the generator \(\dot{A}\),
provided the initial probability density \(X(\xi,t)\) is locally spatially constant. (See Appendix~\ref{app-b2}.)

This discussion shows that for the diffusions in \(\mathbb{R}^d,\,d\geq 2\), which we consider, the appropriate equilibrium measure to compute a van~Kampen flux rate for the
\hyperlink{sepproc}{separation-process}, \(\xi=|\bs{x}|\), is the \emph{weighted} measure \(m_d(d\xi)\),  \emph{not} Lebesgue measure, \(d\xi\). The weighting reflects the adjustment needed to
account for the fact that the \(1\)-dimensional generator is, in this case, a radial reduction from \(d\)-dimensions. If van~Kampen's procedure is applied directly to the generator~\(\dot{A}\), the
underlying equilibrium measure is tacitly Lebesgue measure. For \(d\geq2\), the resulting flux rate will not then reflect the \(d\)-dimensional nature of the problem.  In view of the spherical
symmetry, this is essentially a lateral effect.\footnote{An examination of the the lateral term (the term containing \(\sigma_{\perp}\)) in the last equality of Equation~(\ref{vk5}) shows that this
lateral contribution is always positive.} We therefore claim that the \(d\)-dimensional van-Kampen flux rate we defined in Equation~\ref{vk2} should provide an appropriate and faithful measure of
the tendency toward clustering.

\begin{maxwell}\label{com-vkflux2}In Comment~\ref{com-vkflux} we compute our \(d\)-dimensional van~Kampen flux for the specific case of the Maxwell Kernel. As the computation shows, there is a small region of
attraction bias of magnitude \(\sqrt{(2+d)\varepsilon}\). The size of this region \emph{increases} with dimension \(d\). On the other hand, In Appendix~\ref{app-b2} we compute the \(1\)-dimensional
van~Kampen flux associated with the generator \(\dot{A}\) for the Maxwell kernel. For this flux there is a region of attraction bias for dimensions \(d=1,2,3\) only; moreover, the size of the region
\emph{decreases} with~\(d\). The drift term corresponding to \(\dot{A}\) depends only on the lateral probability flux, whereas, from our pill-box argument, the drift term in our \(d\)-dimensional
flux includes the effect of the radial probability flux through the ends. Without it this lateral effect is geometrically dominant. For \(d>3\), it completely overwhelms any inward radial bias. We
have argued that our \(d\)-dimensional flux is an appropriate measure of attraction bias since it does not include the geometrical, dimension dependent, effects associated with the neutral
(symmetric) part of the Dirichlet quadratic form. We claim that it faithfully represents the \emph{net} flux rate of \bs{x} out of a small region in \(\mathbb{R}^d\). It is not the same as the
\(1-\)dimensional van~Kampen flux rate of \(|\bs{x}|\) for a small region of \((0,\infty)\) associated with the generator \(\dot{A}\).
\end{maxwell}

\section{Summary of Results --- Further Research\label{sec-sum}}
The depletion effect, which accounts for the tendency toward clustering in colloids, even in the absence of electrostatic forces or intermolecular (van der Waals) forces, can now be measured with
some precision. The very term used, namely \textit{depletion,} is predicated on a notion of the interaction between the relatively few large particles, which comprise the solute, and the very large
number of very small particles, which comprise the solvent. We reviewed the classical model, based on the interaction of a few large spheres with many more small spheres. We also reviewed some
related experimental methods and data.

In this work we took an essentially statistical/probalistic approach to modeling the depletion effect. The foundation of our model is an interactive system of finitely many large particles and
infinitely many small particles. The dynamic equations for the interaction at the microscopic scale are essentially Newtonian. A kinetic model for the positions of the large particles at the
mesoscopic scale is obtained through an appropriate stochastic limit.  These equations are kinetic in the sense that the scaling limit renders negligible the inertial effects of the ensemble of
small particles on each large particle; that is, in the limit, only the fluctuation effects of the small particles on the position of a large particle are retained through a mean field force. 
The kinetic equations retain a small length scale, the correlation length, which derives essentially from the variance of the distribution of velocities assumed for the small particles.

We started with four phenomenological desiderata that a good model should have and that our model exhibits. Starting with our kinetic model we computed the correlation of the joint motion of a pair
of large particles. Although the motion for each individual particle is Brownian, the joint motion of the pair is not. (The joint motion becomes uncorrelated in the limit as the correlation length
approaches zero or as the large particles move far apart.) Specifically, we computed the covariance (diffusion) matrix for the joint motion of two large particles. As a consequence of fundamental
material restrictions (\hyperlink{isotropic}{isotropy}), this matrix was shown to have a very special structure. Using this special structure and a classical pill-box argument, we then defined and
computed a probability flux rate that, starting from a uniform distribution, was used to measure the tendency for the separation between two particles to decrease or increase. We computed this
tendency explicitly in the case that the velocity distribution of the small particles at the microscopic scale is Maxwellian. We showed that if the two particles are sufficiently close together, as
measured by the correlation length, there is a statistical bias in favor of further decrease in separation --- a depletion effect. We also showed how our flux rate is an appropriate generalization
of van~Kampen's one-dimensional flux rate. In parallel discussions we explicitly computed the infinitessimal generators for the two stochastic processes consisting of the (vector) difference and the
(scalar) distance between the two particles. We then showed the precise relationship between these generators and our version of van~Kampen's one-dimensional flux rate.

We emphasize that our results predict an initial statistical tendency toward clustering from a uniform state. Our procedure can easily be modified to account for a non-uniform initial state, but we
did not do so here. Of course, after the onset of any clustering, the state may no longer be uniform.

On the basis of our model, the long-term behavior of the separation between two large particles was shown to be recurrent in two dimensions and transient in higher dimensions. The relationship
between our short-term results and our long-term results needs further study.

Our model is kinetic; there are no \emph{forces} in the Newtonian sense. There is an underlying force structure in the original dynamic interactions, but they are only implicit in the derived
kinetic model. To associate forces directly with the clustering effect we derive would require embedding our model in some sort of variational structure. This is desirable from a conceptual and
physical point of view but, we believe, not strictly necessary.

There is a compelling qualitative comparison between the behavior of our flux rate, as computed for the Maxwellian case, and  potentials experimentally measured between large particles and a
substrate or, equivalently, between two large particles. Since the latter are in good agreement with the classical hard sphere model for the depletion effect, we are encouraged to believe that our
stochastic approach can faithfully model this effect in colloids.

Avenues of further investigation that may be fruitful using our approach are:
\begin{itemize}
    \item Devise experiments to measure directly the effects we compute. A force model may be needed.
    \item Incorporate interactions between large particles and inertial effects.
    \item Compute the tendency to clump from non-uniform initial states.
    \item Produce a directed model; that is, give the particles some preferred direction (shape).
    \item Reproduce the Vrij model computations in \(d\)-dimensions.
    \item Investigate the difference between long- and short-term behavior.
\end{itemize}

\newpage
\appendix
\section{Isotropic Functions}\label{app-a}
Here we outline a proof of Lemma~\ref{lem0}. We begin with a definition, a statement of Cauchy's Representation Theorem for isotropic functions, and some useful results of linear algebra.
\begin{define}A scalar-valued function \(\phi(\bs{v}_1,\bs{v}_2,\ldots ,\bs{v}_m)\) of \(m\)~vectors in \(\mathbb{R}^d\) is \textit{isotropic} whenever
\begin{equation}\label{a1}
\phi(\bs{v}_1,\bs{v}_2,\ldots ,\bs{v}_m)=\phi(\bs{Q}\bs{v}_1,\bs{Q}\bs{v}_2,\ldots ,\bs{Q}\bs{v}_m),
\end{equation}
for every orthogonal transformation \bs{Q}.\footnote{If the underlying invariance group consists of the subgroup of rotations (proper orthogonal transformations), \(\phi\) is said to be
\textit{hemitropic.}}
\end{define}
\begin{thm}[Cauchy's Representation Theorem]\label{cauchy}
\(\phi(\bs{v}_1,\bs{v}_2,\ldots ,\bs{v}_m)\) is an isotropic scalar-valued function of \(m\)~vectors in \(\mathbb{R}^d\) if and only if it can be expressed as a scalar-valued function \(\varphi\) of
the  \(\frac{m(m+1)}{2}\) inner products \(\{\bs{v}_i\bullet\bs{v}_j:\,i,j=1,2,\ldots,m\}\).\footnote{This is the version stated and proved in Truesdell and Noll~\cite{TR}(1965, \S B,II,11). There
is a corresponding result for hemitropic functions that requires the inclusion of all determinants.}
\end{thm}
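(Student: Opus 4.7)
The plan is to prove both implications of this biconditional. The easy direction (sufficiency) goes as follows: if $\phi(\bs{v}_1,\ldots,\bs{v}_m) = \varphi(\{\bs{v}_i\bullet\bs{v}_j\})$ depends only on the inner products, then for any orthogonal transformation $\bs{Q}$ we have $(\bs{Q}\bs{v}_i)\bullet(\bs{Q}\bs{v}_j) = \bs{v}_i\bullet\bs{v}_j$, so the value is unchanged under the replacement $\bs{v}_i \mapsto \bs{Q}\bs{v}_i$. This part can be dispatched in a single sentence.

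The substance is the necessity direction. The strategy is to show that the Gram matrix $G_{ij}=\bs{v}_i\bullet\bs{v}_j$ is a \emph{complete invariant} of the orthogonal group action on $m$-tuples of vectors in $\mathbb{R}^d$. Concretely, I would prove the following key lemma: \emph{If $(\bs{v}_1,\ldots,\bs{v}_m)$ and $(\bs{w}_1,\ldots,\bs{w}_m)$ are two $m$-tuples in $\mathbb{R}^d$ such that $\bs{v}_i\bullet\bs{v}_j = \bs{w}_i\bullet\bs{w}_j$ for all $i,j$, then there exists an orthogonal $\bs{Q}\in\mathcal{O}^d$ with $\bs{Q}\bs{v}_i = \bs{w}_i$ for each $i$.} Once this lemma is available, isotropy immediately forces $\phi(\bs{v}_1,\ldots,\bs{v}_m) = \phi(\bs{w}_1,\ldots,\bs{w}_m)$ whenever the Gram matrices coincide, so $\phi$ descends to a function $\varphi$ of the $\binom{m+1}{2}$ independent inner products, proving the representation.

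To prove the lemma I would proceed as follows. Choose a maximal linearly independent subcollection, say $\bs{v}_{i_1},\ldots,\bs{v}_{i_k}$, of the $\bs{v}_i$'s; their Gram matrix is positive definite. Because $\bs{w}_{i_1},\ldots,\bs{w}_{i_k}$ share the same Gram matrix (hence the same nonzero Gramian), they are also linearly independent. Define a linear map $\bs{Q}_0$ from $V:=\operatorname{span}(\bs{v}_{i_1},\ldots,\bs{v}_{i_k})$ to $W:=\operatorname{span}(\bs{w}_{i_1},\ldots,\bs{w}_{i_k})$ by $\bs{Q}_0\bs{v}_{i_\ell} := \bs{w}_{i_\ell}$. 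Direct calculation on basis vectors shows that $\bs{Q}_0$ preserves inner products on $V$, hence is an isometry $V\to W$. A short verification using the coefficients of a dependent $\bs{v}_i$ in the chosen basis, combined with inner product preservation, shows that $\bs{Q}_0\bs{v}_i = \bs{w}_i$ for \emph{every} $i$, not merely the basis indices. Finally, because $V$ and $W$ have equal dimension $k$, their orthogonal complements in $\mathbb{R}^d$ are both $(d-k)$-dimensional; pick any isometry between them and piece it together with $\bs{Q}_0$ to obtain an orthogonal $\bs{Q}\in\mathcal{O}^d$ extending $\bs{Q}_0$.

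The main obstacle is the extension step and the handling of linear dependence among the $\bs{v}_i$'s: one must take care that $\bs{Q}$ is genuinely defined on all of $\mathbb{R}^d$ (not just on $V$), and that every $\bs{v}_i$, including those not in the chosen basis, is correctly mapped to $\bs{w}_i$. This is where the positive semidefiniteness of the Gram matrix, together with the linear dependence of any $\bs{v}_i$ on the chosen basis being mirrored by the same dependence among the $\bs{w}_i$'s, plays the essential role. All remaining pieces — checking isometry on $V$, choosing an arbitrary isometry between the orthogonal complements, and verifying orthogonality of the assembled $\bs{Q}$ — are standard linear algebra. I would not attempt the hemitropic refinement mentioned in the footnote, since the theorem as stated concerns the full orthogonal group.
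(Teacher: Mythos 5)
Your proof is correct: the Gram matrix is indeed a complete invariant of the \(\mathcal{O}^d\)-action on \(m\)-tuples, your Witt-style extension argument (isometry on the span of a maximal independent subfamily, the same linear dependences forced on the \(\bs{w}_i\) by equality of Gram entries, then any isometry between the orthogonal complements) establishes the key lemma, and the factorization of \(\phi\) through the inner products follows at once. The paper itself supplies no proof of this theorem, deferring to Truesdell and Noll, and their classical argument for Cauchy's theorem is essentially the same reduction to the completeness of the inner-product invariants, so your route matches the cited source.
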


For a transformation \bs{B} in \(\mathbb{R}^d\) and  a scalar \(\beta\), the set \(\{\bs{y}:\bs{By}=\beta\bs{y}\}\) is a subspace of \(\mathbb{R}^d\). If this subspace contains a non-zero vector,
say \(\bs{v}\not=\bs{0}\), it is called a \textit{characteristic subspace for \bs{B} corresponding to \(\beta\)} and \bs{v} is  a \textit{characteristic vector (eigenvector) for~\bs{B} corresponding
to characteristic value (eigenvalue)~\(\beta\)}.

Let \bs{v} be a non-zero vector. A transformation \(\bs{Q}(\bs{v})\) is a \textit{simple reflection (through the hyperplane~\(\{\bs{v}\}^{\perp}\))} whenever \(\bs{Q}(\bs{v})\bs{v}=-\bs{v}\) and
\(\bs{Q}(\bs{v})\bs{a}=\bs{a}\), for all \(\bs{a}\in\{\bs{v}\}^{\perp}\). It is easy to see that \(\bs{Q}(\bs{v})\) is an orthogonal, but not proper orthogonal, transformation; that is,
\(det(\bs{Q}(\bs{v}))=-1\). \(\bs{Q}(\bs{v})\) is also symmetric and satisfies
\begin{equation}\label{reflect}
\begin{split}
\bs{Q}(\bs{v})\bs{P}(\bs{v})&=\bs{P}(\bs{v})\bs{Q}(\bs{v})=-\bs{P}(\bs{v}),\\
\bs{Q}(\bs{v})\bs{P}^{\perp}(\bs{v})&=\bs{P}^{\perp}(\bs{v})\bs{Q}(\bs{v})=\bs{P}^{\perp}(\bs{v}),\\
\bs{Q}(\bs{v})&=-\bs{P}(\bs{v})+\bs{P}^{\perp}(\bs{v}).
\end{split}
\end{equation}

\begin{thm}[Commutation Theorem]\label{commute}
Let \bs{A} and \bs{B} be two transformations that commute. Then \bs{A} leaves each characteristic space of \bs{B} invariant.\footnote{Conversely, if a  transformation \bs{A} leaves each
characteristic space of a \emph{symmetric}  transformation \bs{B} invariant, then \bs{A} and \bs{B} commute.}
\end{thm}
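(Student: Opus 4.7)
The plan is to proceed directly from the definitions. Fix a characteristic value $\beta$ of $\bs{B}$ and let $V_{\beta}:=\{\bs{y}\in\mathbb{R}^d:\bs{B}\bs{y}=\beta\bs{y}\}$ denote the corresponding characteristic subspace. To show that $\bs{A}$ leaves $V_{\beta}$ invariant, I take an arbitrary $\bs{v}\in V_{\beta}$ and must verify that $\bs{A}\bs{v}\in V_{\beta}$; that is, that $\bs{A}\bs{v}$ satisfies the eigenvalue relation $\bs{B}(\bs{A}\bs{v})=\beta(\bs{A}\bs{v})$.

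This verification is the single substantive step, and it is essentially a one-line computation that exploits commutativity: $\bs{B}(\bs{A}\bs{v})=(\bs{B}\bs{A})\bs{v}=(\bs{A}\bs{B})\bs{v}=\bs{A}(\bs{B}\bs{v})=\bs{A}(\beta\bs{v})=\beta(\bs{A}\bs{v})$. Since $\bs{v}$ was an arbitrary element of $V_{\beta}$, this shows $\bs{A}(V_{\beta})\subseteq V_{\beta}$, which is the desired invariance. There is no real obstacle here; the hypothesis $\bs{A}\bs{B}=\bs{B}\bs{A}$ is used once, and linearity handles the scalar $\beta$.

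For the converse statement in the footnote, the plan is to use the spectral decomposition of the symmetric transformation $\bs{B}$. Write $\bs{B}=\sum_{k}\beta_k\bs{E}_k$, where the $\beta_k$ are the distinct characteristic values of $\bs{B}$ and the $\bs{E}_k$ are the orthogonal projections onto the corresponding characteristic subspaces $V_{\beta_k}$; these projections satisfy $\bs{E}_j\bs{E}_k=\delta_{jk}\bs{E}_k$ and $\sum_k\bs{E}_k=\bs{1}$. The assumption that $\bs{A}$ leaves each $V_{\beta_k}$ invariant, together with the fact that each $V_{\beta_k}^{\perp}$ is the direct sum of the remaining characteristic subspaces (hence also $\bs{A}$-invariant by hypothesis), translates into the commutation relations $\bs{A}\bs{E}_k=\bs{E}_k\bs{A}$ for every $k$. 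Multiplying by $\beta_k$ and summing then gives $\bs{A}\bs{B}=\bs{B}\bs{A}$. The only delicate point in this direction is the step from invariance of each eigenspace to commutation with each spectral projector, which is where the symmetry of $\bs{B}$ is used: symmetry guarantees that the eigenspaces are mutually orthogonal and span $\mathbb{R}^d$, so $\bs{A}$-invariance of each $V_{\beta_k}$ forces $\bs{A}$-invariance of each $V_{\beta_k}^{\perp}$ as well.
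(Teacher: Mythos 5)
Your proof is correct. The paper itself states the Commutation Theorem without proof, treating it as a standard fact of linear algebra (only its consequences, the Reflection Corollary and Theorem~\ref{full}, are developed), so there is no argument in the paper to compare against; your one-line computation \(\bs{B}(\bs{A}\bs{v})=\bs{A}(\bs{B}\bs{v})=\beta(\bs{A}\bs{v})\) is exactly the standard justification of the forward statement. Your treatment of the converse in the footnote is also sound: symmetry of \(\bs{B}\) gives mutually orthogonal eigenspaces spanning \(\mathbb{R}^d\), so invariance of each \(V_{\beta_k}\) forces invariance of each \(V_{\beta_k}^{\perp}\), hence \(\bs{A}\bs{E}_k=\bs{E}_k\bs{A}\) for each spectral projection, and summing \(\beta_k\bs{E}_k\) yields \(\bs{A}\bs{B}=\bs{B}\bs{A}\).
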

An immediate corollary of the Commutation Theorem is
\begin{cor}[Reflection Corrolary]\label{commute2}
If \bs{B} is a transformation that commutes with the simple reflection \(\bs{Q}(\bs{v})\), then \bs{v} is a characteristic vector for \bs{B}. In this case, \bs{B} has the representation
\begin{equation}\label{brep}
\bs{B}=\big(\bs{u}(\bs{v})\bullet\bs{B}\bs{u}(\bs{v})\big)\bs{P}(\bs{v})+\bs{P}^{\perp}(\bs{v})\bs{B}\bs{P}^{\perp}(\bs{v})=\lambda(\bs{v})\bs{P}(\bs{v})+\bs{P}^{\perp}(\bs{v})\bs{B}\bs{P}^{\perp}(\bs{v}),
\end{equation}
where\ \(\lambda(\bs{v})\) is the eigenvalue associated with \bs{v}.
\end{cor}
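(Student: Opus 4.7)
The plan is to apply the Commutation Theorem (Theorem~\ref{commute}) directly to the pair \(\bs{B}\) and \(\bs{Q}(\bs{v})\). First, I would identify the characteristic spaces of the simple reflection \(\bs{Q}(\bs{v})\): from Equation~(\ref{reflect}) (or directly from the definition), \(\bs{Q}(\bs{v})\) has exactly two characteristic spaces, namely \(\{\bs{v}\}\) (the \(1\)-dimensional span of \(\bs{v}\), with characteristic value \(-1\)) and its orthogonal complement \(\{\bs{v}\}^{\perp}\) (the \((d-1)\)-dimensional hyperplane, with characteristic value \(+1\)). Since \(\bs{B}\) commutes with \(\bs{Q}(\bs{v})\), the Commutation Theorem asserts that \(\bs{B}\) leaves each of these characteristic spaces invariant. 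In particular, \(\bs{B}\bs{v}\in\{\bs{v}\}\), so there exists a scalar \(\lambda(\bs{v})\) with \(\bs{B}\bs{v}=\lambda(\bs{v})\bs{v}\). This proves the first assertion: \(\bs{v}\) is a characteristic vector for \(\bs{B}\).

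Next, to obtain the representation I would resolve the identity using the complementary orthogonal projections of Equation~(\ref{proj}), \(\bs{1}=\bs{P}(\bs{v})+\bs{P}^{\perp}(\bs{v})\), and expand
\begin{equation*}
\bs{B}=(\bs{P}(\bs{v})+\bs{P}^{\perp}(\bs{v}))\,\bs{B}\,(\bs{P}(\bs{v})+\bs{P}^{\perp}(\bs{v}))=\bs{P}(\bs{v})\bs{B}\bs{P}(\bs{v})+\bs{P}(\bs{v})\bs{B}\bs{P}^{\perp}(\bs{v})+\bs{P}^{\perp}(\bs{v})\bs{B}\bs{P}(\bs{v})+\bs{P}^{\perp}(\bs{v})\bs{B}\bs{P}^{\perp}(\bs{v}).
\end{equation*}
The two cross terms vanish as a consequence of the invariance just established: since \(\bs{B}\) maps \(\{\bs{v}\}^{\perp}\) into itself, the range of \(\bs{B}\bs{P}^{\perp}(\bs{v})\) lies in \(\{\bs{v}\}^{\perp}\), so \(\bs{P}(\bs{v})\bs{B}\bs{P}^{\perp}(\bs{v})=\bs{0}\); similarly \(\bs{P}^{\perp}(\bs{v})\bs{B}\bs{P}(\bs{v})=\bs{0}\) because \(\bs{B}\) maps \(\{\bs{v}\}\) into itself. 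On the \(1\)-dimensional subspace \(\{\bs{v}\}\), the operator \(\bs{B}\) acts as multiplication by the scalar \(\lambda(\bs{v})\), and since \(\bs{P}(\bs{v})=\bs{u}(\bs{v})\bs{u}(\bs{v})^T\) one computes \(\bs{P}(\bs{v})\bs{B}\bs{P}(\bs{v})=\lambda(\bs{v})\bs{P}(\bs{v})=(\bs{u}(\bs{v})\bullet\bs{B}\bs{u}(\bs{v}))\bs{P}(\bs{v})\), which yields the claimed formula.

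There is really no hard obstacle here; the result is essentially an immediate corollary of the Commutation Theorem combined with the fact that the reflection \(\bs{Q}(\bs{v})\) has a specific and very simple spectral decomposition with a one-dimensional eigenspace \(\{\bs{v}\}\). The only point to be slightly careful about is the orthogonal-projection bookkeeping used to kill the off-diagonal (cross) terms, and the observation that \(\lambda(\bs{v})\) agrees with the Rayleigh quotient \(\bs{u}(\bs{v})\bullet\bs{B}\bs{u}(\bs{v})\). These are both routine once the invariance of \(\{\bs{v}\}\) and \(\{\bs{v}\}^{\perp}\) under \(\bs{B}\) has been identified.
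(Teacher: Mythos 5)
Your proof is correct and follows exactly the route the paper intends: the paper presents the corollary as an immediate consequence of the Commutation Theorem, and you have simply made explicit the two steps it leaves implicit, namely that the characteristic spaces of \(\bs{Q}(\bs{v})\) are \(\{\bs{v}\}\) and \(\{\bs{v}\}^{\perp}\) (so both are \(\bs{B}\)-invariant, killing both cross terms in the projection decomposition) and that \(\lambda(\bs{v})=\bs{u}(\bs{v})\bullet\bs{B}\bs{u}(\bs{v})\). No gaps; this is the paper's argument, fully written out.
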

As a consequence of the Reflection Corrolary~\ref{commute2} we see that if \bs{B} commutes with \(\bs{Q}(\bs{v})\), for \emph{every}  \(\bs{v}\not=\bs{0}\), then \emph{every}  \(\bs{v}\not=\bs{0}\)
is a characteristic vector for \bs{B}. The latter is equivalent to \bs{B} is a multiple of the identity.\footnote{For every  \(\bs{v}\not=\bs{0}\), there is a number \(\lambda(\bs{v})\) such that
\(\bs{B}\bs{v}=\lambda(\bs{v})\bs{v}\). It follows easily that \(\lambda(\bs{v})\) is constant for all  \(\bs{v}\not=\bs{0}\).} Since simple reflections are orthogonal, but not proper orthogonal, we
have
\begin{thm}\label{full}
A transformation commutes with all orthogonal transformations if, and only if, it is a multiple of the identity.\footnote{In fact, a transformation that commutes with all proper orthogonal
transformations and a single simple reflection is a multiple of the identity.}
\end{thm}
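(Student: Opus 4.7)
The ``if'' direction is immediate: a scalar multiple of the identity commutes with every linear transformation. For the ``only if'' direction, my plan is to exploit the already established Reflection Corollary~\ref{commute2}. Since \(\bs{B}\) is assumed to commute with every orthogonal transformation, it commutes in particular with the simple reflection \(\bs{Q}(\bs{v})\) for each \(\bs{v}\ne\bs{0}\). The Reflection Corollary then says that every such \(\bs{v}\) is an eigenvector of \(\bs{B}\), with some eigenvalue \(\lambda(\bs{v})\).

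The remaining task is the linear-algebra fact: if every nonzero vector in \(\mathbb{R}^d\) is an eigenvector of \(\bs{B}\), then \(\bs{B}=\lambda\bs{1}\) for a single scalar \(\lambda\). Write \(\bs{B}\bs{v}=\lambda(\bs{v})\bs{v}\). Homogeneity \(\lambda(\alpha\bs{v})=\lambda(\bs{v})\) for \(\alpha\ne 0\) is immediate from linearity of \(\bs{B}\). For any two linearly independent vectors \(\bs{v}_1,\bs{v}_2\), linearity gives
\[
\lambda(\bs{v}_1+\bs{v}_2)(\bs{v}_1+\bs{v}_2)=\bs{B}(\bs{v}_1+\bs{v}_2)=\lambda(\bs{v}_1)\bs{v}_1+\lambda(\bs{v}_2)\bs{v}_2,
\]
and independence forces \(\lambda(\bs{v}_1)=\lambda(\bs{v}_2)=\lambda(\bs{v}_1+\bs{v}_2)\). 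Combined with homogeneity (to handle the parallel case by inserting an auxiliary independent direction, which exists since \(d\geq 1\); the case \(d=1\) is trivial), this forces \(\lambda(\bs{v})\) to be a single constant \(\lambda\), and hence \(\bs{B}=\lambda\bs{1}\).

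For the strengthened statement in the footnote, the plan is to show that commuting with all proper orthogonal transformations together with one fixed simple reflection \(\bs{Q}_0=\bs{Q}(\bs{v}_0)\) already implies commuting with every simple reflection. Indeed, for \(d\geq 2\) rotations act transitively on directions, so any \(\bs{Q}(\bs{v})\) arises as a conjugate \(\bs{R}\bs{Q}_0\bs{R}^T\) for some rotation \(\bs{R}\). Then
\[
\bs{B}\bs{Q}(\bs{v})=\bs{B}\bs{R}\bs{Q}_0\bs{R}^T=\bs{R}\bs{B}\bs{Q}_0\bs{R}^T=\bs{R}\bs{Q}_0\bs{B}\bs{R}^T=\bs{R}\bs{Q}_0\bs{R}^T\bs{B}=\bs{Q}(\bs{v})\bs{B},
\]
reducing the footnote to the main statement. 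The case \(d=1\) is vacuous.

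The main obstacle, as I see it, is not any single step but the need to handle the parallel-vectors case in the eigenvalue-constancy argument, which in dimension \(d=1\) degenerates. For \(d=1\) every transformation is already a scalar, so the claim is trivial; for \(d\geq 2\), one simply chooses any direction transverse to \(\bs{v}\) as the auxiliary \(\bs{v}_2\), and the argument proceeds without subtlety. The proof is therefore straightforward once Corollary~\ref{commute2} is in hand.
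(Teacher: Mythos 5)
Your proposal is correct and follows essentially the same route as the paper: commuting with all orthogonal maps gives commuting with every simple reflection \(\bs{Q}(\bs{v})\), the Reflection Corollary~\ref{commute2} makes every nonzero vector an eigenvector, and the standard linear-independence argument (which the paper relegates to a footnote) forces the eigenvalue to be constant. Your conjugation argument for the footnoted strengthening, using transitivity of rotations on directions so that \(\bs{R}\bs{Q}(\bs{v}_0)\bs{R}^T=\bs{Q}(\bs{R}\bs{v}_0)\), is a sound way to justify a claim the paper states without proof.
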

\begin{com}
The following result is  useful when the group of orthogonal transformations is replace by the sub-group of proper orthogonal transformations (rotations).
\begin{thm}\label{proper}
A symmetric transformation  commutes with all proper orthogonal transformations if, and only if, it is a multiple of the identity.
\end{thm}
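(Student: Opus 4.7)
The plan is to prove the non-trivial direction: a symmetric transformation $\bs{A}$ that commutes with every proper orthogonal $\bs{Q}$ must be a scalar multiple of the identity. (The converse is immediate, since multiples of $\bs{1}$ commute with everything.) The strategy is to exploit the symmetry of $\bs{A}$ to produce a real eigenvalue and then show, using the transitivity of $SO(d)$ on the unit sphere, that every non-zero vector must belong to that single eigenspace.

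First I would note that since $\bs{A}$ is symmetric, the spectral theorem yields at least one real eigenvalue $\lambda$ with non-trivial eigenspace $V_{\lambda}:=\{\bs{v}\in\mathbb{R}^d:\bs{A}\bs{v}=\lambda\bs{v}\}$. For any proper orthogonal $\bs{Q}$ and any $\bs{v}\in V_{\lambda}$, commutation gives $\bs{A}(\bs{Q}\bs{v})=\bs{Q}\bs{A}\bs{v}=\lambda(\bs{Q}\bs{v})$, so $\bs{Q}\bs{v}\in V_{\lambda}$; equivalently, the Commutation Theorem~\ref{commute} applied to the commuting pair $(\bs{A},\bs{Q})$ tells us that $\bs{Q}$ leaves $V_{\lambda}$ invariant. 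Hence $V_{\lambda}$ is stable under every element of $SO(d)$.

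Next I would argue that, for $d\geq 2$, the only subspaces of $\mathbb{R}^d$ invariant under every rotation are $\{\bs{0}\}$ and $\mathbb{R}^d$. This rests on the transitivity of $SO(d)$ on the unit sphere $S^{d-1}$: if $V$ were a non-trivial proper invariant subspace, I would choose a unit vector $\bs{v}\in V$ and a unit vector $\bs{w}\notin V$, then pick a rotation $\bs{Q}$ with $\bs{Q}\bs{v}=\bs{w}$, producing the contradiction $\bs{w}\in\bs{Q}V=V$. Applied to $V_{\lambda}$, this forces $V_{\lambda}=\mathbb{R}^d$, whence $\bs{A}=\lambda\bs{1}$. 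The case $d=1$ is vacuous, since $SO(1)=\{\bs{1}\}$ and every one-dimensional linear transformation is already a scalar.

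The main obstacle is the loss of access to simple reflections: the proof of Theorem~\ref{full} relied on $\bs{Q}(\bs{v})$ to force every non-zero $\bs{v}$ to be an eigenvector of $\bs{B}$, but simple reflections lie outside $SO(d)$. The symmetry hypothesis on $\bs{A}$ is exactly what restores the argument, by guaranteeing a real eigenvalue to seed $V_{\lambda}$; once $V_{\lambda}$ is in hand, transitivity of $SO(d)$ on the unit sphere takes over the role formerly played by the reflection family. Verifying that transitivity is standard (immediate from the angular parametrization when $d=2$, and from the homogeneous-space identification $S^{d-1}\cong SO(d)/SO(d-1)$ when $d\geq 3$), so no serious technical difficulty arises beyond setting up the correct reduction.
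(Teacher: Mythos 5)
Your proof is correct and follows essentially the same route as the paper's: symmetry supplies a real eigenvalue, commutation with rotations shows the eigenvectors are carried around by $SO(d)$, and transitivity of $SO(d)$ on the sphere forces the eigenspace to be all of $\mathbb{R}^d$. The only cosmetic difference is that you phrase the last step as a classification of $SO(d)$-invariant subspaces, whereas the paper argues directly with the orbit $\bs{Q}\bs{v}$ of a single eigenvector; the substance is identical.
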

To see this, consider a linear transformation \bs{B} that commutes with all proper orthogonal transformations~\bs{Q}. Suppose further that \bs{B} has at least one characteristic vector; that is,
there is a vector \(\bs{v}\not=\bs{0}\) and a scalar \(\beta\) such that \(\bs{B}\bs{v}=\beta\bs{v}\).  Then, by the Commutation Theorem~\ref{commute}, \(\bs{Q}\bs{v}\) is a characteristic vector
for \bs{B} corresponding to \(\beta\) for every proper orthogonal~\bs{Q}. This implies that \emph{all} non-zero vectors are eigenvectors for \bs{B} corresponding to the eigenvalue \(\beta\) or,
succinctly, \(\bs{B}=\beta\bs{1}\). Finally, if \bs{B} is symmetric it surely has a characteristic vector.
\end{com}

\noindent\begin{proof}[Proof of Lemma~\ref{lem0}] First note that when the dimension \(d=1\), Parts~(\ref{lem01}) and~(\ref{lem03}) simply assert that the function is even, while Part~(\ref{lem02})
asserts that it is odd. In all three cases the result clearly holds. Henceforth we suppose that \(d\geq2\).

Part~(\ref{lem01}) follows at once from Cauchy's Representation Theorem~\ref{cauchy} with \(m=1\).

To verify Part~(\ref{lem02}), observe that \hyperlink{isotropic}{isotropy}  implies that \(\bs{f}(\bs{0})=\bs{0}\), so we may suppose that \(\bs{r}\not =\bs{0}\).  For an arbitrary vector~\bs{a},
define \(\phi(\bs{r},\bs{a})\) by
\begin{equation}\label{a2}
\phi(\bs{r},\bs{a}):=\bs{f}(\bs{r})\bullet\bs{a}.
\end{equation}
By its construction, the function \(\bs{a}\mapsto\phi(\bs{r},\bs{a})\) is \emph{linear} for each  \(\bs{r}\not =\bs{0}\). It is easy to show that \(\phi\) is an isotropic scalar-valued function of
the two vector variables \(\bs{r}\) and \(\bs{a}\).  Hence, by Cauchy's Representation Theorem~\ref{cauchy},
\begin{equation}\label{a3}
\bs{f}(\bs{r})\bullet\bs{a}=\varphi(|\bs{r}|^2,|\bs{a}|^2,\bs{r}\bullet\bs{a}),
\end{equation}
for some scalar-valued function \(\varphi\) of three scalar variables. With \(\bs{r}\not=\bs{0}\) fixed Consider the action of the linear function
\(\bs{a}\mapsto\varphi((|\bs{r}|^2,|\bs{a}|^2,\bs{r}\bullet\bs{a})\) on the subspace \(\{\bs{r}\}^{\perp}\). The function \(\bs{a}\mapsto\varphi((|\bs{r}|^2,|\bs{a}|^2,0)\) on \(\{\bs{r}\}^{\perp}\)
must also be linear, so
\begin{equation}
\varphi((|\bs{r}|^2,|-\bs{a}|^2,0)=\varphi((|\bs{r}|^2,|\bs{a}|^2,0)=-\varphi((|\bs{r}|^2,|\bs{a}|^2,0).
\end{equation}
Hence, \(\bs{a}\mapsto\varphi((|\bs{r}|^2,|\bs{a}|^2,0)=0\) on \(\{\bs{r}\}^{\perp}\) and, by Equation~(\ref{a3}), \(\bs{a}\mapsto\bs{f}(\bs{r})\bullet\bs{a}=0\) on \(\{\bs{r}\}^{\perp}\). In other
words, \(\bs{f}(\bs{r})\in\{\{\bs{r}\}^{\perp}\}^{\perp}\). In our finite dimensional (reflexive) space, \(\{\{\bs{r}\}^{\perp}\}^{\perp}=\{\bs{r}\}\), so \(\bs{f}(\bs{r})\in\{\bs{r}\}\).
Equivalently, there must be a scalar function \(\bs{r}\mapsto\hat{\varphi}(\bs{r})\) such that \(\bs{f}(\bs{r})=\hat{\varphi}(\bs{r})\bs{r}\). (For definiteness, define
\(\hat{\varphi}(\bs{0}):=0\).) It is easy to show that the \hyperlink{isotropic}{isotropy} of \bs{f} implies the \hyperlink{isotropic}{isotropy} of \(\hat{\varphi}\). By Part~(\ref{lem01}) of the
Lemma, there is a scalar function \(\hat{\hat{\varphi}}\) on \(\mathbb{R}^+\) such that \(\hat{\varphi}(\bs{r})=\hat{\hat{\varphi}}(|\bs{r}|^2)\) and, hence,
\(\bs{f}(\bs{r})=\hat{\hat{\varphi}}(|\bs{r}|^2)\bs{r}\). Part~(\ref{lem02}) is proved.

To verify Part~(\ref{lem03}) we start with two observations. From Equation~(\ref{frinv3}) we see that the matrix \(\bs{F}(\bs{0})\)  commutes with all orthogonal matrices. By Theorem~\ref{full},
\(\bs{F}(\bs{0})=\kappa\bs{1}\), for some scalar constant \(\kappa\). Thus Part~(\ref{lem03}) certainly holds if \(\bs{r}=\bs{0}\). Henceforth, we suppose \(\bs{r}\not =\bs{0}\). Note also that
Equation~(\ref{frinv3}) implies that \(\bs{F}(\bs{r})=\bs{F}(-\bs{r})\). We will verify the alternate statement of Part~(\ref{lem03}).

Set \(\bs{f}(\bs{r}):=\bs{F}(\bs{r})\bs{r}\). It is easy to see that \bs{f} satisfies Equation~(\ref{frinv2}) for all orthogonal transformations \bs{Q}, so it is \hyperlink{isotropic}{isotropic.} By
Part~(\ref{lem02}) of Lemma~(\ref{lem0}), we must have \(\bs{f}(\bs{r})=\lambda(|\bs{r}|^2)\bs{r}\), for some (real-valued) scalar function \(\lambda\). Thus,
\(\bs{F}(\bs{r})\bs{r}=\lambda(|\bs{r}|^2)\bs{r}\) and, hence, \(\left<\lambda(|\bs{r}|^2);\bs{r}\right>\) is  an eigenpair for \(\bs{F}(\bs{r})\). The same argument shows that
\(\left<\lambda(|\bs{r}|^2);\bs{r}\right>\) is also an eigenpair for the transpose \(\bs{F}^T(\bs{r})\).  Indeed, since the vector-valued function \(\bs{F}^T(\bs{r})\bs{r}\) is also
\hyperlink{isotropic}{isotropic,} there is a scalar function \(\mu\) such that \(\bs{F^T}(\bs{r})\bs{r}=\mu(|\bs{r}|^2)\bs{r}\), so \(\bs{F}(\bs{r})\) has the eigenpair
\(\left<\mu(|\bs{r}|^2);\bs{r}\right>\). But \(\bs{F}^T(\bs{r})\bs{r}\bullet\bs{r}=\bs{F}(\bs{r})\bs{r}\bullet\bs{r}\), so \(\lambda=\mu\). (In fact, all the principal invariants of
\(\bs{F}(\bs{r})\) are isotropic scalar functions of \bs{r}, so all the eigenvalues of  \(\bs{F}(\bs{r})\) and \(\bs{F}^T(\bs{r})\) are scalar-valued functions of \(|\bs{r}|^2\).)\footnote{The
arguments of this paragraph and the next do not require that \(\bs{F}(\bs{r})\) be symmetric.}  Surely the subspace \(\{\bs{r}\}\) is invariant under the action of \(\bs{F}(\bs{r})\). Let
\(a\in\{\bs{r}\}^{\perp}\). Then
\begin{equation}
\bs{F}(\bs{r})\bs{a}\bullet\bs{r}=\bs{a}\bullet\bs{F}^T(\bs{r})\bs{r}=\lambda(|\bs{r}|^2)\bs{a}\bullet\bs{r}=0.
\end{equation}
Hence \(\bs{F}(\bs{r})\bs{a}\in\{\bs{r}\}^{\perp}\); that is, the subspace \(\{\bs{r}\}^{\perp}\) is also invariant under the action of \(\bs{F}(\bs{r})\). Using the projections \(\bs{P}(\bs{r})\)
and \(\bs{P}^{\perp}(\bs{r})\) defined in Equation~(\ref{proj}), \(\bs{F}(\bs{r})\) must have the decomposition
\begin{equation}\label{a14}
\begin{split}
\bs{F}(\bs{r})  &=\bs{P}(\bs{r})\bs{F}(\bs{r})\bs{P}(\bs{r})+\bs{P}^{\perp}(\bs{r})\bs{F}(\bs{r})\bs{P}^{\perp}(\bs{r})\\
                &=\lambda(|\bs{r}|^2)\bs{P}(\bs{r})+\bs{P}^{\perp}(\bs{r})\bs{F}(\bs{r})\bs{P}^{\perp}(\bs{r}).
\end{split}
\end{equation}

Define  \(\bs{F}^{\perp}(\bs{r}):=\bs{P}^{\perp}(\bs{r})\bs{F}(\bs{r})\bs{P}^{\perp}(\bs{r})\). Clearly \(\bs{F}^{\perp}(\bs{r})\) is \hyperlink{isotropic}{isotropic.} Then, for every orthogonal
\(\hat{\bs{Q}}\) that leaves the subspace \(\{\bs{r}\}\) invariant, we have
\begin{equation}\label{a15}
\bs{F}^{\perp}(\hat{\bs{Q}}\bs{r})=\bs{F}^{\perp}(\pm\bs{r})=\bs{F}^{\perp}(\bs{r})=\hat{\bs{Q}}\bs{F}^{\perp}(\bs{r})\hat{\bs{Q}}^T.
\end{equation}
This follows since \(\hat{\bs{Q}}\bs{r}=\pm\bs{r}\) and \(\bs{F}^{\perp}(\bs{r})=\bs{F}^{\perp}(-\bs{r})\). Therefore, \(\bs{F}^{\perp}(\bs{r})\) commutes with every orthogonal
matrix~\(\hat{\bs{Q}}\) that leaves \(\{\bs{r}\}\) invariant. But the set of all orthogonal transformations that leave the subspace \(\{\bs{r}\}\) invariant is a subgroup equivalent to the group of
all orthogonal transformations on \(\{\bs{r}\}^{\perp}\). Again, using Theorem~\ref{full},  the action of \(\bs{F}^{\perp}(\bs{r})\) on \(\{\bs{r}\}^{\perp}\) is a multiple of the identity on
\(\{\bs{r}\}^{\perp}\), which multiplier depends only on \bs{r}. Thus,
\begin{equation}\label{a16}
\bs{F}^{\perp}(\bs{r})=\psi(\bs{r})\bs{P}^{\perp}(\bs{r}),
\end{equation}
for some scalar function \(\psi(\bs{r})\). Again, it is easy to show that \(\psi(\bs{r})\) is a scalar-valued isotropic function, so \(\psi(\bs{r})=\lambda_{\perp}(|\bs{r}|^2)\) for some
scalar-valued function \(\lambda_{\perp}\). In summary,
\begin{equation}\label{a17}
\bs{F}(\bs{r})=\lambda(|\bs{r}|^2)\bs{P}(\bs{r})+\lambda_{\perp}(|\bs{r}|^2)\bs{P}^{\perp}(\bs{r}),
\end{equation}
which is precisely Equation~(\ref{frinv7}) in the alternate statement of Part~(\ref{lem03}). The proof of Part~(\ref{lem03}) is complete upon identifying the eigenvalue \(\lambda(|\bs{r}|^2)\) with
\(\big(\gamma(|\bs{r}|^2)+|\bs{r}|^2\eta(|\bs{r}|^2)\big)\) and the eigenvalue \(\lambda_{\perp}(|\bs{r}|^2)\) with \(\gamma(|\bs{r}|^2)\) in Equation~(\ref{frinv6}).

Finally, for \(d\geq 2\), \(\bs{F}(\bs{r})\) is always symmetric with either one eigenvalue of multiplicity \(d\) or two distinct eigenvalues with multiplicities \(1\) and \(d-1\).
\end{proof}

\section{Stochastics}\label{app-stoch}

\subsection{Space-Time White Noise}\label{app-white}
Let \(\beta(ds)\) denote the standard scalar-valued Brownian motion and let \(A,\,A_1,\,A_2\) be Borel subsets of \([0,\infty)\) of finite Lebesgue measure. Here, \(|A|\) denotes the Lebesgue
measure of \(A\), etc. Then \(\int_A \beta(ds)\) is normally distributed with mean \(0\) and variance \(|A|\). Furthermore,  if \(A_1\cap A_2=\emptyset\) for two Borel subsets of \([0,\infty)\) of
finite Lebesgue measure, then \(\int_{A_1}\beta(ds)\) and \(\int_{A_2}\beta(ds)\) are independent.

Space-time white noise, \(w(d\bs{q},ds)  \), is a straightforward generalization. Let \(B,B_1,B_2\) be Borel subsets of \(\mathbb{R}^d\) of finite Lebesgue measure. 
Then \(\int_A\int_B w(d\bs{q},ds)\) is normally distributed with mean \(0\) and variance \(|A|\cdot|B|\). Furthermore, if \((B_1\times A_1)\cap(B_2\times A_2)=\emptyset\), then
\(\int_{A_1}\int_{B_1} w(d\bs{q},ds)\) and \(\int_{A_2}\int_{B_2} w(d\bs{q},ds)\) are independent.

\subsection{Marginals for Pair-Paths}\label{app-marg} Let \(\bs{r}^{\alpha}(\cdot,\bs{r}^{\alpha}_0)\) denote the random path in \(\mathbb{R}^d\) traced out by the position of the \(\alpha\)th large
particle with initial random position \(\bs{r}^{\alpha}_0\in\mathbb{R}^d\),
\(\alpha=1,2\). Then the marginal distribution of each component of the pair-path \(\begin{pmatrix}\bs{r}^1(\cdot,\bs{r}^1_0)\\
\bs{r}^2(\cdot,\bs{r}^2_0)\end{pmatrix}\)  in \(C([0,\infty),\mathbb{R}^{2d})\) is the distribution of a Wiener measure on \(C([0,\infty),\mathbb{R}^d)\) with initial support in the point
\(\bs{r}^{\alpha}_0\); that is, on the space of \(\mathbb{R}^d\)-valued continuous functions, \(\bs{f}(\cdot)\), on \([0,\infty)\) such that \(\bs{f}(0)=\bs{r}^{\alpha}_0,\,\forall\bs{f}\).

\subsection{Quadratic Variation}\label{app-qv} Let \(\bs{m}(\cdot)\) and \(\bs{n}(\cdot)\) be two vector-valued continuous martingales on the interval
\(I=[0,T],\,T>0\) or \(I=[0,\infty)\). The \textit{quadratic variation} of \(\bs{m}(\cdot)\), denoted by \(\langle\negthinspace\langle\bs{m}\rangle\negthinspace\rangle(\cdot)\), is the unique,
increasing, adapted, matrix-valued process on~\(I\) such that \(\langle\negthinspace\langle\bs{m}\rangle\negthinspace\rangle(0)=\bs{0}\) and
\(\bs{m}(\cdot)\bs{m}^T(\cdot)-\langle\negthinspace\langle\bs{m}\rangle\negthinspace\rangle(\cdot)\) is a continuous martingale.\footnote{Here, \textit{increasing} means that the quadratic form
\(t\mapsto\langle\negthinspace\langle\bs{m}\rangle\negthinspace\rangle(t)\bs{a}\bullet\bs{a}\) is increasing for every constant vector \bs{a}.} Similarly, the \textit{cross quadratic variation} of
\(\bs{m}(\cdot)\) and \(\bs{n}(\cdot)\), denoted by \(\langle\negthinspace\langle\bs{m},\bs{n}\rangle\negthinspace\rangle(\cdot)\), is the unique, adapted, matrix-valued process on \(I\) such that
\(\langle\negthinspace\langle\bs{m},\bs{n}\rangle\negthinspace\rangle(0)=\bs{0}\) and \(\bs{m}(\cdot)\bs{n}^T(\cdot)-\langle\negthinspace\langle\bs{m},\bs{n}\rangle\negthinspace\rangle(\cdot)\) is a
continuous martingale. Equivalently, we have
\(\langle\negthinspace\langle\bs{m},\bs{n}\rangle\negthinspace\rangle=\frac{1}{4}\big(\langle\negthinspace\langle\bs{m}+\bs{n}\rangle\negthinspace\rangle-
\langle\negthinspace\langle\bs{m}-\bs{n}\rangle\negthinspace\rangle \big)\).

Some of our conclusions depend on the following two well-known results. (See Ikeda and Watanabe~(1981)~\cite{IK} for details.)
\begin{thm}[L\'{e}vy-It\^{o} Theorem]\label{L-I}
If \(\langle\negthinspace\langle\bs{m}\rangle\negthinspace\rangle(t)=t\bs{B},\,t\in I\), where \bs{B} is a positive definite symmetric matrix, then \(\bs{m}(\cdot)\) is an adapted Brownian motion
(Wiener process) with independent increments. If \(\bs{B}=\bs{1}\), then \(\bs{m}(\cdot)\) is a \emph{standard} Brownian motion.
\end{thm}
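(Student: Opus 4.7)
The plan is to use the standard L\'{e}vy characterization argument based on It\^{o}'s formula applied to complex exponentials of $\bs{m}(\cdot)$. The key observation is that the hypothesis $\langle\negthinspace\langle\bs{m}\rangle\negthinspace\rangle(t)=t\bs{B}$ makes the quadratic variation deterministic, which in turn forces the characteristic function of increments to solve an ordinary (non-random) differential equation.

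First I would fix any $s<t$ in $I$ and any $\bs{\xi}\in\mathbb{R}^d$, and consider the complex-valued process $Z(t):=\exp\bigl(i\bs{\xi}\bullet\bs{m}(t)+\tfrac{1}{2}(t-s)\bs{\xi}^T\bs{B}\bs{\xi}\bigr)$. Applying the multidimensional It\^{o} formula to the real and imaginary parts of $\bs{x}\mapsto e^{i\bs{\xi}\bullet\bs{x}}$ (componentwise, since $\bs{m}$ is a continuous vector martingale with absolutely continuous, deterministic quadratic variation $d\langle\negthinspace\langle\bs{m}\rangle\negthinspace\rangle(u)=\bs{B}\,du$), the second-order term $-\tfrac{1}{2}\bs{\xi}^T\bs{B}\bs{\xi}\,du$ from the Laplacian exactly cancels the $\tfrac{1}{2}\bs{\xi}^T\bs{B}\bs{\xi}\,du$ coming from differentiating the deterministic exponential factor. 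Hence
\begin{equation*}
Z(t)=Z(s)+\int_s^t Z(u)\,i\bs{\xi}^T\,d\bs{m}(u),
\end{equation*}
so $u\mapsto Z(u)$ is a bounded local martingale on $[s,t]$ and, by boundedness, a genuine martingale.

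Next I would exploit this martingale property to pin down the conditional distribution. Taking conditional expectation given $\mathcal{F}_s$ and using $Z(s)$ is $\mathcal{F}_s$-measurable gives
\begin{equation*}
E\bigl[e^{i\bs{\xi}\bullet(\bs{m}(t)-\bs{m}(s))}\,\big|\,\mathcal{F}_s\bigr]=\exp\bigl(-\tfrac{1}{2}(t-s)\bs{\xi}^T\bs{B}\bs{\xi}\bigr).
\end{equation*}
The right-hand side is deterministic, which establishes two facts simultaneously: (i) the increment $\bs{m}(t)-\bs{m}(s)$ is independent of $\mathcal{F}_s$, since its characteristic function conditional on $\mathcal{F}_s$ coincides almost surely with its unconditional characteristic function, and (ii) $\bs{m}(t)-\bs{m}(s)$ is multivariate Gaussian with mean $\bs{0}$ and covariance $(t-s)\bs{B}$. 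Together with continuity of $\bs{m}(\cdot)$ and the initial value $\bs{m}(0)=\bs{0}$ (absorbed into the filtration), this is precisely the definition of an adapted Brownian motion (Wiener process) with covariance matrix $\bs{B}$. The special case $\bs{B}=\bs{1}$ then yields a standard Brownian motion.

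I do not anticipate a serious obstacle; the only subtlety is justifying that the It\^{o} stochastic integral $\int_s^t Z(u)\,i\bs{\xi}^T d\bs{m}(u)$ is a true martingale and not merely a local martingale. The hard part, if any, is this boundedness/integrability step: since $|Z(u)|=\exp\bigl(\tfrac{1}{2}(t-s)\bs{\xi}^T\bs{B}\bs{\xi}\bigr)$ is bounded on $[s,t]$ and the integrand against $d\bs{m}$ has bounded modulus, the bracket of the stochastic integral is integrable and the local martingale is in fact an $L^2$-martingale, after a standard localization-and-dominated-convergence argument. With that technicality handled, the characteristic-function identity above, which is the cleanest characterization of Brownian motion, does all the work.
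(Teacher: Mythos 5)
Your proof is correct and is the standard Lévy characterization argument: apply Itô's formula to \(e^{i\bs{\xi}\bullet\bs{m}(u)}\), use the deterministic bracket \(d\langle\negthinspace\langle\bs{m}\rangle\negthinspace\rangle(u)=\bs{B}\,du\) to cancel the second-order term, and read off the conditional characteristic function, which gives Gaussian increments independent of the past. The paper does not prove this theorem at all — it simply cites Ikeda and Watanabe — and the proof given there is essentially the same exponential-martingale argument you propose, so there is nothing to reconcile beyond the routine localization step you already flag.
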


\begin{thm}
If the martingales \(\bs{m}(\cdot)\) and \(\bs{n}(\cdot)\) are uncorrelated on \(I\), then \(\langle\negthinspace\langle\bs{m},\bs{n}\rangle\negthinspace\rangle(t)=\bs{0},\,t\in I\).
\end{thm}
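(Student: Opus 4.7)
The plan is to invoke the uniqueness characterization of the cross quadratic variation stated in the definition just above: \(\langle\negthinspace\langle\bs{m},\bs{n}\rangle\negthinspace\rangle(\cdot)\) is the \emph{unique} adapted process, vanishing at time zero, such that \(\bs{m}(\cdot)\bs{n}^T(\cdot)-\langle\negthinspace\langle\bs{m},\bs{n}\rangle\negthinspace\rangle(\cdot)\) is a continuous matrix-valued martingale. Consequently it suffices to exhibit any such process with the required properties and then read off the answer. I would prove that under the uncorrelatedness hypothesis the matrix-valued product \(\bs{m}(\cdot)\bs{n}^T(\cdot)\) is itself a continuous martingale starting at \(\bs{m}(0)\bs{n}^T(0)\); uniqueness then forces \(\langle\negthinspace\langle\bs{m},\bs{n}\rangle\negthinspace\rangle(t)\equiv\bs{0}\) on \(I\).

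The core computation is, for \(s\le t\) in \(I\), to condition on \(\mathcal{F}_s\) in the algebraic identity
\begin{equation*}
\bs{m}(t)\bs{n}^T(t)-\bs{m}(s)\bs{n}^T(s)=\big(\bs{m}(t)-\bs{m}(s)\big)\big(\bs{n}(t)-\bs{n}(s)\big)^T+\big(\bs{m}(t)-\bs{m}(s)\big)\bs{n}^T(s)+\bs{m}(s)\big(\bs{n}(t)-\bs{n}(s)\big)^T.
\end{equation*}
Since \(\bs{m}(s)\) and \(\bs{n}(s)\) are \(\mathcal{F}_s\)-measurable and each of \(\bs{m}\) and \(\bs{n}\) is a martingale, the last two terms vanish after conditioning. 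The first term is the conditional covariance of the two increments, and it is precisely here that the uncorrelatedness hypothesis is applied: under a sufficiently strong notion of uncorrelatedness the conditional expectation of that cross-product of increments is zero, which completes the argument. A cleaner alternative route is the polarization identity
\[
\langle\negthinspace\langle\bs{m},\bs{n}\rangle\negthinspace\rangle=\tfrac{1}{4}\Big(\langle\negthinspace\langle\bs{m}+\bs{n}\rangle\negthinspace\rangle-\langle\negthinspace\langle\bs{m}-\bs{n}\rangle\negthinspace\rangle\Big),
\]
combined with the observation that under uncorrelatedness the infinitesimal variance of \(\bs{m}\pm\bs{n}\) is, in either sign, the sum of the individual variances, so the two bracket processes coincide.

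The principal obstacle is pinning down exactly what is meant by ``uncorrelated.'' Pointwise uncorrelatedness \(E[\bs{m}(t)\bs{n}^T(t)]=E[\bs{m}(t)]\,E[\bs{n}^T(t)]\) is too weak; one truly needs conditional uncorrelatedness of increments given the past filtration, which is automatic, for example, when \(\bs{m}\) and \(\bs{n}\) are driven by independent sources of noise. It is worth noting that in the setting of our Equation~(\ref{ito2}) the two processes \(\bs{m}^{\alpha}\) and \(\bs{m}^{\beta}\) are driven by the \emph{same} space-time white noise \(w(d\bs{q},ds)\), so they fail to be uncorrelated and the theorem does not apply --- this non-vanishing of \(\langle\negthinspace\langle\bs{m}^{\alpha},\bs{m}^{\beta}\rangle\negthinspace\rangle\), expressed through the matrix \(\bs{D}^{\alpha\beta}\), is exactly the correlation responsible for the depletion effect analyzed in the body of the paper.
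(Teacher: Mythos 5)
Your argument is correct, but note that the paper does not actually prove this statement: it is listed alongside the L\'{e}vy--It\^{o} theorem as a ``well-known result'' with a pointer to Ikeda and Watanabe~\cite{IK}, and the paper only ever uses its contrapositive (the nonvanishing of \(\langle\negthinspace\langle\bs{m}^{\alpha},\bs{m}^{\beta}\rangle\negthinspace\rangle\) forces the two motions to be correlated). What you supply is the standard textbook proof: show \(\bs{m}(\cdot)\bs{n}^T(\cdot)\) is itself a continuous martingale, then invoke the uniqueness clause in the paper's definition of the cross bracket, so the zero process must be \(\langle\negthinspace\langle\bs{m},\bs{n}\rangle\negthinspace\rangle\). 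Your decomposition of \(\bs{m}(t)\bs{n}^T(t)-\bs{m}(s)\bs{n}^T(s)\) and the conditioning step are exactly right, and you correctly isolate the one real issue, namely that ``uncorrelated'' must mean conditional uncorrelatedness of increments given \(\mathcal{F}_s\) (as holds when the two martingales are driven by independent noises); mere unconditional uncorrelatedness of \(\bs{m}(t)\) and \(\bs{n}(t)\) would not suffice. Two small remarks: the polarization route you sketch is not genuinely independent, since verifying \(\langle\negthinspace\langle\bs{m}\pm\bs{n}\rangle\negthinspace\rangle=\langle\negthinspace\langle\bs{m}\rangle\negthinspace\rangle+\langle\negthinspace\langle\bs{n}\rangle\negthinspace\rangle\) requires the same conditional computation on increments; and strictly speaking the uniqueness of the bracket holds within the class of adapted processes of finite variation (or predictable processes), a qualification worth stating even though the paper's informal definition elides it. Your closing observation that \(\bs{m}^{\alpha}\) and \(\bs{m}^{\beta}\) in Equation~(\ref{ito2}) share the same white noise, so the theorem is used only in contrapositive form via \(\bs{D}^{\alpha\beta}\), matches the paper's use of the result.
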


\noindent\textbf{Computation of the (Tensor-Valued) Quadratic Variation of Equation~(\ref{qvqr1})} \newline Start with the the definition of \(\bs{m}^{\alpha}(t)\) given in Equation~(\ref{ito1b}).
To simplify the argument, assume that the underlying position process \(\bs{r}^{\alpha}(\cdot)\) is deterministic rather than stochastic. Next, partition the interval~\([0,t],\,t\in I,\,t>0,\) into
non-overlapping small intervals \(\{A_i:i=1,2,\ldots,M\}\) and partition \(\mathbb{R}^d\) into non-overlapping small \(d-\)parallelepipeds \(\{B_j:j=1,2,\ldots\infty\}\). Let \(s_i\) denote a time
in \(A_i\) and let \(\bs{q}^j\) denote a  location in \(B_j\). Then, \(\bs{m}^{\alpha}(t) \) is given approximately by
\begin{equation}\label{app-qv1}
\bs{m}^{\alpha}(t)\approx\sum_{i=1}^M\sum_{j=1}^{\infty}\bs{g}(\bs{r}^{\alpha}(s_i)-\bs{q}^j)\int_{A_i}\int_{B_j}w(d\bs{q},ds).
\end{equation}
Using the same partition, do the same, \textit{mutatis mutandis,} for \(\bs{m}^{\beta}(t)\). Under our simplifying assumption, the quadratic variation is the covariance
\(\langle\negthinspace\langle\bs{m}^{\alpha},\bs{m}^{\beta}\rangle\negthinspace\rangle(t)=E[(\bs{m}^{\alpha}(t)-E[\bs{m}^{\alpha}(t)])(\bs{m}^{\beta}(t)-E[\bs{m}^{\beta}(t)])^T]\). The properties of
the generalized space-time white noise guarantee that the processes \(\int_{A_i}\int_{B_j}w(d\bs{q},ds) \) each have mean zero and variance \(|A_i||B_j|\); moreover, these processes are independent;
that is,
\begin{equation}
\begin{split}
E\bigg[\int_{A_i}\int_{B_j}w(d\bs{q},ds)\bigg]&=0,\\
E\bigg[\bigg(\int_{A_i}\int_{B_j}w(d\bs{q},ds)\bigg)^2\bigg]&=|A_i||B_j|\\
E\bigg[\int_{A_i}\int_{B_j}w(d\bs{q},ds)\int_{A_k}\int_{B_l}w(d\bs{q},ds)\bigg]&=0\text{, if }i\not= k,j\not= l .
\end{split}
\end{equation}
Thus,
\begin{multline}\label{app-qv2}
\langle\negthinspace\langle\bs{m}^{\alpha},\bs{m}^{\beta}\rangle\negthinspace\rangle(t)=E[(\bs{m}^{\alpha}(t))(\bs{m}^{\beta}(t))^T]\\
\approx\sum_{i,k=1}^M\sum_{j,l=1}^{\infty}\bs{g}(\bs{r}^{\alpha}(s_i)-\bs{q}^j)\bs{g}^T(\bs{r}^{\beta}(s_k)-\bs{q}^l)E\bigg[\int_{A_i}\int_{B_j}w(d\bs{q},ds)\int_{A_k}\int_{B_l}w(d\bs{q},ds)\bigg]\\
=\sum_{i=1}^M\sum_{j=1}^{\infty}\bs{g}(\bs{r}^{\alpha}(s_i)-\bs{q}^j)\bs{g}^T(\bs{r}^{\beta}(s_i)-\bs{q}^j)E\bigg[\bigg(\int_{A_i}\int_{B_j}w(d\bs{q},ds)\bigg)^2\bigg]\\
=\sum_{i=1}^M\sum_{j=1}^{\infty}\bs{g}(\bs{r}^{\alpha}(s_i)-\bs{q}^j)\bs{g}^T(\bs{r}^{\beta}(s_i)-\bs{q}^j)|A_i||B_j|\\
=\sum_{i=1}^M\sum_{j=1}^{\infty}\bs{g}(\bs{r}^{\alpha}(s_i)-\bs{q}^j)\bs{g}^T(\bs{r}^{\beta}(s_i)-\bs{q}^j)\int_{A_i}\int_{B_j}d\bs{q}\,ds\\
\approx\int_0^t\int_{\mathbb{R}^d}\bs{g}(\bs{r}^{\alpha}(s)-\bs{q})\bs{g}^T(\bs{r}^{\beta}(s)-\bs{q})\,d\bs{q}\,ds.
\end{multline}
This essentially establishes the formula in Equation~(\ref{qvqr1}). The argument above should properly be modified to take into account the fact that the underlying position processes,
\(\bs{r}^{\alpha}(\cdot)=\bs{r}^{\alpha}(\cdot,\omega)\) and \(\bs{r}^{\beta}(\cdot)=\bs{r}^{\beta}(\cdot,\omega)\), are stochastic,  not deterministic. In this case, the expectations involved must
be conditioned on the underlying filtration.\footnote{Recall that the underlying probability space is \((\Omega,\mathcal{F},\mathcal{P})\). The underlying filtration is an increasing family of
\(\sigma-\)fields \(\{\mathcal{F}_t:t\in I\}\) in \(\mathcal{F}\) determined in a natural way through the histories of the processes \(\bs{r}^{\alpha}(\cdot)=\bs{r}^{\alpha}(\cdot,\omega)\) and
\(\bs{r}^{\beta}(\cdot)=\bs{r}^{\beta}(\cdot,\omega)\).} We omit the details here. 

Following the convention for pair-processes established in Section~\ref{subsec-not}, consider the  pair-martingale for \(\bs{m}(\cdot)\) and \(\bs{n}(\cdot)\):
\begin{equation}\label{joint1}
\hat{\bs{m}}=\begin{pmatrix}\bs{m}\\\bs{n}  \end{pmatrix}.
\end{equation}
 Then we have the block form for the martingale
\begin{equation}\label{joint2}
\hat{\bs{m}}\hat{\bs{m}}^T=
\begin{pmatrix}
    \bs{m}\bs{m}^T & \bs{m}\bs{n}^T\\
    \bs{n}\bs{m}^T & \bs{n}\bs{n}^T\\
    \end{pmatrix}.
\end{equation}
Hence, the quadratic variation of \(\hat{\bs{m}}\) must have the block form
\begin{equation}\label{joint3}
\langle\negthinspace\langle\hat{\bs{m}}\rangle\negthinspace\rangle=
\begin{pmatrix}
    \langle\negthinspace\langle\bs{m}\rangle\negthinspace\rangle & \langle\negthinspace\langle\bs{m},\bs{n}\rangle\negthinspace\rangle\\
    \langle\negthinspace\langle\bs{n},\bs{m}\rangle\negthinspace\rangle & \langle\negthinspace\langle\bs{n}\rangle\negthinspace\rangle
    \end{pmatrix}.
\end{equation}
From this and Theorem~\ref{L-I} we see that the joint motion is  Brownian if \( \langle\negthinspace\langle\hat{\bs{m}}\rangle\negthinspace\rangle(t)=t\bs{B} \), for some positive definite and
symmetric matrix \bs{B}, and standard Brownian if \( \langle\negthinspace\langle\hat{\bs{m}}\rangle\negthinspace\rangle(t)=t\bs{1} \). Of course,
\(\langle\negthinspace\langle\hat{\bs{m}}\rangle\negthinspace\rangle(t,\hat{\bs{r}}_0)=\int_0^t \hat{\bs{D}}(\hat{\bs{r}}(s,\hat{\bs{r}}_0))\,ds\), where \(\hat{\bs{D}}\) is given in
Equation~(\ref{qyqr2}) and \(\hat{\bs{r}}(\cdot,\hat{\bs{r}}_0)\) is the joint motion of the two large particles. The computations in Appendix~\ref{app-b} verify these observations explicitly for
the \hyperlink{maxker}{Maxwell kernel}.

\section{Outline of the Proof of Theorem~\ref{semigrp}}\label{app-semigrp}The proof of Theorem~\ref{semigrp} follows directly from It\^{o}'s famous formula. 
Here are the essential steps in the argument.

Recall that the pair-process \(\hat{\bs{r}}(\cdot,\hat{\bs{r}}_0)\) is the solution of the stochastic integral equation given in Equation~(\ref{ito4}). Rewrite this equation using the notation
established in Appendix~\ref{app-qv}, in particular Equations~(\ref{joint1}), (\ref{joint2}), and (\ref{joint3}). Thus,
\begin{equation}\label{sg1}
\hat{\bs{r}}(t)=\hat{\bs{r}}_0+\hat{\bs{m}}(t),
\end{equation}
where \(\hat{\bs{m}}(t)\) is the martingale defined by the second term on the right-hand side of Equation~(\ref{ito4}). Let \(\hat{\bs{y}}\mapsto\hat{f}(\hat{\bs{y}})\) be a twice continuously
differentiable scalar-valued function on \(\mathbb{R}^{2d}\).\footnote{These functions are dense in the class of bounded measurable functions on \(\mathbb{R}^{2d}\).}
Then, using Equation~(\ref{ito4}) and noting the differentiability of  \( \langle\negthinspace\langle\hat{\bs{m}}\rangle\negthinspace\rangle(\cdot) \), It\^{o}'s formula reduces to\footnote{Ikeda
and Watanabe~\cite{IK}(1981, Section~5, Theorem~5.1).  The matrix components \( \langle\negthinspace\langle\hat{\bs{m}}\rangle\negthinspace\rangle_{lm} \) are \(
\langle\negthinspace\langle\hat{m}_l,\hat{m}_m\rangle\negthinspace\rangle,\,l,m=1,2,\ldots,2d \).}
\begin{equation}\label{sg2}
\begin{split}
\hat{f}(\hat{\bs{r}}(t))-\hat{f}(\hat{\bs{r}}_0)&=\sum_{l=1}^{2d}\int_0^t\bigg(\frac{\partial}{\partial\hat{r}_l}\hat{f}\bigg)(\hat{\bs{r}}(s))d\hat{m}_l(s)\\
&+\frac{1}{2}\sum_{l,m=1}^{2d}\int_0^t
\bigg(\frac{\partial^2}{\partial \hat{r}_l\,\partial
\hat{r}_m}\hat{f}\bigg)(\hat{\bs{r}}(s))\frac{d}{ds}\langle\negthinspace\langle\hat{m}_l,\hat{m}_m\rangle\negthinspace\rangle(s)\,ds.\\
\end{split}
\end{equation}
The first term in Equation~(\ref{sg2}), containing the first-order derivatives of~\(\hat{f}\), is an It\^{o} integral that is a mean zero (square integrable) martingale. Suppose the initital
condition \(\hat{\bs{r}}_0\) is deterministic; call it \(\hat{\bs{x}}\), as in Section~\ref{sec-gen}. Compute the mathematical expectation of both sides of Equation~(\ref{sg2}) (conditioned on
\(\hat{\bs{r}}_0=\hat{\bs{x}}\)) and then differentiate with respect to \(t\) at \(t=0\). Noting that
\(\frac{d}{dt}\langle\negthinspace\langle\hat{\bs{m}}\rangle\negthinspace\rangle(0)=\hat{\bs{D}}(\hat{\bs{x}})\), we get\footnote{ \( \frac{d}{dt}\hat{f}(\hat{\bs{x}}) \) means
\(\frac{d}{dt}\hat{f}(\hat{\bs{r}}(t))\big|_{t=0}  \).}
\begin{equation}\label{sg3}
\frac{d}{dt}\hat{f}(\hat{\bs{x}})=
\frac{1}{2}\sum_{l,m=1}^{2d} \hat{D}_{lm}(\hat{\bs{x}})\bigg(\frac{\partial^2}{\partial \hat{x}_l\,\partial
\hat{x}_m}\hat{f}\bigg)(\hat{\bs{x}}).\\
\end{equation}
 The partial differential operator on the right-hand side of Equation~(\ref{sg3}) is the
generator of the pair-process semigroup \(\{\hat{T}_t:t\geq0 \}\)  given in Equation~(\ref{tp2}) by \((\hat{T}_t \hat{f})(\hat{\bs{x}})\equiv
E_{\hat{\bs{x}}}[\hat{f}(\hat{\bs{r}}(t,\hat{\bs{r}}_0))]\), where \(\hat{f}\) is now a bounded and measurable function on \(\mathbb{R}^{2d}\) and \(\hat{\bs{x}}\) is any vector in
\(\mathbb{R}^{2d}\).

\section{Diffusion Matrix for the Maxwell Kernel}\label{app-b}

Here are the essential details of the computation of the  covariance matrix function \(\bs{D}=\bs{D}^{12}\), defined in Equation~(\ref{qvqrd}) for \(\alpha,\beta=1,2\), when the underlying velocity
distribution of the small particles is a Maxwell distribution.

The \hyperlink{maxker}{Maxwell kernel} is  $\bs{g}_{\varepsilon}(\bs{r}) = \kappa_{\varepsilon,d}\bs{r}\exp\big(-\frac{|r|^2}{2\varepsilon}\big)$, as defined in Equation~(\ref{kernel1}). We need to
show that\footnote{Observe that the radial component of \(\bs{D}_{\varepsilon}^{12}\) is positive if \(|\bs{x}|<\sqrt{\varepsilon}\) and negative if \(|\bs{x}|>\sqrt{\varepsilon}\), where
\(\sqrt{\varepsilon}\) is the correlation length. The transverse component of \(\bs{D}_{\varepsilon}^{12}\) is always positive.}
\begin{equation}\label{eq5.82n}
\begin{split}
 \bs{D}_{\varepsilon}^{12} (\sqrt{2}\bs{x}  )&=\frac{1}{2}\kappa_{\varepsilon,d}^2(\pi \varepsilon)^{\frac{d}{2}}e^{-
    \frac{|\bs{x}|^2}{2\varepsilon}}\bigg(\varepsilon\bs{1}-|\bs{x}|^2\bs{P}(\bs{x})\bigg)\\
    &=\frac{\varepsilon}{2}\kappa_{\varepsilon,d}^2(\pi \varepsilon)^{\frac{d}{2}}e^{-
    \frac{|\bs{x}|^2}{2\varepsilon}}\bigg(\Big(1-\frac{|\bs{x}|^2}{\varepsilon}\Big)\bs{P}(\bs{x})+\bs{P}^{\perp}(\bs{x})\bigg).
\end{split}
\end{equation}
Equivalently, in component form,
\begin{equation}\label{eq5.82}
 \begin{array}{c}
  D_{\varepsilon, k l}^{12} (\sqrt{2}\bs{x}  ):=
 \frac{1}{2}\kappa_{\varepsilon,d}^2(\pi \varepsilon)^{\frac{d}{2}}e^{-
\frac{|\bs{x}|^2}{2\varepsilon}}\begin{cases}
  -x_k x_{l},&\text{if}\qquad k \neq l,\\ (\varepsilon- x^2_k),&\text{if}\qquad k = l.
\end{cases}\\
\end{array}
\end{equation}

Start with Equation~(\ref{qvqrd}) in component form:
\begin{equation}\label{app-b1-1}
D_{\varepsilon,k l}^{12} (\hat{\bs{r}}) =\int_{\mathbb{R}^d}g_{\varepsilon,k}(\bs{r}^1-\bs{q})g_{\varepsilon,l}(\bs{r}^2-\bs{q})\,d\bs{q}.
\end{equation}
The first observation is that for $\alpha = 1,2$
\begin{equation}\label{app-b1-2}
\begin{array}{c}
{g_{\varepsilon,l}(\bs{r}^{\alpha}-\bs{q}) = \kappa_{\varepsilon,d}(r_{l}^{\alpha}-q_{l})\exp(- \frac{|\bs{r}^{\alpha}-\bs{q}|^2}{2\varepsilon}) }\\ \\
{= \kappa_{\varepsilon,d}\varepsilon {\frac{\partial}{\partial q_{l}}}\exp(- \frac{|\bs{r}^{\alpha}-\bs{q}|^2}{2\varepsilon}) .}
\end{array}
\end{equation}
Hence, using integration-by-parts,
\begin{equation}\label{app-b1-3}
    \begin{split}
D_{\varepsilon,k l}^{12} (\hat{\bs{r}} )
 &= ( \kappa_{\varepsilon,d}\varepsilon)^2\int_{\mathbb{R}^d}\frac{\partial}{\partial q_{k}} \exp(-
\frac{|\bs{r}^1-\bs{q}|^2}{2\varepsilon})
\frac{\partial}{\partial q_{l}}\exp(- \frac{|\bs{r}^2-\bs{q}|^2}{2\varepsilon})\,d\bs{q}\\
&= -( \kappa_{\varepsilon,d}\varepsilon)^2 \int_{\mathbb{R}^d} \exp(- \frac{|\bs{r}^1-\bs{q}|^2}{2\varepsilon})
\frac{\partial^2}{\partial q_k\partial q_{l}}\exp(- \frac{|\bs{r}^2-\bs{q}|^2}{2\varepsilon})\,d\bs{q}  \\
 &= -( \kappa_{\varepsilon,d}\varepsilon)^2\int_{\mathbb{R}^d} \exp(- \frac{|\bs{r}^1-\bs{q}|^2}{2\varepsilon})\\
 &\times\begin{cases}
 \frac{(r_k^2 -q_k)}{\varepsilon}\frac{(r_{l}^2 -q_{l})}{\varepsilon}\exp(- \frac{|\bs{r}^2-\bs{q}|^2}{2\varepsilon})\,d\bs{q},
 \text{ if } k \neq l,\\ (-\frac{1}{\varepsilon} + \frac{(r_k^2 -q_k)^2}{\varepsilon^2})\exp(- \frac{|\bs{r}^2-\bs{q}|^2}{2\varepsilon})\,d\bs{q},
 \text{ if } k = l.
 \end{cases}
    \end{split}
\end{equation}
Next, using shift invariance, we obtain for $k \neq l$
\begin{equation}\label{app-b1-4}
 D_{\varepsilon,k l}^{12} (\hat{\bs{r}} ) = - \kappa_{\varepsilon,d}^2\int_{\mathbb{R}^d} \exp(- \frac{|\bs{r}^1- \bs{r}^2 -\bs{q}|^2 + |\bs{q}|^2}{2\varepsilon})q_kq_{l}\,d\bs{q}
\end{equation}
and for $k = l$

\begin{equation}\label{app-b1-5}
D_{\varepsilon,k l}^{12} (\hat{\bs{r}} )= -\kappa_{\varepsilon,d}^2\int_{\mathbb{R}^d} \exp(- \frac{|\bs{r}^1- \bs{r}^2 -\bs{q}|^2 + |\bs{q}|^2}{2\varepsilon})(-\varepsilon + q_k^2)\,d\bs{q} .
\end{equation}
Now if we use the identity
\begin{equation}\label{app-b1-6}
 \exp(-\frac{|\bs{r}-\bs{q}| + |\bs{q}|^2}{2\varepsilon})=\exp(- \frac{|\bs{r}|^2}{4\varepsilon})\exp(-\frac{|\bs{q}-\frac{1}{2}\bs{r}|^2}{\varepsilon}),
 \end{equation}
we obtain for $k \neq l$
\begin{equation}\label{app-b1-7}
{ D_{\varepsilon,k l}^{12} (\hat{\bs{r}} ) = - \kappa_{\varepsilon,d}^2 2\exp(- \frac{|\bs{r}^1- \bs{r}^2|^2}{4\varepsilon})\int_{\mathbb{R}^d} \exp(- \frac{|\frac{1}{2}(\bs{r}^1- \bs{r}^2)
-\bs{q}|^2 }{\varepsilon})q_kq_{l}\, d\bs{q}}
\end{equation}
 and for $k = l$
\begin{equation}\label{app-b1-8}
{ D_{\varepsilon,k l}^{12} (\hat{\bs{r}} )= -\kappa_{\varepsilon,d}^2\exp(- \frac{|\bs{r}^1- \bs{r}^2|^2}{4\varepsilon})\int_{\mathbb{R}^d} \exp(- \frac{|\frac{1}{2}(\bs{r}^1- \bs{r}^2) -\bs{q}|^2
}{\varepsilon})(-\varepsilon + q_k^2)\,d\bs{q}.}
\end{equation}
A standard integration then yields
\begin{equation}\label{app-b1-9}
D_{\varepsilon,k l}^{12} (\hat{\bs{r}} ):= \begin{cases}
 -\frac{1}{2}\kappa^2_{\varepsilon,d}(\pi \varepsilon)^{\frac{d}{2}}\exp(- \frac{|\bs{r}^1 - \bs{r}^2|^2}{4\varepsilon})\cdot\frac{(r_k^1 - r_k^2)(r_{l}^1 - r_{l}^2)}{2} ,&\text{if}\qquad
k \neq l,\\ -\frac{1}{2}\kappa^2_{\varepsilon,d}(\pi \varepsilon)^{\frac{d}{2}}\exp(- \frac{|\bs{r}^1 - \bs{r}^2|^2}{4\varepsilon})\cdot(-\varepsilon + \frac{(r_k^1 - r_k^2)^2}{2}),&\text{if}\qquad
k = l.
\end{cases}
\end{equation}
Finally, using $\bs{r}^1-\bs{r}^2=\sqrt{2}\bs{x}$, we obtain Equation~(\ref{eq5.82}) or, equivalently, Equation~(\ref{eq5.82n}).

\section{General Structure of the Diffusion Matrix}\label{app-c}

\subsection{Proof of Lemma~\ref{posdef}}

Fix \(\bs{x}\not =\bs{0}\), and let \bs{a} be an arbitrary constant vector. Compute the quadratic form \(\bs{a}^T(\bs{C}\pm\bs{D}(\bs{x}))\bs{a}\). Referring to Equation~(\ref{qvqrd}), this can be
cast in the form
\begin{equation}\label{gen30}
\bs{a}^T(\bs{C}\pm\bs{D}(\bs{x}))\bs{a}  =\int_{\mathbb{R}^{d}}\big(\psi^2(-\bs{q})\pm \psi(-\sqrt{2}\bs{x}-\bs{q})\psi(-\bs{q})\big)\,d\bs{q},
\end{equation}
where \(\psi\) here denotes the function \(\psi:=\bs{a}^T\bs{g}\). Observe that translation-invariance implies
\(\int_{\mathbb{R}^{d}}\psi^2(-\bs{q})\,d\bs{q}=\int_{\mathbb{R}^{d}}\psi^2(-\sqrt{2}\bs{x}-\bs{q})\,d\bs{q}\). Hence, Equation~(\ref{gen30}) is the same as
\begin{equation}\label{gen31}
    \begin{split}
    \bs{a}^T(\bs{C}\pm\bs{D}(\bs{x}))\bs{a}  &=\frac{1}{2}\bigg[\int_{\mathbb{R}^{d}}\big(\psi^2(-\bs{q})\pm 2 \psi(-\sqrt{2}\bs{x}-\bs{q})\psi(-\bs{q})\\
                                                        &+\psi^2(-\sqrt{2}\bs{x}-\bs{q})\big)\,d\bs{q}\bigg]\\
                                                        &=\frac{1}{2}\bigg[\int_{\mathbb{R}^{d}}\bigg(\psi(-\bs{q})\pm \psi(-\sqrt{2}\bs{x}-\bs{q})   \bigg)^2\,d\bs{q}  \bigg].
    \end{split}
\end{equation}
So \(\bs{a}^T(\bs{C}\pm\bs{D}(\bs{x}))\bs{a}\geq 0\), for all \(\bs{a}\not =\bs{0}\).

Next suppose that for some \(\hat{\bs{a}}\not =\bs{0}\), \(\hat{\bs{a}}^T(\bs{C}\pm\bs{D}(\bs{x}))\hat{\bs{a}}=0\). Then \(\psi(-\bs{q})\pm \psi(-\sqrt{2}\bs{x}-\bs{q})=0\) for almost
every~\(\bs{q}\). That is, the function \(\psi\) is periodic, with period either \(\sqrt{2}\bs{x}\) or \(2\sqrt{2}\bs{x}\). Since \(\psi=\hat{\bs{a}}^T\bs{g}\) must also be integrable over
\(\mathbb{R}^d\), it follows that the function \(\psi=\hat{\bs{a}}^T\bs{g}\) is zero almost everywhere in \(\mathbb{R}^d\). Thus, \(\hat{\bs{a}}\bullet\bs{g}(\bs{Q}\bs{q})=0\) for any orthogonal
transformation \bs{Q} and almost~every vector \bs{q}. Since \bs{g} is \hyperlink{isotropic}{isotropic,} we get \(\hat{\bs{a}}\bullet\bs{Q}\bs{g}(\bs{q})=0\) or, equivalently
\(\bs{Q}^T\hat{\bs{a}}\bullet\bs{g}(\bs{q})=0\). But \bs{Q} is any orthogonal transformation, so \(\bs{g}(\bs{q})=\bs{0}\) for almost~every \bs{q}, which contradicts our assumption.

\subsection{Detailed Structure of the Matrix Function \bs{D}}\label{detstruct} Here we consider the matrix function \(\bs{x}\mapsto\bs{D}(\bs{x}) \) defined in Equation~(\ref{d1}) of Section~\ref{pairdiffsec}. There, in Equations~(\ref{d3}-\ref{d9}) we saw that
\begin{equation}\label{app-c-1}
    \begin{split}
    \bs{D}(\sqrt{2}\bs{y})  &=\bs{A}(\bs{y})-|\bs{y}|^2 \beta(\bs{y})\bs{P}(\bs{y})\\
                            &=\alpha_{\perp}(|\bs{y}|^2)\bs{P}_{\perp}(\bs{y})+(\alpha(|\bs{y}|^2)-|\bs{y}|^2\beta(|\bs{y}|^2))\bs{P}(\bs{y}),
    \end{split}
\end{equation}
for scalar functions \(\xi\mapsto\alpha(\xi^2),\alpha_{\perp}(\xi^2),\beta(\xi^2)\). We provide explicit formulas for these functions in terms of the underlying \hyperlink{forcingfunction}{forcing
function} \(\phi\). The following definitions will help simplify the process.
\begin{define}\label{app-c-def-1}
For vectors \(\bs{y},\bs{q}\) in \(\mathbb{R}^d\), define scalars \(\zeta,\xi,\varphi\) through
\begin{align}\label{app-c-def-1a}
    &\xi:=|\bs{y}|\qquad\zeta:=|\bs{q}|\qquad \xi\zeta\cos\varphi:=\bs{y}\bullet\bs{q}
    \intertext{and set}
    &a:=\xi^2+\zeta^2\qquad b:=2\xi\zeta.
\end{align}
\end{define}
It follows from these definitions that
\begin{equation}\label{app-c-def-2}
    \begin{split}
   a\geq b\geq 0\quad\text{and}\quad a=b\quad\text{iff}\quad\xi=\zeta\\
   0\leq a-b=(\xi-\zeta)^2\qquad0\leq a+b=(\xi+\zeta)^2
   \end{split}
\end{equation}
The next definition will also prove useful.
\begin{define}\label{app-c-def-3}
For \(\xi,\zeta\geq 0\) and \(0\leq\varphi\leq\pi\),
\begin{align}\label{app-cdef-4}
   \Phi(\xi,\zeta,\varphi)  &:=\phi(a-b\cos\varphi)\phi(a+b\cos\varphi)\\
                            &=\phi(|\bs{q}-\bs{y}|^2)\phi(|\bs{q}+\bs{y}|^2)
\end{align}
\end{define}

\begin{lem}\label{lem-c-1}
The functions \(\xi\mapsto\alpha(\xi^2),\alpha_{\perp}(\xi^2),\beta(\xi^2)\) are given through the integrals:
\begin{align}\label{lem-c-1a}
\alpha(\xi^2)   &=2\omega_{d-2}\int_0^{\infty}\int_0^{\pi/2}\bigg(\zeta^2\Phi(\xi,\zeta,\varphi)\cos^2\varphi\bigg) \zeta^{d-1}\sin^{d-2}\varphi\,d\varphi\,d\zeta,\\
\alpha_{\perp}(\xi^2)   &=2\omega_{d-2}\int_0^{\infty}\int_0^{\pi/2}\bigg(\zeta^2\Phi(\xi,\zeta,\varphi)\frac{\sin^2\varphi}{d-1}\bigg) \zeta^{d-1}\sin^{d-2}\varphi\,d\varphi\,d\zeta,\\
\beta(\xi^2)   &=2\omega_{d-2}\int_0^{\infty}\int_0^{\pi/2}\Phi(\xi,\zeta,\varphi)\, \zeta^{d-1}\sin^{d-2}\varphi\,d\varphi\,d\zeta.
\end{align}
In paticular, we have
\begin{align}\label{lem-c-10}
\alpha(0)=\alpha_{\perp}(0)&=\frac{\omega_{d-1}}{d}\int_0^{\infty}\zeta^2\phi^2(\zeta^2)\zeta^{d-1}\,d\zeta\\
\beta(0)&=\omega_{d-1}\int_0^{\infty}\phi^2(\zeta^2)\zeta^{d-1}\,d\zeta
\end{align}
\end{lem}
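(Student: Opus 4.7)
The plan is to reduce each of the integrals defining $\bs{A}(\bs{y})$ and $b(\bs{y})$ to iterated integrals by introducing spherical coordinates in $\mathbb{R}^d$ with polar axis aligned along $\bs{y}$. Since everything we want to compute depends only on the scalars $\xi=|\bs{y}|$ and the angular position of $\bs{q}$ relative to $\bs{y}$, this alignment will collapse the computation onto the two scalar variables $(\zeta,\varphi)$ of Definition~\ref{app-c-def-1}. The kernel function $\Phi(\xi,\zeta,\varphi)$ is manifestly the integrand $\phi(|\bs{q}-\bs{y}|^2)\phi(|\bs{q}+\bs{y}|^2)$ expressed in these variables, because $|\bs{q}\pm\bs{y}|^2 = \xi^2+\zeta^2 \pm 2\xi\zeta\cos\varphi = a\pm b\cos\varphi$.

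First I will parameterize $\bs{q}\in\mathbb{R}^d$ as $\bs{q}=\zeta\cos\varphi\,\bs{u}(\bs{y})+\zeta\sin\varphi\,\bs{e}$, where $\zeta\geq 0$, $0\leq\varphi\leq\pi$, and $\bs{e}$ ranges over the unit sphere $\mathbb{S}^{d-2}\subset\{\bs{y}\}^{\perp}$. The volume element becomes $d\bs{q}=\zeta^{d-1}\sin^{d-2}\varphi\,d\zeta\,d\varphi\,dA(\bs{e})$, where $dA(\bs{e})$ has total mass $\omega_{d-2}$ on $\mathbb{S}^{d-2}$. For $b(\bs{y})$, the integrand is independent of $\bs{e}$, and the $\varphi\to\pi-\varphi$ symmetry of $\Phi$ (which depends on $\cos\varphi$ only through the product $\phi(a-b\cos\varphi)\phi(a+b\cos\varphi)$) folds the $\varphi$-integral from $[0,\pi]$ down to $[0,\pi/2]$ with a factor of $2$. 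This yields the claimed formula for $\beta(\xi^2)=b(\bs{y})$.

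Next I extract $\alpha$ and $\alpha_{\perp}$ from $\bs{A}(\bs{y})$ by contracting against the appropriate projections. From $\bs{A}(\bs{y})=\alpha(\xi^2)\bs{P}(\bs{y})+\alpha_{\perp}(\xi^2)\bs{P}^{\perp}(\bs{y})$, together with $\bs{P}(\bs{y})\bullet\bs{P}(\bs{y})=1$, $\bs{P}^{\perp}(\bs{y})\bullet\bs{P}^{\perp}(\bs{y})=d-1$, I get
\begin{equation*}
\alpha(\xi^2)=\bs{u}(\bs{y})^T\bs{A}(\bs{y})\bs{u}(\bs{y}),\qquad \alpha_{\perp}(\xi^2)=\frac{1}{d-1}\bs{A}(\bs{y})\bullet\bs{P}^{\perp}(\bs{y}).
\end{equation*}
In my coordinates, $\bs{q}\bullet\bs{u}(\bs{y})=\zeta\cos\varphi$, so $\bs{u}(\bs{y})^T\bs{q}\bs{q}^T\bs{u}(\bs{y})=\zeta^2\cos^2\varphi$, while $\bs{q}\bs{q}^T\bullet\bs{P}^{\perp}(\bs{y})=|\bs{q}|^2-(\bs{q}\bullet\bs{u}(\bs{y}))^2=\zeta^2\sin^2\varphi$. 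Substituting these into $\bs{A}(\bs{y})=\int \bs{q}\bs{q}^T\Phi\,d\bs{q}$, integrating out $\bs{e}$ against $dA(\bs{e})$ to produce the factor $\omega_{d-2}$, and exploiting the $\varphi\to\pi-\varphi$ symmetry as before, yields exactly the two stated formulas for $\alpha(\xi^2)$ and $\alpha_{\perp}(\xi^2)$.

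Finally, the values at $\xi=0$ follow by noting that $\Phi(0,\zeta,\varphi)=\phi^2(\zeta^2)$ is independent of $\varphi$. The $\varphi$ integrals then reduce to standard beta-function evaluations: I will use $2\omega_{d-2}\int_0^{\pi/2}\sin^{d-2}\varphi\,d\varphi=\omega_{d-1}$ together with the identity $\cos^2\varphi+\sin^2\varphi=1$ (which forces $\alpha(0)=\alpha_{\perp}(0)$ once one notes that the $\varphi$-averages of $\cos^2\varphi$ and $\sin^2\varphi/(d-1)$ against the weight $\sin^{d-2}\varphi$ agree; this is a short computation using $\int_0^{\pi/2}\cos^2\varphi\sin^{d-2}\varphi\,d\varphi=\tfrac{1}{d-1}\int_0^{\pi/2}\sin^{d}\varphi\,d\varphi$, obtained by one integration by parts). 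The resulting constants simplify, via the recursion $\omega_{d-1}=\omega_{d-2}\int_0^{\pi}\sin^{d-2}\varphi\,d\varphi$, to $\omega_{d-1}/d$ for $\alpha(0)=\alpha_{\perp}(0)$ and $\omega_{d-1}$ for $\beta(0)$. There is no real obstacle here beyond careful bookkeeping with the angular weights; the main conceptual step is the coordinate choice aligning one axis with $\bs{y}$, which is justified because all integrands depend on $\bs{q}$ only through $|\bs{q}|$ and $\bs{q}\bullet\bs{u}(\bs{y})$.
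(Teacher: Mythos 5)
Your proposal is correct and follows essentially the same route as the paper: the paper packages your axis-aligned spherical-coordinate change as its Lemma~\ref{fiform}, and then extracts \(\alpha\) and \(\alpha_{\perp}\) by exactly the same contractions \(\alpha=\bs{A}\bullet\bs{P}(\bs{y})\), \(\alpha_{\perp}=\frac{1}{d-1}\bs{A}\bullet\bs{P}^{\perp}(\bs{y})\), using \(|\bs{P}(\bs{y})\bs{q}|^2=\zeta^2\cos^2\varphi\), \(|\bs{P}^{\perp}(\bs{y})\bs{q}|^2=\zeta^2\sin^2\varphi\) and the same \(\varphi\mapsto\pi-\varphi\) folding. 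Your explicit beta-function evaluation of the \(\xi=0\) values is a correct supplement to what the paper leaves implicit.
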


\noindent\begin{proof}[Proof of Theorem~\ref{d10a6}] Recall that the eigenvalues \(\sigma_{\perp}(\xi^2),\sigma(\xi^2)\) are given through Equation~(\ref{d10a}) from which it follows that
\(\sigma_{\perp}(\xi^2)\leq\sigma(\xi^2)\) if, and only if, \(\alpha(\xi^2)-\alpha_{\perp}(\xi^2)\leq \xi^2\beta(\xi^2) \). Since \(\beta(\xi^2)\geq0\), \(\sigma_{\perp}(\xi^2)\leq\sigma(\xi^2)\)
will certainly hold whenever \(\alpha(\xi^2)-\alpha_{\perp}(\xi^2)\leq 0 \). Thus, \(\tilde{\bs{D}}\) is radially dominance whenever \(\alpha(\xi^2)-\alpha_{\perp}(\xi^2)\leq 0 \). The proof is
complete upon proving Lemma~\ref{app-c-lem2} below, which asserts that the logarithmic convexity of \(\phi\) implies \(\alpha(\xi^2)-\alpha_{\perp}(\xi^2)\leq 0 \).
\end{proof}
\begin{lem}\label{app-c-lem2}
If  \(\phi(\xi^2)\phi''(\xi^2)\leq (\phi'(\xi^2))^2,\,\forall\xi>0\), then \(\alpha(\xi^2)-\alpha_{\perp}(\xi^2)\leq 0,\,\forall\xi>0 \). The former condition is equivalent to
\((\ln\phi(z))''\leq0,\,\forall z\geq 0\); that is, \(\phi\) is logarithmically concave.\footnote{Recall that we have always assumed that \(\phi>0\) and \(\phi'\leq0\).}
\end{lem}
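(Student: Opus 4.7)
The plan is to exploit the integral representations given in Lemma~\ref{lem-c-1}. Subtracting gives
\[
\alpha(\xi^2)-\alpha_\perp(\xi^2)=2\omega_{d-2}\int_0^{\infty}\zeta^{d+1}\!\int_0^{\pi/2}\Phi(\xi,\zeta,\varphi)\,w(\varphi)\,d\varphi\,d\zeta,
\]
where the weight is
\[
w(\varphi):=\Big(\cos^2\varphi-\frac{\sin^2\varphi}{d-1}\Big)\sin^{d-2}\varphi.
\]
My strategy is to show that for each fixed \(\xi,\zeta\), the inner integral \(I(\xi,\zeta):=\int_0^{\pi/2}\Phi(\xi,\zeta,\varphi)w(\varphi)\,d\varphi\) is \(\leq 0\), from which the conclusion follows immediately.

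The first step is a clean calculation showing that the signed weight \(w\) has total integral zero on \([0,\pi/2]\). Using the reduction formula \(\int_0^{\pi/2}\sin^d\varphi\,d\varphi=\tfrac{d-1}{d}\int_0^{\pi/2}\sin^{d-2}\varphi\,d\varphi\), one finds both \(\int_0^{\pi/2}\cos^2\varphi\sin^{d-2}\varphi\,d\varphi\) and \(\tfrac{1}{d-1}\int_0^{\pi/2}\sin^d\varphi\,d\varphi\) equal \(\tfrac{1}{d}\int_0^{\pi/2}\sin^{d-2}\varphi\,d\varphi\), so \(\int_0^{\pi/2}w(\varphi)\,d\varphi=0\). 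Next observe that \(w\) changes sign exactly once: it is positive on \([0,\varphi_0)\) and negative on \((\varphi_0,\pi/2]\), where \(\tan\varphi_0=\sqrt{d-1}\).

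The key step, where the hypothesis is used, is the monotonicity of \(\Phi\) in \(\varphi\). Change variables to \(u=\cos\varphi\) and write \(F(u):=\phi(a-bu)\phi(a+bu)\) with \(a=\xi^2+\zeta^2\), \(b=2\xi\zeta\). Then
\[
\frac{d}{du}\ln F(u)=b\bigl[(\ln\phi)'(a+bu)-(\ln\phi)'(a-bu)\bigr].
\]
The logarithmic concavity assumption \((\ln\phi)''\leq 0\) makes \((\ln\phi)'\) non-increasing, so the bracket is \(\leq 0\) and therefore \(F\) is non-increasing in \(u\). Translated back to \(\varphi\), this means \(\varphi\mapsto\Phi(\xi,\zeta,\varphi)\) is non-decreasing on \([0,\pi/2]\).

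The final step combines the two facts. Since \(\Phi\leq\Phi(\xi,\zeta,\varphi_0)\) on \([0,\varphi_0]\) where \(w\geq 0\), and \(\Phi\geq\Phi(\xi,\zeta,\varphi_0)\) on \([\varphi_0,\pi/2]\) where \(w\leq 0\), we obtain in both regions
\[
\Phi(\xi,\zeta,\varphi)\,w(\varphi)\leq\Phi(\xi,\zeta,\varphi_0)\,w(\varphi),
\]
so \(I(\xi,\zeta)\leq\Phi(\xi,\zeta,\varphi_0)\int_0^{\pi/2}w(\varphi)\,d\varphi=0\). The outer integration in \(\zeta\) preserves the inequality, giving the desired \(\alpha(\xi^2)-\alpha_\perp(\xi^2)\leq 0\). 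The equivalence of the stated derivative inequality with \((\ln\phi)''\leq 0\) is just the chain rule applied to \(\ln\phi\). The main conceptual obstacle is recognizing that log-concavity of \(\phi\) is precisely what converts the two-point product \(\phi(a-bu)\phi(a+bu)\) into a monotone function of the spread parameter; once that monotonicity is in hand, the vanishing of \(\int w\,d\varphi\) together with the single sign change of \(w\) delivers the estimate with no further analytic work.
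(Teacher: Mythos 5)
Your proof is correct, and it shares the paper's central ingredient but closes the argument by a different mechanism. Like the paper, you reduce everything to the monotonicity of \(\varphi\mapsto\Phi(\xi,\zeta,\varphi)\) on \([0,\pi/2]\), obtained from log-concavity of \(\phi\) exactly as in the paper (you phrase it through \(\tfrac{d}{du}\ln F(u)=b[(\ln\phi)'(a+bu)-(\ln\phi)'(a-bu)]\le0\); the paper phrases it through a mean-value expression \(\partial_\varphi\Phi=-2\xi\zeta\sin\varphi\,\Phi\,(\ln\phi)''(z^*)\ge0\) — these are the same fact). Where you diverge is the passage from this monotonicity to the sign of \(\alpha-\alpha_\perp\): the paper integrates by parts in \(\varphi\), observing that the weight \(\bigl(\cos^2\varphi-\tfrac{1}{d-1}\sin^2\varphi\bigr)\sin^{d-2}\varphi\) is the derivative of \(\tfrac{1}{d-1}\cos\varphi\sin^{d-1}\varphi\), which is nonnegative and vanishes at both endpoints, so the boundary terms drop and \(\partial_\varphi\Phi\ge0\) finishes it in one line. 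You instead prove that the weight \(w\) has zero integral over \([0,\pi/2]\) and a single sign change at \(\tan\varphi_0=\sqrt{d-1}\), and then run a Chebyshev-type comparison \(\Phi(\varphi)w(\varphi)\le\Phi(\varphi_0)w(\varphi)\), which is valid as you argue on each of the two subintervals. Your route is slightly more elementary (no integration by parts, only a reduction formula and a pointwise comparison), at the cost of the separate verification that \(\int_0^{\pi/2}w\,d\varphi=0\); the paper's antiderivative observation delivers both the vanishing mean and the sign structure simultaneously. One small nicety: the equivalence of \(\phi\phi''\le(\phi')^2\) with \((\ln\phi)''\le0\) comes from the identity \((\ln\phi)''=\bigl(\phi\phi''-(\phi')^2\bigr)/\phi^2\) (using \(\phi>0\)), which is a direct computation rather than the chain rule as you describe it, but this is cosmetic.
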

The following computational result will prove useful in the proofs of Lemmas~\ref{lem-c-1} and~\ref{app-c-lem2}. We omit its elementary proof.
\begin{lem}\label{fiform}
Let \((\bs{y},\bs{q})\mapsto \gamma(|\bs{y}|^2,|\bs{q}|^2,\bs{y}\bullet\bs{q})\); so \(\gamma\) is an \hyperlink{isotropic}{isotropic} scalar-valued function of \bs{y} and \bs{q}. For fixed \bs{y},
suppose \(\bs{q}\mapsto |\gamma(|\bs{y}|^2,|\bs{q}|^2,\bs{y}\bullet\bs{q})|\) is integrable over \(\mathbb{R}^d\). Then, for \(d\geq 2\),\footnote{Recall that \(\omega_{d-2}\) denotes the surface
area of \(\mathbb{S}^{d-2}\).}
\begin{equation}\label{lem-c-1-1}
\int_{\mathbb{R}^d}\gamma(|\bs{y}|^2,|\bs{q}|^2,\bs{y}\bullet\bs{q})\,d\bs{q}=\omega_{d-2}\int_0^{\infty}\int_0^{\pi}\gamma(\xi^2,\zeta^2,\xi\zeta\cos\varphi)\zeta^{d-1}\sin^{d-2}\varphi\,d\varphi\,d\zeta.
\end{equation}
\end{lem}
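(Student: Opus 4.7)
The plan is to evaluate the integral on the left by introducing generalized spherical coordinates on $\mathbb{R}^d$ that are adapted to the fixed vector $\bs{y}$. I would first dispense with the degenerate case $\bs{y}=\bs{0}$: the integrand is then radial in $\bs{q}$, and the identity reduces to the standard $d$-dimensional polar formula together with the surface-area recursion $\omega_{d-1}=\omega_{d-2}\int_0^{\pi}\sin^{d-2}\varphi\,d\varphi$. Henceforth assume $\bs{y}\neq\bs{0}$ and set $\xi:=|\bs{y}|$.

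Next I would write $\bs{q}=\zeta\bs{v}$ with $\zeta=|\bs{q}|\in[0,\infty)$ and $\bs{v}\in\mathbb{S}^{d-1}$, giving the standard decomposition of Lebesgue measure $d\bs{q}=\zeta^{d-1}\,d\zeta\,dS_{d-1}(\bs{v})$, where $dS_{d-1}$ is the surface measure on $\mathbb{S}^{d-1}$. I would then parametrize $\mathbb{S}^{d-1}$ using the unit vector $\bs{u}(\bs{y}):=\bs{y}/\xi$ as the ``north pole'': for each $\bs{v}\in\mathbb{S}^{d-1}$ let $\varphi\in[0,\pi]$ be the angle determined by $\cos\varphi=\bs{u}(\bs{y})\bullet\bs{v}$, and decompose
\begin{equation*}
\bs{v}=\cos\varphi\,\bs{u}(\bs{y})+\sin\varphi\,\bs{w},\qquad \bs{w}\in\mathbb{S}^{d-2}\subset\{\bs{y}\}^{\perp}.
\end{equation*}
With this parametrization, the standard identity for the surface measure on $\mathbb{S}^{d-1}$ reads $dS_{d-1}(\bs{v})=\sin^{d-2}\varphi\,d\varphi\,dS_{d-2}(\bs{w})$, where $dS_{d-2}$ is the surface measure on the equatorial $(d-2)$-sphere in $\{\bs{y}\}^{\perp}$.

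The crucial observation is that in these coordinates $\bs{y}\bullet\bs{q}=\xi\zeta\cos\varphi$, so the integrand $\gamma(\xi^2,\zeta^2,\xi\zeta\cos\varphi)$ depends on $\bs{v}$ only through~$\varphi$ --- it is entirely independent of the lateral direction $\bs{w}\in\mathbb{S}^{d-2}$. Consequently, Fubini's theorem (whose applicability is guaranteed by the integrability hypothesis on $|\gamma|$) permits the $\bs{w}$-integration to be carried out first, contributing the constant factor $\int_{\mathbb{S}^{d-2}}dS_{d-2}(\bs{w})=\omega_{d-2}$. Collecting the remaining factors $\zeta^{d-1}\sin^{d-2}\varphi\,d\varphi\,d\zeta$ yields exactly the iterated integral on the right-hand side.

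The only real obstacle is justifying the surface-measure decomposition on $\mathbb{S}^{d-1}$ in the chosen coordinates. This is a standard fact about spherical coordinates in $\mathbb{R}^d$, typically proved either by computing the Jacobian of the coordinate map $(\zeta,\varphi,\bs{w})\mapsto\zeta(\cos\varphi\,\bs{u}(\bs{y})+\sin\varphi\,\bs{w})$, or inductively via the suspension structure $\mathbb{S}^{d-1}=\Sigma\mathbb{S}^{d-2}$. Either derivation is routine, but the algebraic bookkeeping with the $\sin^{d-2}\varphi$ factor is where small errors are easiest to make; everything else in the argument is direct substitution and an application of Fubini's theorem.
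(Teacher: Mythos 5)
Your argument is correct: decomposing $d\bs{q}=\zeta^{d-1}\,d\zeta\,dS_{d-1}(\bs{v})$, choosing spherical coordinates on $\mathbb{S}^{d-1}$ with pole $\bs{u}(\bs{y})$ so that $dS_{d-1}=\sin^{d-2}\varphi\,d\varphi\,dS_{d-2}$, and integrating out the equatorial sphere (the integrand being independent of $\bs{w}$) gives exactly the factor $\omega_{d-2}$ and the stated iterated integral, with Fubini justified by the integrability hypothesis. The paper explicitly omits the proof of this lemma as elementary, and yours is precisely the routine spherical-coordinates computation it has in mind, including the correct handling of the degenerate case $\bs{y}=\bs{0}$ and of $d=2$ (where $\omega_{0}=2$ and $\sin^{d-2}\varphi\equiv1$).
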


\noindent\begin{proof}[Proof of Lemma~\ref{lem-c-1}]
 From Equations~(\ref{d5}) and~(\ref{d9}), together with Equation~(\ref{lem-c-1-1}) of Lemma~\ref{fiform}, we see at once that
\begin{equation}\label{lem-c-1-2}
    \begin{split}
\beta(\xi^2)    &=\int_{\mathbb{R}^d}\phi(|\bs{y}|^2+|\bs{q}|^2+2|\bs{y}||\bs{q}|\cos\varphi)\phi(|\bs{y}|^2+|\bs{q}|^2-2|\bs{y}||\bs{q}|\cos\varphi)\,d\bs{q}\\
                &=\omega_{d-2}\int_0^{\infty}\int_0^{\pi}\Phi(\xi,\zeta,\varphi)\, \zeta^{d-1}\sin^{d-2}\varphi\,d\varphi\,d\zeta\\
                &=2\omega_{d-2}\int_0^{\infty}\int_0^{\pi/2}\Phi(\xi,\zeta,\varphi)\, \zeta^{d-1}\sin^{d-2}\varphi\,d\varphi\,d\zeta.
    \end{split}
\end{equation}
where
\begin{equation}\label{lem-c-1-3}
\Phi(\xi,\zeta,\varphi)=\phi(\xi^2+\zeta^2+2\xi\zeta\cos\varphi)\phi(\xi^2+\zeta^2-2\xi\zeta\cos\varphi).
\end{equation}
The last step follows from the fact that \(\Phi(\xi,\zeta,\varphi)\) is even about \(\varphi=\pi/2\).

Next consider the decomposition
\begin{equation}\label{lem-c-1-4}
    \begin{split}
    \bs{A}(\bs{y})  &=\bs{P}^{\perp}(\bs{y})\bs{A}(\bs{y})\bs{P}^{\perp}(\bs{y})+\bs{P}(\bs{y})\bs{A}(\bs{y})\bs{P}(\bs{y})\\
                    &=\alpha_{\perp}(|\bs{y}|^2)\bs{P}^{\perp}(\bs{y})+\alpha(|\bs{y}|^2)\bs{P}(\bs{y}).
    \end{split}
\end{equation}
This implies immediately that
\begin{equation}\label{lem-c-1-5}
    \begin{split}
    \alpha(|\bs{y}|^2)          &=trace\big(\bs{P}(\bs{y})\bs{A}(\bs{y})\bs{P}(\bs{y})\big)=\bs{A}(\bs{y})\bullet\bs{P}(\bs{y})\\
    \alpha_{\perp}(|\bs{y}|^2)  &=\frac{1}{d-1}trace\big(\bs{P}^{\perp}(\bs{y})\bs{A}(\bs{y})\bs{P}^{\perp}(\bs{y})\big)=\frac{1}{d-1}\bs{A}(\bs{y})\bullet\bs{P}^{\perp}(\bs{y}).
    \end{split}
\end{equation}
Now use the formulas
\begin{equation}\label{lem-c-1-6}
    \begin{split}
    |\bs{P}(\bs{y})\bs{q}|^2 &=trace\big(\bs{P}(\bs{y})\bs{q}\bs{q}^T \bs{P}(\bs{y})\big) =\bigg(\frac{\bs{q}\bullet\bs{y}}{|\bs{y}|}\bigg)^2=\zeta^2\cos^2\varphi\\
    |\bs{P}^{\perp}(\bs{y})\bs{q}|^2 &=trace\big(\bs{P}^{\perp}(\bs{y})\bs{q}\bs{q}^T \bs{P}^{\perp}(\bs{y})\big) =|\bs{q}|^2-\bigg(\frac{\bs{q}\bullet\bs{y}}{|\bs{y}|}\bigg)^2=\zeta^2\sin^2\varphi,
    \end{split}
\end{equation}
together with the definition of \(\bs{A}(\bs{y})\) in Equation~(\ref{d4}) and Equation~(\ref{lem-c-1-1}) of Lemma~\ref{fiform}, to conclude the first two formulas in Lemma~\ref{lem-c-1}
\end{proof}

It is easy to see that the inequality \(\sigma_{\perp}(\xi^2)<\sigma(\xi^2)\), for the eigenvalues of the \hyperlink{pairmatrix}{diffusion matrix} \(\tilde{\bs{D}}\), is equivalent to the inequality
\(\alpha(\xi^2)-\alpha_{\perp}(\xi^2)< \xi^2\beta(\xi^2)\). Hence, the result of Lemma~\ref{lem-c-1} indirectly supplies a criterion that \(\sigma_{\perp}(\xi^2)<\sigma(\xi^2)\) be satisfied. As
this is too unwieldy,  we provide a simple sufficient condition that \(\alpha(\xi^2)-\alpha_{\perp}(\xi^2)\leq 0\). Since \(\beta(\xi^2)>0\), this will in turn guarantee that
\(\sigma_{\perp}(\xi^2)<\sigma(\xi^2)\) holds; that is, the \hyperlink{pairmatrix}{diffusion matrix} is radially dominant, which is the content of Theorem~\ref{d10a6}.

\begin{proof}[Proof of Lemma~\ref{app-c-lem2}]
Use the first two formulas of Lemma~\ref{lem-c-1} and an integration by parts to get
\begin{multline}\label{lem-c-3-1}
\alpha(\xi^2)-\alpha_{\perp}(\xi^2)\\
=2\omega_{d-2}\int_0^{\pi/2}\Bigg(\int_0^{\infty}\zeta^{d+1}\Phi(\xi,\zeta,\varphi)\,d\zeta\Bigg)\bigg(\cos^2\varphi -\frac{1}{d-1}\sin^2\varphi\bigg)\sin^{d-2}\varphi\,d\varphi\\
=-2\omega_{d-2}\int_0^{\pi/2}\Bigg(\int_0^{\infty}\zeta^{d+1}\frac{\partial}{\partial\varphi}\Phi(\xi,\zeta,\varphi)\,d\zeta\Bigg)\bigg(\frac{1}{d-1}\cos\varphi\sin^{d-1}\varphi\bigg)\,d\varphi.
\end{multline}
Thus, \(\alpha(\xi^2)-\alpha_{\perp}(\xi^2)\leq 0\) whenever \(\frac{\partial}{\partial\varphi}\Phi(\xi,\zeta,\varphi)\geq 0\), For \(\xi,\zeta\geq 0\) and \(0\leq\varphi\leq\pi/2\). But
\begin{equation}\label{lem-c-3-2}
\frac{\partial}{\partial\varphi}\Phi(\xi,\zeta,\varphi)=-2\xi\zeta\sin\varphi\Phi(\xi,\zeta,\varphi)(\ln\phi(z))''\big|_{z=\xi^2+\zeta^2+\eta 2\xi\zeta\cos\varphi}
\end{equation}
for some \(\eta\in[-1,1]\). Hence, \((\ln\phi)''\leq 0\) guarantees that  \(\frac{\partial}{\partial\varphi}\Phi(\xi,\zeta,\varphi)\geq 0\)
\end{proof}\\

\section{van~Kampen's $1-$Dimensional Flux Rate}\label{app-b2}
van~Kampen~\cite{KA}(Ch.VIII.1) considers one-dimensional diffusions with drift \(a(\xi)\) and quadratic variation \(b(\xi)\geq0\), where both coefficients are bounded with continuous derivatives.
The Fokker-Planck equation for such diffusions is
\begin{equation}\label{b2-1}
\frac{\partial}{\partial t}X(\xi,t)=-\frac{\partial}{\partial \xi}\big(a(\xi)X(\xi,t)  \big)+\frac{1}{2}\frac{\partial^2}{\partial \xi^2}\big( b(\xi)X(\xi,t) \big),
\end{equation}
where \(X(\xi,t)\) is the probability density at the point \(\xi\) and time \(t\). van~Kampen then defines the \textit{probability flux} of the one-dimensional diffusion to be
\begin{equation}\label{b2-2}
\dot{J}(\xi,t):=a(\xi)X(\xi,t)-\frac{1}{2}\frac{\partial}{\partial \xi}\big( b(\xi)X(\xi,t)\big).
\end{equation}
If the probability density, \(X(\xi,t)\), is locally spatially constant, this reduces to
\begin{equation}\label{b2-3}
\dot{J}(\xi,t):=\bigg(a(\xi)-\frac{1}{2}\frac{\partial}{\partial\xi}\big( b(\xi)\big)\bigg)X(\xi,t).
\end{equation}
From the one-dimensional generator \(\dot{A}\) in Equation~(\ref{mag5a}), we see that \(a(\xi)=\frac{1}{2}\frac{d-1}{\xi}\sigma_{\perp}(\xi^2)\) and \(b(\xi)=\sigma(\xi^2)\).\footnote{The
differential operator in the right-hand-side of the Fokker-Planck Equation~(\ref{b2-1}) is the adjoint of the generator \(\dot{A}\).} If we apply Equation~(\ref{b2-3}) above we recover the
\(1-\)dimensional van-Kampen flux rate of Equation~(\ref{dir0}).



\newpage
\bibliographystyle{plain}
\bibliography{depletion}

\end{document}